\newtheorem{theorem}{Theorem}
\newtheorem{lemma}{Lemma}
\theoremstyle{definition}
\newtheorem{definition}{Definition}
\newtheorem{example}{Example}
\newtheorem{proofthmInternal}{\protect\customproofname}
\newcommand{\customproofname}{\normalfont\itshape Proof of Theorem}
\newcounter{customproofcounter} 
\newcommand{\spacednabla}[2]{\nabla_{\!#1}\,{#2}}
\newcommand{\spacednablatranspose}[2]{\nabla_{\!#1}^T\,{#2}}
\newcommand{\grad}{\nabla}
\newcommand{\R}{\mathbb{R}}
\newcommand{\set}[1]{\left\{#1\right\}}
\newcommand{\indicatorofset}[1]{\mathbbm{1}_{#1}}
\newcommand{\indicator}[1]{\indicatorofset{\set{#1}}}
\newcommand{\secref}[1]{\hyperref[#1]{Section~\ref*{#1}}}
\newcommand{\figref}[1]{\hyperref[#1]{Fig~\ref*{#1}}}
\newcommand{\supsecref}[1]{\hyperref[#1]{Section~\ref*{#1} of the Supplementary Material}}
\newcommand{\smallmat}[1]{\left(\begin{smallmatrix}#1\end{smallmatrix}\right)}
\def\keywordfont{\fontsize{9}{10}\selectfont}
\newenvironment{keywords}
  {\par\addvspace{8pt}%
   \keywordfont\noindent{\itshape Key words}:\ \ignorespaces}
  {\par\addvspace{23pt}}
\newcommand{\figuresize}[1]{\def\figscale{#1}}
\newcommand{\figurebox}[3][]{%
  \includegraphics[scale=\figscale]{#1}%
}
\newcommand{\customlabel}[2]{%
    \edef\@currentlabel{#2}%
    \label{#1}%
}
\newcommand{\customref}[1]{\ref{#1}}
\newcounter{lemmasubitem}[lemma] 
\renewcommand{\thelemmasubitem}{\alph{lemmasubitem}}
\newcommand{\lemmasubitem}[1]{%
  \refstepcounter{lemmasubitem}(\thelemmasubitem)\label{#1}%
}
\newcommand{\lemmaitemref}[2]{\hyperref[#2]{Lemma~\ref{#1}(\ref{#2})}}
\renewcommand\@fnsymbol[1]{%
  \ifcase#1
  \or {\Large\textasteriskcentered} 
  \or \dagger
  \or \ddagger
  \else \@ctrerr
  \fi
}
\def\and{and\ }
\newcommand{\beginAppendixCounter}{
    \setcounter{lemma}{0}
    \renewcommand{\thelemma}{A\arabic{lemma}}
    \setcounter{equation}{0}
    \renewcommand{\theequation}{A\arabic{equation}}
    \setcounter{figure}{0}
}
\newcommand{\beginSupplementCounter}{
    \setcounter{equation}{0}
    \renewcommand{\theequation}{S\arabic{equation}}
    \setcounter{figure}{0}
    \renewcommand{\thefigure}{S\arabic{figure}}
    \setcounter{table}{0}
    \renewcommand{\thetable}{S\arabic{table}}
    \setcounter{section}{0}
    \renewcommand{\thesection}{\arabic{section}}
    \setcounter{lemma}{0}
    \renewcommand{\thelemma}{S\arabic{lemma}}
}
\begin{document}

\title{What is the price of approximation? The saddlepoint approximation to a likelihood function}

\author{Godrick Oketch} 
\author{Rachel M.\ Fewster} 
\author{Jesse Goodman\thanks{Corresponding author, \href{mailto:jesse.goodman@auckland.ac.nz}{jesse.goodman@auckland.ac.nz} \\ Funding: The authors gratefully acknowledge funding support from the Royal Society of New Zealand Marsden fund.}}
\affil{Department of Statistics, University of Auckland,\\
Private Bag 92019, Auckland, New Zealand \\
}
\date{}

\maketitle

\thispagestyle{empty}
\begin{abstract}
The saddlepoint approximation to the likelihood, and its corresponding maximum likelihood estimate (MLE), offer an alternative estimation method when the true likelihood is intractable or computationally expensive. 
However, maximizing this approximated likelihood instead of the true likelihood inevitably comes at a price: a discrepancy between the MLE derived from the saddlepoint approximation and the true MLE. 
In previous studies, the size of this discrepancy has been investigated via simulation, or by engaging with the true likelihood despite its computational difficulties. 
Here, we introduce an explicit and computable approximation formula for the discrepancy, through which the adequacy of the saddlepoint-based MLE can be directly assessed.
We present examples demonstrating the accuracy of this formula in specific cases where the true likelihood can be calculated. 
Additionally, we present asymptotic results that capture the behaviour of the discrepancy in a suitable limiting framework.
\end{abstract}

\begin{keywords}
Approximation error;
cumulant generating function;
maximum likelihood estimation;
saddlepoint approximation.
\end{keywords}

\section{Introduction}

The saddlepoint approximation continues to attract research interest due to its impressive accuracy in approximating probability density and mass functions.
Introduced to statistical applications by \cite{Daniels1954}, 
the approximation is typically used in its first-order formulation, while the availability of higher-order formulations facilitates the study of approximation accuracy.
Some significant contributions to this development include works by \cite{LugannaniAndRice1980}, \cite{mccullagh1987tensor}, and \cite{Barndorff1989}.
Additionally, \cite{BarndorffNielsenCox1979} and \cite{Reid1988} offer detailed reviews of the historical evolution and broad applications of the saddlepoint approximation.

Here, we focus on the use of the saddlepoint approximation as a method for estimation.
When the true likelihood function is unknown or computationally intractable, several authors have used likelihoods based on the saddlepoint approximation to obtain maximum likelihood estimates (MLEs). We will refer to such estimates as \emph{saddlepoint MLEs}.
\cite{Pedeli2015} used saddlepoint MLEs to overcome inefficiencies in existing methods for fitting integer-valued autoregressive models.
\cite{Zhang2019, Zhang2021, Zhang2022} used saddlepoint MLEs for models in ecology and epidemiology based on linear transformations of multinomial random variables, and noted considerable gains in computation speed relative to alternative Bayesian approaches.
\cite{Davison2020} explored the use of 
classical and conditionally adjusted saddlepoint approximations for modelling linear birth-death branching processes, again highlighting the computational efficiencies achieved for large population sizes.

When using saddlepoint approximations for estimation, a key question is whether the resulting saddlepoint MLEs  are acceptably close to the \emph{true MLEs}—those that would be obtained from the true, but potentially uncomputable, likelihood.
We refer to the difference between the two sets of MLEs as the \emph{discrepancy}. 
In cases where the true likelihood can be computed, the discrepancy can be obtained by direct comparison.
For instance, \cite{Zhang2019} showed that the saddlepoint MLEs were virtually indistinguishable from the true MLEs for a misidentification model known as  $M_{t, \alpha}$, a comparison made possible due to \citeauthor{Vale2014}'s \citeyearpar{Vale2014} exact likelihood formulation for this model.
\citet{Davison2020} also presented a direct comparison between the true and saddlepoint MLEs for their branching process model.

However, the saddlepoint approximation is often employed precisely because the true MLEs are uncomputable. 
Here, we develop a method to quantify the discrepancy between true and saddlepoint MLEs, even when the true likelihood is unavailable.
We use the limiting framework of \citet{Goodman2022}, in which the underlying random variable is assumed to be the sum of $n$ unobservable i.i.d. terms.
By analyzing the derivatives of the approximation's logarithm, \citet{Goodman2022} demonstrated that under certain identifiability conditions,  the saddlepoint MLEs achieve an asymptotic accuracy of order $n^{-2}$, $n^{-3/2}$, or $n^{-1}$ relative to the true MLEs. 
Building on this foundation, we develop an explicit formula to quantify the discrepancy between the true and saddlepoint MLEs. 
This formula, which is itself an approximation, is derived by analyzing the gradient of the correction term for the first-order saddlepoint approximation to the log-likelihood function. We also use the same limiting framework to analyze the asymptotic behaviour of the discrepancy and its approximation.

\section{Setup, notation, and definitions}
\subsection{Notation for generating functions and derivatives}
\label{secDisc:Notations}

Let $X$ be an observable $d$-dimensional random variable.
We write $X = X_{\theta}$ to indicate the dependence of $X$ on a parameter vector $\theta$.  
We treat both $X_{\theta}$ and $\theta$ as column vectors (i.e., $X_{\theta} \in \R^{d \times 1}$ and $\theta \in \R^{p \times 1}$).
Define the moment generating function (MGF) and the cumulant generating function (CGF) of $X$ by
\begin{equation*}
    \label{eqdisc:MGF_CGF_Defn}
    M_X(t;\theta) = E(e^{tX_{\theta}}),\quad K_X(t;\theta) = \log M_X(t;\theta),  
\end{equation*}
respectively. 
Here, $t \in \R^{1 \times d}$ is a row vector of the same dimension as $X$, and we 
assume that $M_X(t;\theta)$ is defined in some open neighbourhood of $\R^{1\times d}\times\R^{p\times 1}$ containing some point $(t,\theta)=(0,\theta_0)$, and is twice continuously differentiable in both $t$ and $\theta$ in this neighbourhood.

To specify derivatives, we adopt the following conventions.
Gradient vectors with respect $\theta$ are denoted $\nabla_{\theta}$, and are considered to be vectors with the shape of $\theta^T$.
Thus, for a scalar-valued function $f(\theta,t)$, we interpret $\nabla_\theta f(\theta,t) \in \mathbb{R}^{1\times p}$ as a row vector and its transpose
$\nabla_\theta^T f(\theta,t) \in \mathbb{R}^{p\times 1}$ as a column vector. 
Likewise, for vector-valued $g(\theta, t) \in \mathbb{R}^{d\times 1}$, we define
$\spacednabla{\theta}{g(\theta, t)} \in \mathbb{R}^{d\times p}$.
For scalar-valued $f(\theta,t)$, we write the Hessian with respect to $\theta$ as $\nabla_{\theta}^T \nabla_{\theta} f(\theta,t)$, with $i,j$ entry 
$\partial^2 f/ \partial \theta_i \partial \theta_j$.

The gradient and Hessian of the CGF with respect to $t$, denoted $K_X'(t;\theta) \in \mathbb{R}^{d \times 1}$ and $K_X''(t;\theta) \in \mathbb{R}^{d \times d}$ respectively, are directly related to the first and second moments of the random vector $X_{\theta}$ as follows:
\begin{equation*}
    E(X_\theta) = K_X'(0;\theta),\quad
    \mathrm{cov}(X_\theta,X_\theta) = K_X''(0;\theta).
\end{equation*}
We assume throughout that $\mathrm{cov}(X_\theta,X_\theta)$ is non-singular for all $\theta$. 
Since $K_X''(0;\theta)$ is then positive definite for each $\theta$, and $K_X''(t;\theta)$ depends continuously on $(t,\theta)$,
it follows by restricting to a suitably small open set around $(0,\theta)$ that
\begin{equation}
\label{eqDisc:NonSingularCovAndMean}
K_X''(t;\theta)~\text{is positive definite for every $(t,\theta)$ in the specified open neighbourhood.}
\end{equation}

\subsection{The saddlepoint approximation and discrepancy definitions}

Suppose $x$ is an observation of the random variable $X \in \mathbb{R}^{d \times 1}$, and denote the true likelihood function by
$
    L(\theta; x).
$
The saddlepoint approximation to this likelihood is defined
as
\begin{equation}\label{eqDisc:SaddlepointLikelihood_Defn}
    \hat{L}(\theta; x) = 
    \exp(K_X(\hat{t};\theta)-\hat{t}x)
    \{\det(2\pi K_X''(\hat{t};\theta) )\}^{-1/2},
\end{equation}
where $\hat{t} = \hat{t}(\theta;x) \in \mathbb{R}^{1 \times d}$ is the solution to the \textit{saddlepoint equation}: 
\begin{equation}\label{eqDisc:saddlepoint_eqn}
    K_X'(\hat{t};\theta) = x.
\end{equation}
Since the CGF is twice continuously differentiable and convex, the saddlepoint equation has a solution  
$\hat{t}(\theta;x)$ for values of $(\theta, x)$ within an open neighbourhood about $(\theta_0, E(X_{\theta_0}))$.
We refer to \eqref{eqDisc:SaddlepointLikelihood_Defn} as the \textit{saddlepoint likelihood}.

We can equivalently write \eqref{eqDisc:SaddlepointLikelihood_Defn} as the saddlepoint log-likelihood function
\begin{align}
\log \hat{L}(\theta; x) &= K_{X}(\hat{t};\theta) - \hat{t}x - \tfrac{d}{2} \log(2\pi) - \tfrac{1}{2} \log \det { K_{X}''(\hat{t};\theta) }, \label{eqDisc:SaddlepointLogLikelihood} 
\end{align}
where $\hat{t} = \hat{t}(\theta;x)$  is the solution to the saddlepoint equation.
Our aim is to approximate the difference in $\theta$ resulting from maximizing $\log \hat{L}(\theta;x)$ instead of $\log L(\theta;x)$.
\begin{definition}[True discrepancy]
\label{labelDisc:discrepancy}
Let 
$\hat{\theta}_{\mathrm{true}} = \underset{\theta}{\mathrm{argmax}}\, \log L(\theta;x)$
 denote the true MLE, if it exists, and $\hat{\theta}_{\mathrm{spa}} = \underset{\theta}{\mathrm{argmax}}\, \log \hat{L}(\theta;x)$ denote the saddlepoint MLE, if it exists.
The true discrepancy is defined as
\begin{equation*}\label{eqDisc:TrueDiscrepancy}
\delta = \hat{\theta}_{\mathrm{true}} - \hat{\theta}_{\mathrm{spa}}.
\end{equation*}
\end{definition}
The discrepancy expresses the approximation error in the scale of model parameter values.
To examine the discrepancy, we consider the second-order saddlepoint log-likelihood
\begin{equation}\label{eqDisc:SecondOrderSaddlepointLogLikelihood}
\log \hat{L}(\theta;x)_2 = \log \hat{L}(\theta;x) + T_X(\theta,x),
\end{equation}
where $T_X(\theta,x)$ is the correction term to the first-order saddlepoint approximation to the log-likelihood given by
\begin{equation}\label{eqDisc:funcT}
    \begin{aligned}
        &T_X(\theta,x) = \frac{1}{8} \sum_{j_1,j_2,j_3,j_4}^d \frac{\partial^4 K_X(\hat{t};\theta)}{\partial t_{j_1} \partial t_{j_2} \partial t_{j_3} \partial t_{j_4}}
Q_{j_1 j_2} Q_{j_3 j_4}  \\ 
&\qquad\qquad - \frac{1}{8} \sum_{j_1,\dots,j_6}^d 
\frac{\partial^3 K_X(\hat{t};\theta)}{\partial t_{j_1} \partial t_{j_2} \partial t_{j_3} }
\frac{\partial^3 K_X(\hat{t};\theta)}{\partial t_{j_4} \partial t_{j_5} \partial t_{j_6} }
Q_{j_1 j_2}Q_{j_3 j_4}Q_{j_5 j_6}\\ 
&\qquad\qquad  - \frac{1}{12} \sum_{j_1,\dots,j_6}^d  
\frac{\partial^3 K_X(\hat{t};\theta)}{\partial t_{j_1} \partial t_{j_2} \partial t_{j_3} }
\frac{\partial^3 K_X(\hat{t};\theta)}{\partial t_{j_4} \partial t_{j_5} \partial t_{j_6} }
Q_{j_1 j_4}Q_{j_2 j_5}Q_{j_3 j_6}.
    \end{aligned}
\end{equation}
Here,
$Q_{ij}$ denotes the $i,j$ entry of the inverse of the matrix $K''_X (\hat{t};\theta)$, i.e., $Q_{ij} = (K''_X (\hat{t};\theta)^{-1})_{ij}$ and $\hat{t} = \hat{t}(\theta;x)$ is the solution of the saddlepoint equation. 
Where not otherwise indicated, each index $j_{*}$ ranges over $1,\dots,d$.
For the derivation of \eqref{eqDisc:SecondOrderSaddlepointLogLikelihood} and \eqref{eqDisc:funcT}, and further extensions to higher-order terms, see \citet[Chapter~6]{mccullagh1987tensor}.

\begin{definition}[Approximated discrepancy]
\label{labelDisc:approxdiscrepancy} 
The approximated discrepancy is
\begin{equation}\label{eqDisc:DiscrepancyFormula}
\hat{\delta} = 
-\{
\nabla_{\theta}^T \nabla_{\theta} \log \hat{L}(\hat{\theta}_{\mathrm{spa}};x)
\}^{-1}
\spacednablatranspose{\theta}{T_X(\hat{\theta}_{\mathrm{spa}}, x)}
.
\end{equation}
\end{definition} 
The expression for $\hat{\delta}$ provides an approximation to the unknown true discrepancy $\hat{\theta}_{\mathrm{true}} - \hat{\theta}_{\mathrm{spa}}$, which is computable solely in terms of the saddlepoint MLE $\hat{\theta}_{\mathrm{spa}}$.
The intuition behind the formula \eqref{eqDisc:DiscrepancyFormula}  is to approximate the true discrepancy by a single Newton step for adjusting $\hat{\theta}_{\mathrm{spa}}$ closer to the root of $\nabla_\theta \log  \hat{L}(\theta; x)_2=0$.

In \secref{secDisc:Examples}, we give numerical examples to illustrate the discrepancy and its approximation. 
We present results for both the true and approximated discrepancy for simple models, along with simulation results for more authentic models drawn from the literature. 
In \secref{secDisc:ComputationDetails}, we discuss computational aspects of evaluating the approximated discrepancy in \eqref{eqDisc:DiscrepancyFormula}.
In \secref{secDisc:TheoriticalAnalysis}, we analyze the true and approximated discrepancies in a suitable limiting regime, and prove three results about their limiting behaviour under different model assumptions.
General derivations important to these proofs and additional computational details are found in the supplementary material.

\section{Examples of the discrepancy and its approximation}
\label{secDisc:Examples}

\subsection{Computation of the saddlepoint MLE and discrepancy}

Here we 
illustrate the application and performance of the discrepancy approximation formula \eqref{eqDisc:DiscrepancyFormula}. 
All saddlepoint-based computations below are performed using a framework introduced by \cite{oketch2025}, 
and implemented in the R package {\tt saddlepoint} \citep{saddlepoint_pkg}.
In each of the examples, a CGF object corresponding to the 
modelled random variable is
created using the operations described in \cite{oketch2025}.
CGF objects are defined as the model CGF itself, $K_X$, along with its first and second derivatives $K_X'$ and $K_X''$. 
The saddlepoint MLE is calculated using the function {\tt find.saddlepoint.MLE(...)}, and with the {\tt discrepancy} argument set to {\tt TRUE}, the approximated discrepancy is automatically computed.

\subsection{Gamma distribution with a fixed rate} \label{secDisc:GammaFixedRate}

Consider a single observation $x$ of a gamma-distributed random variable $X$ with an unknown shape parameter $\alpha$ and a fixed rate parameter $r = 1$.
For this simple model, the true log-likelihood function is available explicitly,
\[
\log L(\alpha; x) = (\alpha-1) \log x - x - \log \Gamma(\alpha).
\]
The CGF of $X$ is 
\[ 
    K_X(t;\alpha) = -\alpha \log (1 - t)~\text{for}~\alpha > 0~\text{and}~t < 1,
\]
and the saddlepoint log-likelihood function \eqref{eqDisc:SaddlepointLogLikelihood} can be found explicitly as
\[
    \log \hat{L}(\alpha; x) = 
      (1 - \alpha) \log(1-\hat{t}) - \hat{t}x - 
      \frac{1}{2} \log \left( 2 \pi \alpha \right),
\]
where $\hat{t} = \hat{t}(\alpha;x) = 1 - \alpha/x$.

We use the \verb|saddlepoint| R package to compute the saddlepoint MLE $\hat{\alpha}_{\mathrm{spa}}$, by maximizing $\log \hat{L}(\alpha; x)$, and the approximated discrepancy $\hat{\delta}$. 
Additionally, we obtain the true MLE, $\hat{\alpha}_{\mathrm{true}}$, by maximizing $\log L(\alpha; x)$, and compute the true discrepancy: $\delta = \hat{\alpha}_{\mathrm{true}} - \hat{\alpha}_{\mathrm{spa}}$. 
Figure~\ref{figDisc:GammaFixedRateComparison} demonstrates a close correspondence between the saddlepoint and true MLEs from $100$ simulated observations using a true shape parameter of $\alpha = 5$, displaying an evident visual alignment of estimates in the left panel.
The saddlepoint MLE is consistently slightly 
smaller than the true MLE, i.e., $\delta > 0$.
The right panel illustrates that the approximated discrepancy $\hat{\delta}$ 
accurately captures the sign and magnitude of the true discrepancy $\delta$.  
Among the 100 simulated observations, the largest discrepancy occurs 
at the observed value $x=1.58177$, where the true MLE is
$2.0564$ and the saddlepoint MLE is $2.0248$. Even in this instance, 
the approximate discrepancy $\hat{\delta}=0.033$ closely aligns with the true discrepancy $\delta=0.032$.

\begin{figure}[htbp]
\centering
\figuresize{.42}
\figurebox[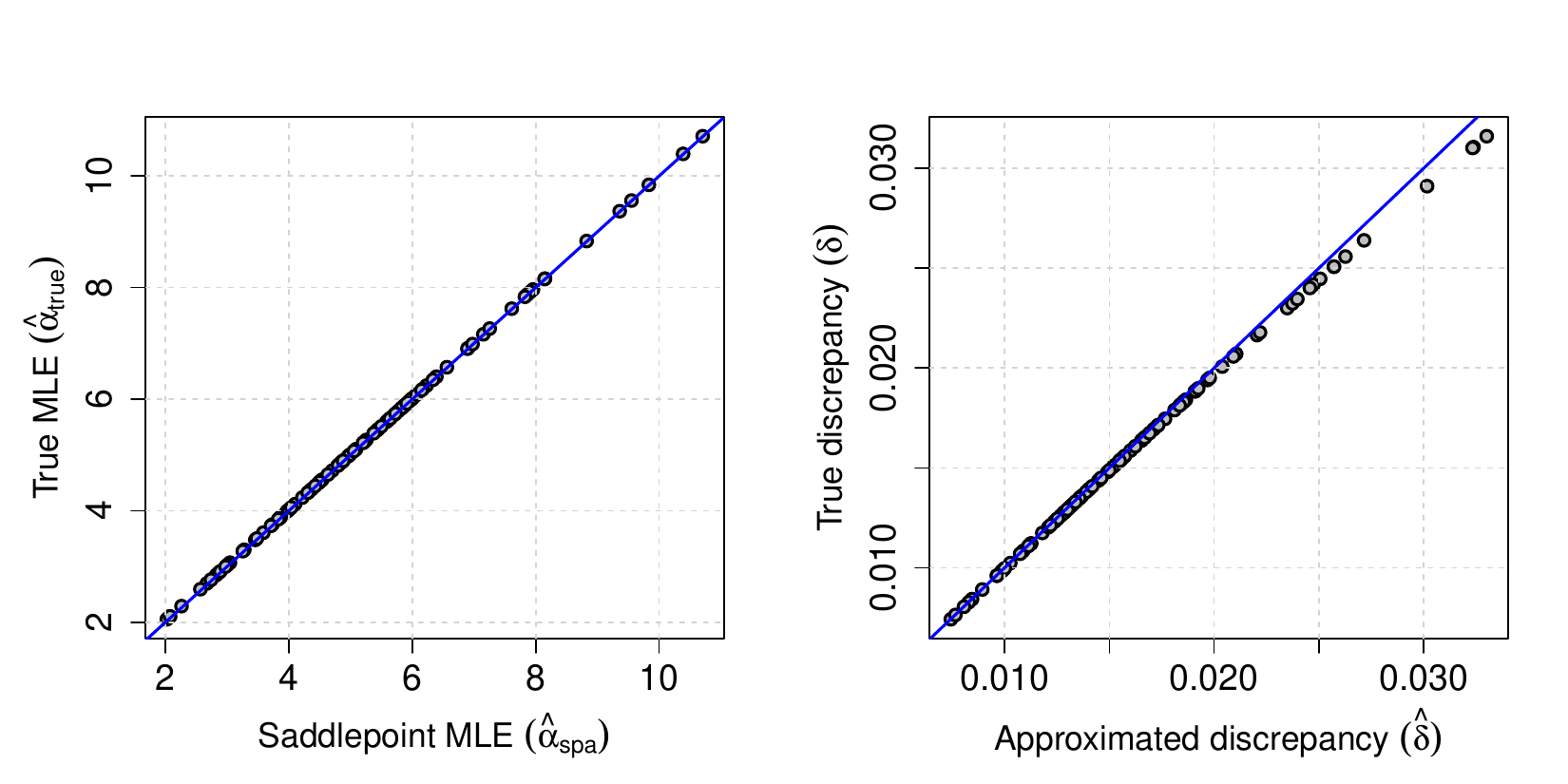]{5pc}{5pc}
\caption{
Comparison of MLEs and discrepancies for a gamma distribution with $100$ simulated observations and fixed rate $r = 1$; the true shape parameter is $\alpha = 5$.
The left panel shows the true MLE vs.\ saddlepoint MLE for shape parameter estimation; the right panel shows the true discrepancy vs.\ approximated discrepancy. 
The blue line in both panels represents the ideal scenario where the two values exactly match.
}
\label{figDisc:GammaFixedRateComparison}
\end{figure}

\subsection{A capture-recapture model with misidentification}

Model $M_{t, \alpha}$ is a capture-recapture model for estimating the size $N$ of an animal population in scenarios where captured animals may be misidentified \citep{Link2010, Vale2014, Zhang2019}. On each of several capture occasions, 
each of the $N$ animals in the population may be 
captured and correctly identified, captured but misidentified, or not captured. These responses for a single animal across all capture occasions are called the latent history for that animal,
and the vector $Y$ of latent history counts is multinomial with index $N$. However, misidentified animals are observed as single-capture `ghost' individuals, so instead of $Y$ we observe the vector $X=AY$, where $A$ is a known deterministic matrix
which splits misidentifications from correct captures for each latent history count.

The model parameters are the population size $N$, the probability $\alpha$ of correct identification, and the vector $\vec{p}$ of capture probabilities for each capture occasion. The vector $\pi$ of multinomial cell probabilities for $Y$ is a function of the model parameters.
The CGF of the observed counts $X$ is
\[
    K_X(t;N, \alpha, \vec{p}) = K_{Y}(tA; N, \pi),
\]
where $t$ is a row vector having the same dimension as $X$.

The linear mapping operation of the {\tt saddlepoint} R package constructs the CGF object for $X$, and the {\tt find.saddlepoint.MLE(...)} function computes the saddlepoint MLEs and approximated discrepancies. 
The true MLEs for this model can be computed using the true likelihood function detailed in \cite{Vale2014}.

\begin{figure}[!htbp]
\centering
\figuresize{.42}
\figurebox[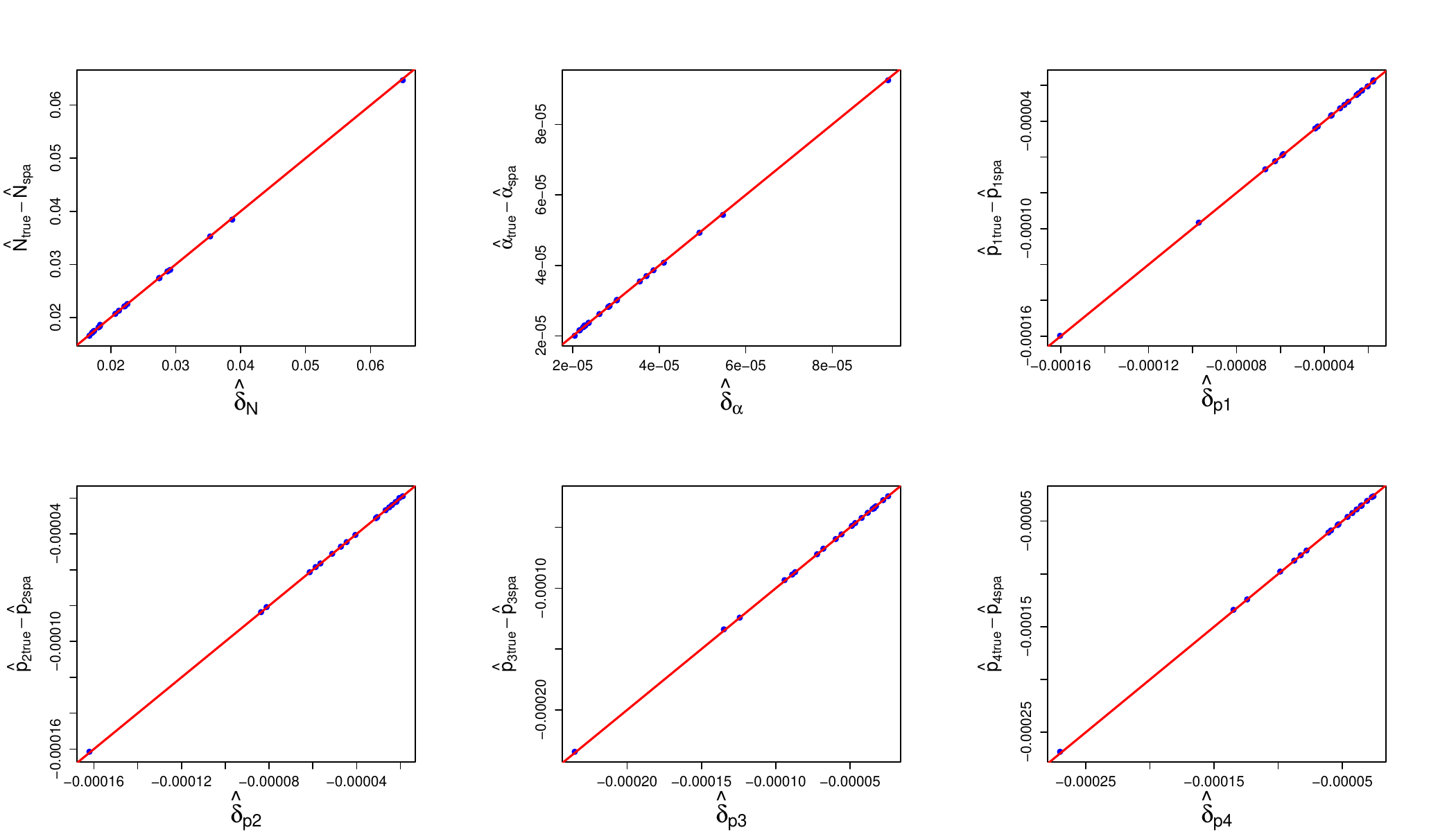]{10pc}{10pc}
\caption{
Comparison of true and approximated discrepancies between the saddlepoint and true MLEs from 20 simulations under model $M_{t,\alpha}$, with settings $N = 250$, $\alpha = 0.8$, and $\vec{p} = (0.4, 0.4, 0.6, 0.6)$. 
Each plot compares a component of the approximated discrepancy vector $\hat{\delta}$ for a parameter $(N, \alpha, p_1, p_2, p_3, p_4)$ on the horizontal axis against its true discrepancy on the vertical axis. 
Points on the identity line indicate close agreement between the approximated and true discrepancies for each parameter.    
}
\label{figDisc:DiscrepancyComparisonAllEstimatesMtalpha}

\figuresize{.5}
\figurebox[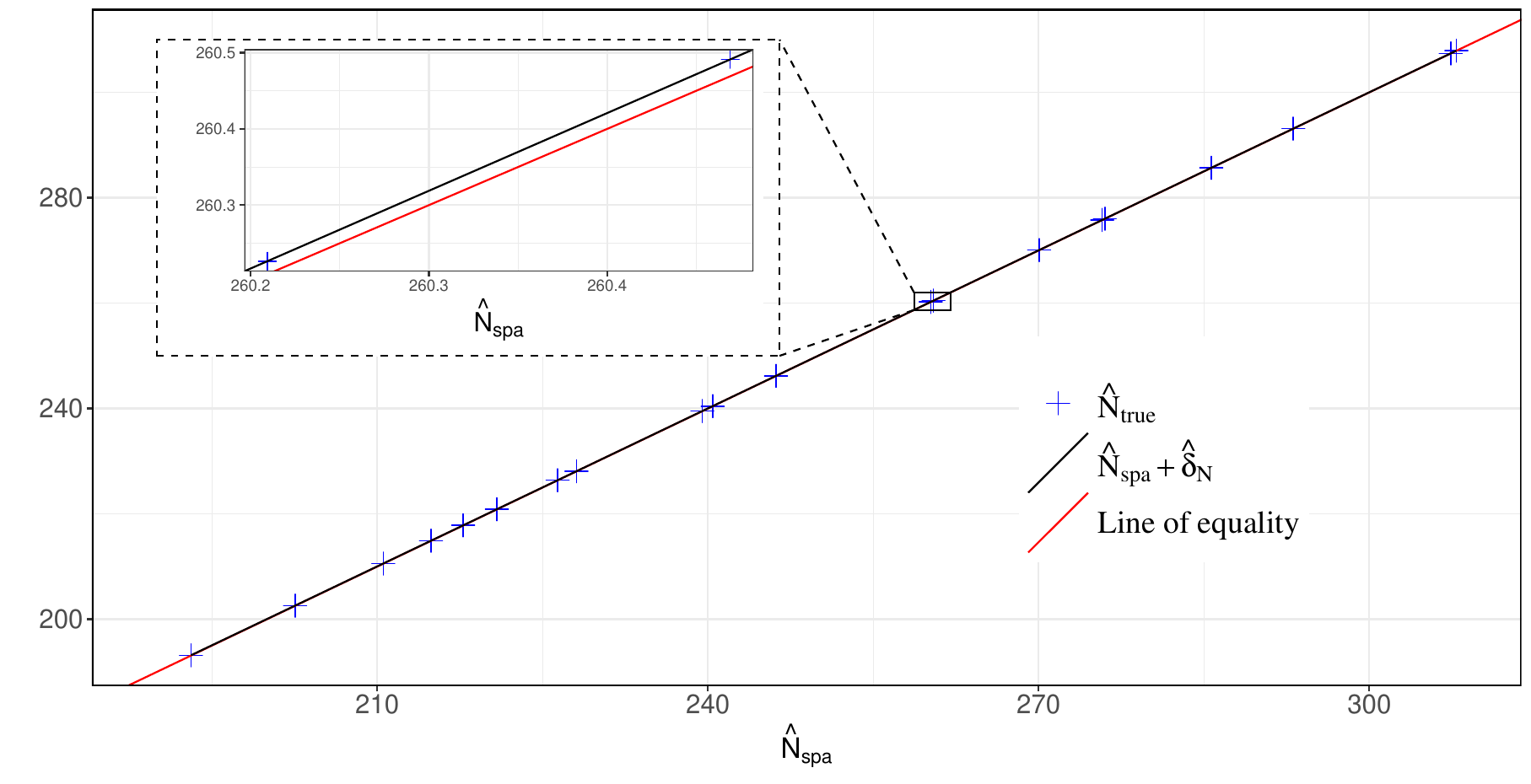]{10pc}{10pc}
\caption{
Comparison of the true MLEs $(\hat{N}_{\mathrm{true}})$  and saddlepoint MLEs $(\hat{N}_{\mathrm{spa}})$ for the interest parameter $N$ across 20 simulated datasets under model $M_{t,\alpha}$ with $N=250$, $\alpha=0.8$ and $\vec{p}= (0.4, 0.4, 0.6, 0.6)$.
The horizontal axis represents $\hat{N}_{\mathrm{spa}}$; the vertical axis plots both $\hat{N}_{\mathrm{true}}$ (blue crosses) and $\hat{N}_{\mathrm{spa}} + \hat{\delta}_N$ (black line), where $\hat{\delta}_N$ is the approximated discrepancy component computed from \eqref{eqDisc:DiscrepancyFormula} corresponding to estimates of $N$. 
The close alignment of the blue crosses to the line of equality highlights the similarity between the saddlepoint and true MLEs for $N$.  
The nearly perfect overlap of the black line with the blue crosses in the zoomed-in section highlights the relationship
$\hat{N}_{\mathrm{true}} \approx \hat{N}_{\mathrm{spa}} + \hat{\delta}_{N}  $.
}
\label{figDisc:ZoomedMtalphaPlot}
\end{figure}

We compare the true and approximated discrepancies through simulation results shown in Figures~\ref{figDisc:DiscrepancyComparisonAllEstimatesMtalpha} and \ref{figDisc:ZoomedMtalphaPlot}. Figure~\ref{figDisc:DiscrepancyComparisonAllEstimatesMtalpha} shows
consistent agreement between true and approximated discrepancies across all parameters. 
In Figure~\ref{figDisc:ZoomedMtalphaPlot}, the alignment of the blue crosses with both the red line of equality and the black line (adjusted saddlepoint MLEs) demonstrates both the efficacy of the saddlepoint likelihood and the accuracy of our discrepancy approximation formula
for the interest parameter, $N$. The main panel shows that the saddlepoint MLEs are indistinguishable from the true MLEs on the scale of inference, while the insert panel shows that adjusting the saddlepoint MLE by the approximated discrepancy yields results that
are indistinguishable from the true MLE even on the tightly zoomed scale.

In Table~\ref{tableDisc:MtalphaDiscrepancyComparisonTable}, we present results from a single dataset.
Here, both the true and approximated discrepancy come to $0.14$ 
to two decimal places, indicating
that the saddlepoint population size estimate $\hat{N}_{\mathrm{spa}}$ should have been larger by $0.14$ animals in order to match the true MLE.
Considering that the MLE for $N$ exhibits substantial sampling variability, with a standard error of $579.5$ animals, an adjustment of merely $0.14$ underscores the efficacy of the saddlepoint likelihood for the $M_{t, \alpha}$ model.

In contexts where the true likelihood is unknown, the approximated discrepancy provides a basis for assessing the utility of the saddlepoint likelihood as an estimation approach,
as well as providing an option to add the corresponding correction term to the saddlepoint MLE as shown by the black line in Figure~\ref{figDisc:ZoomedMtalphaPlot}.
An example of such a context is the two-source capture-recapture model discussed in \citet{McClintock2013} and \citet{Bonner_Holmberg2013}, where direct likelihood computation is intractable.

\begin{table}[htbp]
\centering
	\caption{Estimates and corresponding discrepancies under model $M_{t,\alpha}$ for a single dataset generated from the setting $N=900$, $\alpha=0.8$, and $\vec{p}= (0.7, 0.3, 0.1)$. }
{	\begin{tabular}{rrrcrr}
		\hline\hline
          & \multicolumn{2}{c}{Estimates (Standard Errors)} && \multicolumn{2}{c}{Discrepancies} \\
          \cline{2-3} \cline{5-6}
		   & True MLEs & Saddlepoint MLEs && True & Approximated \\ 
	    \hline
		\( \hat{N}\) & 1039.9 (579.5) & 1039.7 (579.6) && 1.4e-01 & 1.4e-01 \\ 
		\( \hat{\alpha}\) & 0.852 (0.230) & 0.851 (0.230) && 5.8e-05 & 5.7e-05\\ 
		\( \hat{p}_{1}\) & 0.612 (0.342) & 0.612 (0.342) && -8.4e-05 & -8.5e-05 \\ 
		\( \hat{p}_{2}\) & 0.250 (0.140) & 0.250 (0.140) && -3.4e-05 & -3.5e-05\\ 
        \( \hat{p}_{3}\) & 0.088 (0.050) &  0.088 (0.050) && -1.2e-05 & -1.2e-05 \\ 
		\hline
	\end{tabular}} 
	\label{tableDisc:MtalphaDiscrepancyComparisonTable}
\end{table}

\subsection{Linear birth-death population model}

\cite{Davison2020} used a continuous-time birth-death model to infer demographic parameters of a whooping crane population based on records of annual counts of female whooping cranes arriving in Texas from 1938 to 2007.
They modelled per-individual birth and death rates from observations 
$z_0,\dots, z_k$ of the population sizes $Z_0,\dots, Z_k$ at 
years $j = 0, \dots, k$ using
a branching process. By the Markov property it is sufficient to consider the conditional distribution of $Z_j$ given $Z_{j-1}=z_{j-1}$ for $j=1,\dots,k$, which is
\begin{equation}\label{eqDisc:SumOfIID_LBD_BranchingProcess}
    Z_j = \sum_{i=1}^{z_{j-1}} U^{(i)}.
\end{equation}
Here, the $U^{(i)}$ are i.i.d.\ copies of the offspring distribution $U$, parametrized by 
birth and death rates $\lambda$ and $\mu$, which are the parameters to be inferred from the population trajectory.

The true likelihood function for this model is available, but
\cite{Davison2020} highlighted the computational efficiency of the saddlepoint likelihood, especially for large population sizes. In their work, comparisons of estimates from true and saddlepoint likelihoods using simulations revealed moderate discrepancies. Here, we investigate the performance of the approximated discrepancy.

Since each $Z_j\,|\,Z_{j-1} = z_{j-1}$ is the sum of $z_{j-1}$ i.i.d\ copies of $U$, and denoting the CGF of $U$ by $K_U$, we have for each step $j = 1, \ldots, k$,
\begin{equation*}
    K_{Z_j \,|\, Z_{j-1}=z_{j-1}} (t_j \,;\, \lambda, \mu) 
    = 
    z_{j-1} K_U(t_j \,;\,\lambda, \mu).
\end{equation*}
For a given sample path $z_0, z_1,\ldots,z_k$, we can define a vector of independent random variables 
$\tilde{Z}_1, \ldots, \tilde{Z}_k$ such that $\tilde{Z}_j \sim (Z_j\,|\,Z_{j-1} = z_{j-1})$. 
Then by the Markov property, 
the model likelihood for $(Z_1 = z_1, \ldots, Z_k = z_k|Z_0 = z_0)$ is equivalent to that for 
$(\tilde{Z}_1 = z_1, \ldots, \tilde{Z}_k = z_k)$. 
We leverage this equivalence in the \texttt{saddlepoint} R package by crafting the CGFs of the independent variables $\tilde{Z}_j$ rather than formulating a multivariate CGF for the serially dependent variables $Z_j$.

Using the whooping crane data, the saddlepoint MLE slightly overestimates both $\lambda$ and $\mu$ by approximately $0.0024$, compared to the true MLEs (0.1952 vs 0.1928 for $\lambda$, and 0.1515 vs 0.1492 for $\mu$). These true discrepancies of $-0.0024$ for both rates align with the results reported by \cite{Davison2020}.
The values obtained using \eqref{eqDisc:DiscrepancyFormula} are $- 0.0026$ for both rates, again demonstrating a close match to the true discrepancy in both sign and magnitude.
These discrepancies correspond to approximately $0.08$ standard errors for both estimates, 
indicating that the deviation introduced by using the saddlepoint likelihood, as captured by our approximation formula, is relatively minor compared to the overall inferential uncertainty of the estimation process.

\section{Computational details}
\label{secDisc:ComputationDetails}
The discrepancy approximation formula \eqref{eqDisc:DiscrepancyFormula} comprises two factors.
The inverse Hessian factor, $(\nabla_{\theta}^T \nabla_{\theta} \log \hat{L}(\hat{\theta}_{\mathrm{spa}};x))^{-1}$, can be obtained following the maximization of the saddlepoint likelihood. 
In this section we highlight the computational aspects involved in accurately determining the second factor, 
$\nabla_{\theta} T_X(\hat{\theta}_{\mathrm{spa}}, x)$; this is the gradient of the correction term to the first-order saddlepoint approximation of the log-likelihood as defined in \eqref{eqDisc:SecondOrderSaddlepointLogLikelihood} and \eqref{eqDisc:funcT}.

To derive the gradient of \eqref{eqDisc:funcT} for a given observation $x$,
we first note that the derivatives of $K_X(\hat{t}(\theta, x);\theta)$ and the matrix elements $Q_{ij} = (K^{''}_X (\hat{t}(\theta;x);\theta)^{-1})_{ij}$ depend on $\theta$ both directly and indirectly through $\hat{t}(\theta; x)$. 
Then, using the gradients of $\hat{t}(\theta; x)$ and $Q_{ij}$, as outlined in \supsecref{secDisc:GenralNotatiosAndDer}, $\nabla_{\theta} \left(T_X(\theta, x)\right)$ can be obtained. 
For the complete derivation, refer to \supsecref{appendixDerOfT}. Here, we present the operational form of the derived gradient, applicable for computational evaluation using a specific estimate $\hat{\theta}_{\mathrm{spa}}$ and the corresponding observation $x$.

After maximizing \eqref{eqDisc:SaddlepointLogLikelihood} using observation $x$ to obtain $\hat{\theta}_{\mathrm{spa}}$, we apply shorthand notation for $Q_{ij}$ and higher-order derivatives as follows:
\begin{equation*}
    \begin{aligned}
        \hat{Q}_{ij} 
        &= 
        \left(
        K''_X (\hat{t}(\hat{\theta}_{\mathrm{spa}},x); \hat{\theta}_{\mathrm{spa}})^{-1}
        \right)_{ij},
        \\
        \frac{\partial^r K_{X,\hat{\theta}_{\mathrm{spa}} }}{\partial t_{j_{1\dots r}}}
        &=
        \frac{\partial^r K_X(\hat{t}(\hat{\theta}_{\mathrm{spa}}, x); \hat{\theta}_{\mathrm{spa}} )}{\partial t_{j_1} \dots \partial t_{j_r}},
        \\
        \nabla_\theta \frac{\partial^r K_{X,\hat{\theta}_{\mathrm{spa}} }}{\partial t_{j_{1\dots r}}} 
        &=
        \nabla_\theta 
        \left(
        \frac{\partial^r K_X(\hat{t}(\hat{\theta}_{\mathrm{spa}}, x); \hat{\theta}_{\mathrm{spa}})}{\partial t_{j_1} \dots \partial t_{j_r}}
        \right).
    \end{aligned}
\end{equation*}
Then, as derived in \supsecref{appendixDerOfT},

\begin{equation}
\resizebox{0.9\hsize}{!}{$
\label{eqDisc:FirstGradOfFuncT}
    \begin{aligned}
         &\nabla_{\theta} T_X(\hat{\theta}_{\mathrm{spa}}, x)
         = 
         \frac{1}{8} \sum_{j_{1},j_{2},j_{3},j_{4}}  \hat{Q}_{j_1 j_2} \hat{Q}_{j_3 j_4}  \nabla_\theta \frac{\partial^4 K_{X,\hat{\theta}_{\mathrm{spa}}}}{\partial t_{j_{1234}}}
           \\ 
           &\qquad - \frac{1}{12} \sum_{j_1,j_2,j_3} ~\,
           \sum_{j_4,j_5,j_6}
           \left( 
           3\hat{Q}_{j_1 j_2} \hat{Q}_{j_3 j_4} \hat{Q}_{j_5 j_6} + 2\hat{Q}_{j_1 j_4} \hat{Q}_{j_2 j_5} \hat{Q}_{j_3 j_6} 
           \right) 
           \frac{\partial^3 K_{X,\hat{\theta}_{\mathrm{spa}}}}{\partial t_{j_{456}}}
           \nabla_\theta \frac{\partial^3 K_{X,\hat{\theta}_{\mathrm{spa}}}}{\partial t_{j_{123}}} 
          \\ 
          &\qquad - \frac{1}{4} \sum_{j_5,j_6} ~\,
          \sum_{j_1,j_2,j_3,j_4}
          \hat{Q}_{j_1 j_5}  \hat{Q}_{j_2 j_6} \hat{Q}_{j_3 j_4}
          \frac{\partial^4 K_{X,\hat{\theta}_{\mathrm{spa}}}}{\partial t_{j_{1234}}}
          \nabla_\theta \frac{\partial^2 K_{X,\hat{\theta}_{\mathrm{spa}}}}{\partial t_{j_{56}}} 
          \\ &\qquad + \frac{1}{8} \sum_{j_7,j_8} ~\, \sum_{j_1,\dots,j_6}  
          \left(
          2\hat{Q}_{j_1 j_2} \hat{Q}_{j_3 j_4} \hat{Q}_{j_5 j_7} \hat{Q}_{j_8 j_6} + \hat{Q}_{j_1 j_2} \hat{Q}_{j_5 j_6} \hat{Q}_{j_3 j_7} \hat{Q}_{j_8 j_4} + 2\hat{Q}_{j_6 j_3} \hat{Q}_{j_1 j_4} \hat{Q}_{j_5 j_7} \hat{Q}_{j_8 j_2} 
          \right)
          \\
          &\qquad\quad\quad\quad
          \cdot
          \frac{\partial^3 K_{X,\hat{\theta}_{\mathrm{spa}}}}{\partial t_{j_{123}}}
          \frac{\partial^3 K_{X,\hat{\theta}_{\mathrm{spa}}}}{\partial t_{j_{456}}}
          \nabla_\theta
          \frac{\partial^2 K_{X,\hat{\theta}_{\mathrm{spa}}}}{\partial t_{j_{78}}}.
    \end{aligned}
$}
\end{equation}
To compute this expression analytically we would need to derive and evaluate high-order partial derivatives of the CGF with respect to both $t$ and $\theta$.
For most random variables, the number of terms and complexity of these analytical derivations typically increase with each higher order.
For example, the fifth-order derivative $\partial^5 K_X(t; \theta)/\partial t_{j_1} \dots \partial t_{j_5}$, necessary for evaluating \eqref{eqDisc:FirstGradOfFuncT},  involves 52 terms for a multinomial random variable.
Additionally, the representation of these high-order derivatives extends beyond conventional vector and matrix forms, necessitating the use of multidimensional arrays in computations which may require significant memory allocations.
For further details on these challenges, we include the case of a multinomial random variable in \supsecref{appendixDerOfT}.

To address these computational demands, we instead employ the automatic differentiation tools CppAD \citep{cppad} and TMBad \citep{TMB2016} to accurately compute derivatives of \eqref{eqDisc:funcT}.
This approach eliminates the need for manual computation and storage of higher-order derivatives, significantly streamlining the computational workflow
while still obtaining derivatives to analytic accuracy.
The discrepancy computation is implemented in the {\tt saddlepoint} R package \citep{saddlepoint_pkg} by 
introducing specific functions designed to handle the summations in \eqref{eqDisc:funcT}. 
These functions manage third- and fourth-order derivatives of a CGF. 
Details are given in \supsecref{labelDisc:K3K4operators}.

\section{Theoretical analysis and asymptotics}
\label{secDisc:TheoriticalAnalysis}

\subsection{A limiting framework for the saddlepoint methods}

Here we provide a theoretical basis for the discrepancy approximation formula \eqref{eqDisc:DiscrepancyFormula} and present asymptotic results for the sizes of discrepancies in models with two distinct classes of identifiability conditions. First, we outline the limiting framework under which these results are applicable.
We then briefly highlight some mild regularity conditions before stating the results.

Limit theorems about the saddlepoint approximation typically consider random variables that are sums of $n$ i.i.d.\ terms, where $n\to\infty$.
This aligns with the structure of the models in which the saddlepoint likelihood has been applied; for example, the model \eqref{eqDisc:SumOfIID_LBD_BranchingProcess} represents each observed population size as an i.i.d.\ sum of offspring.
Additionally, 
MGFs and the saddlepoint log-likelihood function have simple $n$-dependence in this framework.

Consider a random variable $X_n \in \mathbb{R}^{d\times 1}$ defined by 
\begin{equation}\label{eqDisc:SAR}
    X_n = \sum_{i = 1}^{n} U^{\left(i\right)}
    ,
\end{equation}
where the $U^{(i)}$ are unobservable i.i.d.\ copies of a random variable $U_{\theta}$. The MGF and CGF of $X_n$ are 
\begin{equation}
\label{eqDisc:MGF_CGF_transformations_with_n}
    M_X(t;\theta) = M_U(t;\theta)^n,\quad K_X(t;\theta) = nK_U(t;\theta).
\end{equation}
We write $X_n = X_{\theta,n}$ when necessary to emphasize $\theta$-dependence.
Given a sequence of observed values $x_n$ for $X_n$, we define the corresponding sample means $u_n=x_n/n$. In accordance with the law of large numbers, we consider a limiting scenario where $u_n$ approaches a limiting value, $u_0$, while $x_n$ scales with $n$:
\begin{equation}
  u_n = x_n/n,\;\; x_n=n u_n, ~\text{where}~ u_n\to u_0\text{ as }n\to\infty
.
\end{equation}
Under this scaling, we can explicitly portray the $n$-dependence of the 
first- and second-order saddlepoint log-likelihoods, see for instance \citet[Chapter~6]{mccullagh1987tensor}:
\begin{align}
\log \hat{L}_n(\theta;x_n) 
&= n\log \hat{L}_{U}^* (\theta, u_n) 
+
c_n 
+
\log \hat{P}_U(\theta, u_n),
\label{eqDisc:AsymptoticsFirstOrderLogLikelihood} 
\\
\log \hat{L}_n(\theta;x_n)_2 
&= n\log \hat{L}_{U}^* (\theta, u_n) 
+
c_n 
+
\log \hat{P}_U(\theta, u_n)
+
T_U(\theta, u_n)/n,
\label{eqDisc:AsymptoticsSecondOrderLogLikelihood}
\end{align}
where
\begin{align*}
\log \hat{L}_{U}^* (\theta, u)  
&= 
K_{U}(\hat{t}_U(\theta; u);\theta) - \hat{t}_U(\theta; u) u, 
\\
c_n 
&= 
- \tfrac{1}{2} d\log(2\pi n),
\\
\log \hat{P}_U(\theta, u) 
&= 
- \tfrac{1}{2} \log \det ( K''_{U}(\hat{t}_U(\theta; u);\theta) ). 
\end{align*}
The final term of \eqref{eqDisc:AsymptoticsSecondOrderLogLikelihood} follows because
the matrix $Q$ from \eqref{eqDisc:funcT} is 
$Q = K''_X(\hat{t};\theta)^{-1} = n^{-1} K''_U(\hat{t};\theta)^{-1}$,
so from \eqref{eqDisc:SecondOrderSaddlepointLogLikelihood} and \eqref{eqDisc:funcT} we obtain the relation
\begin{equation}
\label{eqDisc:TX_TU}
    T_X(\theta, x_n) = T_U(\theta, u_n)/n.
\end{equation}

In this limiting framework, the scaling of the true log-likelihood is commensurate with that of the approximations in \eqref{eqDisc:AsymptoticsFirstOrderLogLikelihood} and \eqref{eqDisc:AsymptoticsSecondOrderLogLikelihood}. 
See, for instance, \citet[Chapter~6]{mccullagh1987tensor}, where we have:
\begin{align}
    \log L_n(\theta; x_n) - \log \hat{L}_n(\theta;x_n) 
    &=
    O(1/n),
    \\
    \log L_n(\theta; x_n) - \log \hat{L}_n(\theta;x_n)_2 
    &=
    O(1/n^2).
    \label{eqDisc:AsymptoticsTrueBigOnotation}
\end{align}
To make this comparison more explicit, we can define a function $P_n(\theta,u)$ such that
\begin{equation}
    \label{eqDisc:AsyptoticTrueLoglik}
    \log L_n(\theta; x_n) 
    =
    n\log \hat{L}_{U}^* (\theta, u_n) 
    +
    c_n 
    +
    \log P_n(\theta, u_n),
\end{equation}
so that \eqref{eqDisc:AsymptoticsTrueBigOnotation} is the assertion that
$\log P_n(\theta, u_n) = \log \hat{P}_U(\theta, u_n) + n^{-1}T_U(\theta, u_n) + O(1/n^2)$.

The MLEs and discrepancies depend on $n$, which we indicate explicitly by writing 
\begin{equation}
\begin{aligned}
    \label{eqDisc:theta_n_notation}
    &\hat{\theta}_{\mathrm{true},n} = \underset{\theta}{\mathrm{argmax}}\, \log L_n(\theta;x_n),
    \quad 
    \hat{\theta}_{\mathrm{spa},n} = \underset{\theta}{\mathrm{argmax}}\, \log \hat{L}_n(\theta;x_n), 
    \\
    &\delta_n = \hat{\theta}_{\mathrm{true},n} - \hat{\theta}_{\mathrm{spa},n},
    \quad
    \hat{\delta}_n = 
    -\{\nabla_{\theta}^T \nabla_{\theta} \log \hat{L}_n(\hat{\theta}_{\mathrm{spa},n};x_n)\}^{-1}
    \spacednablatranspose{\theta}{T_X(\hat{\theta}_{\mathrm{spa},n}, x_n)}.
\end{aligned}
\end{equation}
Although $\hat{\theta}_{\mathrm{true},n}$, $\hat{\theta}_{\mathrm{spa},n}$, $\delta_n$, and $\hat{\delta}_n$ also implicitly depend on $x_n$, or equivalently on $u_n$, we suppress this in our notation. 

\subsection{Regularity conditions}

The saddlepoint approximation is derived as an inverse Fourier transform which entails evaluation of contour integrals with the MGF as the integrand.
The purpose of specifying regularity conditions on the MGF and its derivatives is to guarantee convergence of these contour integrals.

Suppose the MGF $M_X(t;\theta)$ is finite for $t$ and $\theta$ in the neighbourhood of $0$ and $\theta_0$, respectively, where $\theta_0$ denotes the true parameter of the random variable $X$. The following statements define mild conditions on the rate of decay of the MGF when evaluated in a complex plane; see \citet[expressions (2.9)-(2.10) and Example 32]{Goodman2022} for more details.
These conditions encompass both the regularity of the saddlepoint approximation and the continuous differentiability of the MGF.
Condition \eqref{eqDisc:RegularityCondition2} specifies that the relevant partial derivatives grow at most polynomially in $| \omega |$.
\begin{equation}\label{eqDisc:RegularityCondition1}
    \left|\frac{M_X(t + \mathrm{i}\omega; \theta)}{M_X(t; \theta)}\right| \leq (1+\eta|\omega|^2)^{-\eta}~\text{for all}~\omega \in \mathbb{R}^{1 \times d}~\text{and a constant} ~\eta > 0.
\end{equation}
\begin{equation}\label{eqDisc:RegularityCondition2}
\begin{aligned}
    &\left|\frac{\partial^{k+l} M_X (t + \mathrm{i}\omega; \theta) }{\partial \theta_{i_1}\dots \partial \theta_{i_k}\partial t_{j_1}\dots\partial t_{j_l}}\right| \leq \gamma(1+|\omega|)^{\gamma}~\text{for some}~\gamma \in \mathbb{R}^{+},\\
    &\hspace{7cm}  0\leq k \leq 4~\text{and}~1\leq k+l \leq 9.
\end{aligned}
\end{equation}

\subsection{Asymptotics when the model is fully identifiable at the level of the sample mean}

Here we address the asymptotic behaviour of the true discrepancy $\delta_n$ and its approximation $\hat{\delta}_n$ 
for models in our first identifiability class. In general, if we observe
$x_n \sim X_{\theta_0,n}$, then by the law of large numbers, the sample mean $u_n = x_n/n$ converges to a limiting value $u_0 = E(U_{\theta_0})$ that is uniquely determined by the parameters $\theta_0$.  
Our first identifiability class describes the situation where, conversely, the parameter $\theta_0$ is uniquely determined by $u_0$.
We begin with the well-specified case.
The following theorem is proved in \customref{appendix1}.

\begin{theorem}
\label{theoremDisc:Theorem1}
    Consider a well-specified model with observation $x_n$ drawn from the distribution $X_{\theta_0, n}$, 
    where 
    $\theta_0 \in \mathbb{R}^{p \times 1}$, 
    and let $u_0 = E(U_{\theta_0})$.  Suppose the parametric model is such that 
    \begin{equation}
        \nabla_{\theta} K'_U(0; \theta_0)~\text{has rank $p$},
        \label{eqDisc:Theorem1NonSingularCondition}
    \end{equation}
    and \eqref{eqDisc:RegularityCondition1}--\eqref{eqDisc:RegularityCondition2} hold. Then, with $\theta$ restricted to a suitable neighbourhood of $\theta_0$, the MLEs 
    $\hat{\theta}_{\mathrm{true},n}$, $\hat{\theta}_{\mathrm{spa},n}$
    exist uniquely with high probability and
    \begin{equation}
    \label{eqDisc:theta_true_spa_convergence_theta0}
        \hat{\theta}_{\mathrm{true},n} = \theta_0 + o_P(1),~
        \hat{\theta}_{\mathrm{spa},n} = \theta_0 + o_P(1)~\text{as $n\to\infty$}.
    \end{equation}
    Moreover there exists a constant 
    \begin{equation}
    \label{LimitingRescaledDiscrepancy}
        k_0 = -\{\nabla_{\theta}^T \nabla_{\theta} 
        \log \hat{L}_{U}^* (\theta_0, u_0)
        \}^{-1} 
        \spacednablatranspose{\theta}{T_U(\theta_0, u_0)}
        \in \mathbb{R}^{p \times 1}
        ,
    \end{equation}
    such that 
    \begin{align}
        &\delta_n =
        n^{-2} (k_0 + o_P(1)),
        \label{Theorem1_results1}
        \\
        &\hat{\delta}_n =
        n^{-2} (k_0 + o_P(1)),~\text{and}~
        \label{Theorem1_results2}
        \\
        &\delta_n = \hat{\delta}_n + O(n^{-3})~~~\text{as $n \to \infty $}.
        \label{Theorem1_results3}
    \end{align}
\end{theorem}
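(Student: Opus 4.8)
The plan is to exploit the fact that all three log-likelihoods in \eqref{eqDisc:AsymptoticsFirstOrderLogLikelihood}, \eqref{eqDisc:AsymptoticsSecondOrderLogLikelihood} and \eqref{eqDisc:AsyptoticTrueLoglik} share the same leading term $n\log\hat{L}_{U}^*(\theta,u_n)$, and to treat everything else as an $O(1)$-or-smaller perturbation. First I would study the limit function $g(\theta,u):=\log\hat{L}_{U}^*(\theta,u)=K_U(\hat t_U;\theta)-\hat t_U u$, where $\hat t_U=\hat t_U(\theta;u)$ solves $K_U'(\hat t_U;\theta)=u$. Because the $t$-derivative of $K_U(t;\theta)-tu$ vanishes at $t=\hat t_U$, the envelope theorem shows that $\nabla_\theta g$ equals the gradient of $K_U$ in its explicit $\theta$-argument only, evaluated at $t=\hat t_U$. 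At $(\theta_0,u_0)$ we have $\hat t_U=0$, and since $M_U(0;\theta)\equiv 1$ gives $K_U(0;\theta)\equiv 0$, every explicit $\theta$-derivative of $K_U$ vanishes at $t=0$; hence $\nabla_\theta g(\theta_0,u_0)=0$. Differentiating once more, and using $\nabla_\theta \hat t_U^T=-K_U''(\hat t_U;\theta)^{-1}\nabla_\theta K_U'(\hat t_U;\theta)$ obtained from the saddlepoint equation, I get
\begin{equation*}
H_0:=\nabla_{\theta}^T \nabla_{\theta}\log\hat{L}_{U}^*(\theta_0,u_0)=-[\nabla_\theta K_U'(0;\theta_0)]^T\,[K_U''(0;\theta_0)]^{-1}\,[\nabla_\theta K_U'(0;\theta_0)].
\end{equation*}
By \eqref{eqDisc:NonSingularCovAndMean} the middle factor is positive definite, and by the rank condition \eqref{eqDisc:Theorem1NonSingularCondition} the outer factor has full column rank, so $H_0$ is negative definite and $\theta_0$ is a nondegenerate maximiser of $g(\cdot,u_0)$.

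With this in hand, existence, uniqueness and consistency follow from a standard argmax argument. Each of $n^{-1}\log L_n$, $n^{-1}\log\hat{L}_n$ and $n^{-1}\log\hat{L}_n(\cdot)_2$ equals $g(\theta,u_n)$ plus a term that is $O(n^{-1})$ uniformly on a fixed neighbourhood of $\theta_0$, while $u_n\to u_0$; since $g(\cdot,u_0)$ has the isolated nondegenerate maximum $\theta_0$, the implicit function theorem applied to the gradient equations $\nabla_\theta\log L_n=0$ and $\nabla_\theta\log\hat{L}_n=0$ produces unique roots in this neighbourhood with probability tending to one, and these roots converge to $\theta_0$, giving \eqref{eqDisc:theta_true_spa_convergence_theta0}.

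For the discrepancies, I would Taylor-expand $G_n:=\nabla_\theta\log L_n$ about $\hat\theta_{\mathrm{spa},n}$, which solves $F_n:=\nabla_\theta\log\hat L_n=0$. The Hessian of $\log L_n$ is $nH_0+o_P(n)$, so it is invertible of order $n$; meanwhile, since $\log L_n-\log\hat L_n=T_X+R_n$ with $R_n=O(n^{-2})$ by \eqref{eqDisc:AsymptoticsTrueBigOnotation}, and $F_n(\hat\theta_{\mathrm{spa},n})=0$, we have $G_n(\hat\theta_{\mathrm{spa},n})=\nabla_\theta T_X(\hat\theta_{\mathrm{spa},n},x_n)+O(n^{-2})$, which by \eqref{eqDisc:TX_TU} equals $n^{-1}(\nabla_\theta T_U(\theta_0,u_0)+o_P(1))$. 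The quadratic Taylor remainder is $O(n)\cdot O(\delta_n^2)$ and hence negligible once $\delta_n=O(n^{-2})$. Solving the linearised equation gives $\delta_n=-(nH_0)^{-1}n^{-1}\nabla_\theta^T T_U(\theta_0,u_0)+o_P(n^{-2})=n^{-2}(k_0+o_P(1))$ with $k_0$ exactly as in \eqref{LimitingRescaledDiscrepancy}, proving \eqref{Theorem1_results1}. Substituting the same two estimates, $\nabla_{\theta}^T\nabla_{\theta}\log\hat L_n=nH_0+o_P(n)$ and $\nabla_\theta^T T_X=n^{-1}(\nabla_\theta^T T_U(\theta_0,u_0)+o_P(1))$, directly into the definition of $\hat\delta_n$ in \eqref{eqDisc:theta_n_notation} yields \eqref{Theorem1_results2}.

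Finally, for \eqref{Theorem1_results3} the $n^{-2}$ leading terms of $\delta_n$ and $\hat\delta_n$ coincide, so I must compare them to one higher order. The exact Newton identity obtained above gives $\delta_n=-H_G^{-1}\nabla_\theta^T T_X(\hat\theta_{\mathrm{spa},n},x_n)+O(n^{-3})$, where $H_G$ is the Hessian of $\log L_n$ and the $O(n^{-3})$ absorbs $H_G^{-1}\nabla_\theta^T R_n$ (of order $n^{-1}\cdot n^{-2}$) together with the cubic Taylor remainder; comparing with $\hat\delta_n=-H_F^{-1}\nabla_\theta^T T_X(\hat\theta_{\mathrm{spa},n},x_n)$, where $H_F$ is the Hessian of $\log\hat L_n$, and using $H_G-H_F=\nabla_{\theta}^T\nabla_{\theta}(T_X+R_n)=O(n^{-1})$ so that $H_G^{-1}-H_F^{-1}=O(n^{-3})$, the difference $\delta_n-\hat\delta_n$ is $O(n^{-3})$. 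The hard part is not this bookkeeping but justifying that the value bound $\log L_n-\log\hat L_n(\cdot)_2=O(n^{-2})$ underlying \eqref{eqDisc:AsymptoticsTrueBigOnotation} survives $\theta$-differentiation, i.e.\ that $\nabla_\theta R_n=O(n^{-2})$ and $\nabla_{\theta}^T\nabla_{\theta}R_n=O(n^{-2})$ hold uniformly near $\theta_0$. Since $R_n$ is the remainder of the contour-integral (Edgeworth-type) expansion of the saddlepoint density, its $\theta$-derivatives require controlling derivatives of the MGF of exactly the orders demanded by the decay and polynomial-growth conditions \eqref{eqDisc:RegularityCondition1}--\eqref{eqDisc:RegularityCondition2}; I would obtain these differentiated remainder bounds following \citet{Goodman2022}.
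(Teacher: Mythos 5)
Your proposal is correct in substance, but it reaches the discrepancy results by a genuinely different route than the paper. Your opening computation of $H_0$ and its negative definiteness is exactly the paper's own reduction: in the paper, Theorem 1 is proved by verifying that the well-specified hypothesis \eqref{eqDisc:Theorem1NonSingularCondition} implies the hypotheses \eqref{eqDisc:gradientLeadingOrder}--\eqref{eqDisc:HessianLeadingOrder} of Theorem 2 (this is \eqref{eqDisc:WellSpecifiedNegDefiniteCondition} in the supplement), and existence/uniqueness/consistency are outsourced to \citet{Goodman2022}. From there, however, the paper does not Newton-expand the true gradient about $\hat\theta_{\mathrm{spa},n}$. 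Instead it embeds both estimating equations into a single smooth function $F(\theta;u,\epsilon,\rho)$ of \eqref{eqDisc:Proof1_1}, in which $\rho=0$ recovers the saddlepoint gradient and $\rho=1/n$ the true gradient, applies the Implicit Function Theorem to get a smooth root map $G_\theta(u,\epsilon,\rho)$, identifies $\hat\delta_n=(1/n)\,\partial G_\theta/\partial\rho(u_n,1/n,0)$ (Lemma~\ref{lemmaDisc:LemmaOnTheDiscrepancyFormular}), and obtains \eqref{Theorem1_results3} from Lagrange's form of the Taylor remainder in $\rho$ together with the bound $\partial^2 G_\theta/\partial\rho^2=O(\epsilon+\rho)$. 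Your alternative bookkeeping ($\delta_n=-H_G^{-1}\nabla_\theta^T T_X+O(n^{-3})$, $\hat\delta_n=-H_F^{-1}\nabla_\theta^T T_X$, and $H_G^{-1}-H_F^{-1}=O(n^{-3})$ from the resolvent identity) is a valid replacement for that mechanism, with one small circularity to repair: ``negligible once $\delta_n=O(n^{-2})$'' must be run as a bootstrap, first using consistency to get $\|\delta_n\|=o_P(1)$ so the quadratic term can be absorbed, after which the self-bounding inequality yields $\delta_n=O_P(n^{-2})$. Crucially, both arguments rest on the same nontrivial input, and you correctly isolate it as the hard part: your deferred bounds $\nabla_\theta R_n=O(n^{-2})$ and $\nabla_\theta^T\nabla_\theta R_n=O(n^{-2})$ are precisely the content of the paper's function $h(\theta,u,\rho)=q(\theta,u,\rho)-\rho\nabla_\theta T_U(\theta,u)$ being continuously differentiable with $h(\cdot,\cdot,0)=0$ and $\partial h/\partial\rho(\cdot,\cdot,0)=0$, i.e.\ parts (a) and (c) of Lemma~\ref{lemmaDisc:Theorem1ContinousDiff}, which the paper extracts from the smoothness of $q$ in \citet{Goodman2022}; so your deferral is on par with the paper's own treatment. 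What the paper's packaging buys is that all $n$-uniformity and differentiability issues are settled once, inside the IFT, and the identical template transfers to the two-scale setting of Theorem 3; what your packaging buys is a more elementary argument that makes the origin of each power of $n$ transparent, at the cost of tracking uniform differentiated-remainder bounds by hand.
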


In Theorem 1, the assumption \eqref{eqDisc:Theorem1NonSingularCondition} implies that the function $\theta\mapsto E(U_\theta)$ is invertible in a neighbourhood of $\theta_0$. Thus, observing the sample mean $u_n$ is asymptotically sufficient to identify the parameter $\theta_0$. In the terminology of \cite{Goodman2022}, the model is fully identifiable at the level of the sample mean. 

The next theorem, also proved in \customref{appendix1}, extends Theorem 1 to the case where the model is misspecified. Suppose
the observed $x_n$ is drawn according to $x_n = \sum_{i=1}^n V_i$, where the $V_i$ are i.i.d.\ random variables whose distribution need not coincide with any $U_\theta$. By the law of large numbers,
the sample mean still satisfies
$u_n\to u_0=E(V_i)$, but now $u_0$ need not coincide with any $E(U_{\theta_0})$.
With an appropriate generalization of the identifiability condition \eqref{eqDisc:Theorem1NonSingularCondition}, the same conclusions apply.

\begin{theorem}
\label{theoremDisc:Theorem2}
Fix $u_0 \in \mathbb{R}^{d\times 1}$ and suppose the sample means $u_n=x_n/n$ satisfy $u_n = u_0 + o_P(1)$.
Suppose \eqref{eqDisc:RegularityCondition1}--\eqref{eqDisc:RegularityCondition2} hold and there exist $t_0\in \mathbb{R}^{1\times d}$ and $\theta_0 \in \mathbb{R}^{p\times 1}$ such that
\begin{align}
    &K'_{U}(t_0;\theta_0) = u_0,
    \label{eqDisc:Theorem2SaddlepointEqn}
    \\
    &\nabla_{\theta} \log \hat{L}_{U}^* (\theta_0, u_0) = 0,~and~
    \label{eqDisc:gradientLeadingOrder}
    \\
    &\nabla_{\theta}^T \nabla_{\theta} \log \hat{L}_{U}^* (\theta_0, u_0)~\text{is negative definite}.
    \label{eqDisc:HessianLeadingOrder}
\end{align}
Then, with $\theta$ restricted to a suitable neighbourhood of $\theta_0$, 
the MLEs $\hat{\theta}_{\mathrm{true},n}$, $\hat{\theta}_{\mathrm{spa},n}$ exist uniquely
with high probability, 
and \eqref{eqDisc:theta_true_spa_convergence_theta0}, \eqref{Theorem1_results1}--\eqref{Theorem1_results3} still hold with the same constant $k_0$ from \eqref{LimitingRescaledDiscrepancy}.
\end{theorem}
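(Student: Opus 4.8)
The plan is to prove Theorem \ref{theoremDisc:Theorem2} directly from the abstract hypotheses \eqref{eqDisc:Theorem2SaddlepointEqn}--\eqref{eqDisc:HessianLeadingOrder} and then to recover Theorem \ref{theoremDisc:Theorem1} as the special case in which those hypotheses hold automatically. I would begin with this reduction, since it also clarifies the role of each condition. In the well-specified setting $u_0 = E(U_{\theta_0}) = K'_U(0;\theta_0)$, so $t_0 = 0$ solves \eqref{eqDisc:Theorem2SaddlepointEqn}. An envelope calculation gives $\nabla_\theta \log \hat L_U^*(\theta, u) = \nabla_\theta K_U(\hat t_U(\theta;u);\theta)$, where $\nabla_\theta$ differentiates only the explicit $\theta$-slot; since $K_U(0;\theta) = \log M_U(0;\theta) \equiv 0$ and $\hat t_U(\theta_0;u_0) = 0$, this yields \eqref{eqDisc:gradientLeadingOrder}. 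Differentiating once more, and using $\nabla_\theta \hat t_U = -(K''_U)^{-1}\nabla_\theta K'_U$ obtained from the saddlepoint equation, the Hessian collapses to $-\{\nabla_\theta K'_U(0;\theta_0)\}^T K''_U(0;\theta_0)^{-1}\nabla_\theta K'_U(0;\theta_0)$, which is negative definite precisely when the rank condition \eqref{eqDisc:Theorem1NonSingularCondition} holds, giving \eqref{eqDisc:HessianLeadingOrder}.

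For the general argument I would treat $\log L_n$, $\log \hat L_n$ and $\log \hat L_n(\cdot)_2$ as perturbations of the shared leading term $n\log\hat L_U^*(\theta,u_n)$ via the decompositions \eqref{eqDisc:AsymptoticsFirstOrderLogLikelihood}, \eqref{eqDisc:AsymptoticsSecondOrderLogLikelihood} and \eqref{eqDisc:AsyptoticTrueLoglik}. By \eqref{eqDisc:HessianLeadingOrder} and the implicit function theorem, the leading-order maximizer $\theta^*(u)$, defined by $\nabla_\theta\log\hat L_U^*(\theta^*(u),u)=0$, is smooth near $u_0$ with $\theta^*(u_0)=\theta_0$, and its Hessian $H_* = \nabla_\theta^T\nabla_\theta\log\hat L_U^*(\theta_0,u_0)$ remains negative definite nearby. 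Since $u_n = u_0 + o_P(1)$, the order-$n$ term dominates on a fixed neighbourhood of $\theta_0$, which gives existence and uniqueness of $\hat\theta_{\mathrm{true},n}$ and $\hat\theta_{\mathrm{spa},n}$ with high probability and the consistency \eqref{eqDisc:theta_true_spa_convergence_theta0}.

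The efficient way to extract the discrepancy is to linearize the true stationarity equation about the computed point $\hat\theta_{\mathrm{spa},n}$, rather than expanding each maximizer separately. Writing $F_n(\theta) = \nabla_\theta^T\log L_n(\theta;x_n)$, the decompositions give $\log L_n - \log\hat L_n = n^{-1}T_U + O(n^{-2})$, so because $\nabla_\theta^T\log\hat L_n(\hat\theta_{\mathrm{spa},n}) = 0$ we have $F_n(\hat\theta_{\mathrm{spa},n}) = n^{-1}\nabla_\theta^T T_U(\hat\theta_{\mathrm{spa},n},u_n) + O(n^{-2})$. A Taylor expansion of $F_n$ at $\hat\theta_{\mathrm{spa},n}$, with Jacobian $\nabla_\theta^T\nabla_\theta\log L_n(\hat\theta_{\mathrm{spa},n}) = nH_* + o_P(n)$ and third derivatives of order $n$ so that the quadratic remainder is $O(\|\delta_n\|^2 n) = O(n^{-3})$, then yields $\delta_n = -n^{-2}H_*^{-1}\nabla_\theta^T T_U(\hat\theta_{\mathrm{spa},n},u_n) + O(n^{-3})$. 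The same manipulation applied to \eqref{eqDisc:theta_n_notation}, using $\nabla_\theta^T\nabla_\theta\log\hat L_n = nH_* + o_P(n)$ and $T_X = T_U/n$ from \eqref{eqDisc:TX_TU}, gives $\hat\delta_n = -n^{-2}H_*^{-1}\nabla_\theta^T T_U(\hat\theta_{\mathrm{spa},n},u_n) + O(n^{-3})$. Letting $\hat\theta_{\mathrm{spa},n}\to\theta_0$ and $u_n\to u_0$ and invoking continuity identifies the common leading constant as $k_0$ of \eqref{LimitingRescaledDiscrepancy}, proving \eqref{Theorem1_results1} and \eqref{Theorem1_results2}; since the two expansions share this $n^{-2}$ term, their difference is $O(n^{-3})$, which is \eqref{Theorem1_results3}.

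The hard part will be justifying the gradient identity $\nabla_\theta^T(\log L_n - \log\hat L_n) = n^{-1}\nabla_\theta^T T_U + O(n^{-2})$ uniformly for $\theta$ near $\theta_0$, which is what makes the linearization above legitimate. The scalar relation $\log P_n = \log\hat P_U + n^{-1}T_U + O(n^{-2})$ from \eqref{eqDisc:AsymptoticsTrueBigOnotation} must be shown to survive $\theta$-differentiation without loss of order. I would obtain this by differentiating the inverse-Fourier/contour-integral representation underlying the saddlepoint error expansion with respect to $\theta$, where the regularity conditions \eqref{eqDisc:RegularityCondition1}--\eqref{eqDisc:RegularityCondition2} --- in particular the control of mixed derivatives up to total order nine --- guarantee that the differentiated integrals still converge and that the remainder stays $O(n^{-2})$. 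Verifying that each successive $\theta$-derivative of the error term retains its order, uniformly on a neighbourhood of $\theta_0$ and with the $o_P$ control coming from $u_n\to u_0$, is the technical crux of both theorems.
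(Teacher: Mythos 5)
Your proposal is correct in substance and reaches all of \eqref{eqDisc:theta_true_spa_convergence_theta0} and \eqref{Theorem1_results1}--\eqref{Theorem1_results3}, but it is organized differently from the paper's proof. You linearize the true score equation about $\hat{\theta}_{\mathrm{spa},n}$ and track orders through explicit Taylor expansions: the true score evaluated at the saddlepoint MLE equals $n^{-1}\nabla_{\theta}^T T_U + O(n^{-2})$, the Hessian is $nH_* + o_P(n)$, the quadratic remainder is $O(n^{-3})$, and \eqref{Theorem1_results3} follows because the two inverse Hessians (true vs.\ saddlepoint) differ by $O(n^{-3})$. The paper instead builds one interpolating function $F(\theta;u,\epsilon,\rho)$ in which $\epsilon=1/n$ and $\rho$ are treated as \emph{independent} small parameters ($\rho=0$ gives the saddlepoint score, $\rho=1/n$ the true score), applies the implicit function theorem to obtain a smooth solution path $G_{\theta}(u,\epsilon,\rho)$, identifies $\hat{\delta}_n$ \emph{exactly} as $(1/n)\,\partial G_{\theta}/\partial\rho\,(u_n,1/n,0)$ (its Lemma~\ref{lemmaDisc:LemmaOnTheDiscrepancyFormular}), and gets \eqref{Theorem1_results3} as a Lagrange remainder $\tfrac12\rho^2\,\partial^2 G_{\theta}/\partial\rho^2$, bounded by $O(\epsilon+\rho)$ via one further implicit differentiation. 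What the paper's route buys is that every order estimate is a derivative of a single smooth function, so there is no separate bookkeeping of Hessian perturbations, inverse-matrix expansions, or third-derivative bounds; what your route buys is a more elementary argument that makes the one-step-Newton interpretation of \eqref{eqDisc:DiscrepancyFormula} transparent. Both proofs, however, stand on the same non-trivial input, which you correctly isolate as the crux: that the scalar expansion $\log P_n = \log\hat{P}_U + n^{-1}T_U + O(n^{-2})$ survives $\theta$-differentiation, i.e.\ $\nabla_{\theta}\log P_n - \nabla_{\theta}\log\hat{P}_U = q(\theta,u,1/n)$ for a sufficiently smooth $q$ with $q(\cdot,\cdot,0)=0$ and $\partial q/\partial\rho\,(\theta,u,0) = \nabla_{\theta}T_U(\theta,u)$. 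Be aware that the paper does not re-derive this from contour integrals as you propose; it imports the existence and smoothness of $q$ from \citet[Corollary~12 and Appendix~H]{Goodman2022} (likewise existence and uniqueness of the maximizers, from Theorem~2 of that work) and proves the identification $\partial q/\partial\rho = \nabla_{\theta}T_U$ by an interchange of mixed partials on the $C^2$ function $r$ (Lemma~\ref{lemmaDisc:Theorem1ContinousDiff}). If you cite those results in the same way, your argument closes; if you insist on differentiating the Fourier-inversion integrals yourself, you are re-proving that external machinery, which is a substantially larger task than the rest of your proof.
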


Here, assumption \eqref{eqDisc:Theorem2SaddlepointEqn} ensures that the saddlepoint equation has a solution. Assumptions \eqref{eqDisc:gradientLeadingOrder}--\eqref{eqDisc:HessianLeadingOrder} are the conditions for the parameter $\theta_0$ to be the non-degenerate local maximum of $\log \hat{L}_U^*(\theta, u_0)$. 
The negative definiteness of the Hessian in \eqref{eqDisc:HessianLeadingOrder} generalizes \eqref{eqDisc:Theorem1NonSingularCondition} from Theorem 1, as shown in \supsecref{labelDisc:ConditionNegDefin}. 
Importantly, Theorem 2 shows that, even under potential model misspecification,  the discrepancy results remain the same, showcasing the robustness of these findings.

Overall, for models whose parameters are identifiable from their sample mean, Theorems 1 and 2 indicate that 
the approximate discrepancy  $\hat{\delta}_n$ closely matches the true discrepancy $\delta_n$. Further, these discrepancies become negligible as $n \to \infty$.
To illustrate these results, we use the model in \secref{secDisc:GammaFixedRate}.

\begin{example}
Suppose we observe an instance of a random variable $X_{\alpha,n}$, assumed to be the sum of $n$ i.i.d. copies of $U_{\alpha} \sim \mathrm{Gamma}(\alpha, 1)$.
It follows that $X_{\alpha,n} \sim \mathrm{Gamma}(n\alpha, 1)$, and the corresponding CGF is
\[ 
    K_X(t;\alpha) = n K_U(t;\alpha) = -n\alpha \log (1 - t)~\text{for}~\alpha > 0~\text{and}~t < 1.
\]
With $K_U'(0;\alpha) = E(U_\alpha) = \alpha$, condition \eqref{eqDisc:Theorem1NonSingularCondition} is satisfied.
For a specific instance $x_n = n u_n$ of $X_{\alpha,n}$, the true and saddlepoint log-likelihood are respectively
\begin{equation*}
    \begin{aligned}
     \log L_n(\alpha; x_n) &= (n\alpha - 1)\log x_n - x_n - \log \Gamma(n\alpha), 
     \\ 
     \log \hat{L}_n(\alpha; x_n) &=  
    (n\alpha - 1) \log x_n - x_n + n \alpha +
    \left(\tfrac{1}{2}-n\alpha\right) \log (n\alpha)
    - \tfrac{1}{2}\log (2\pi),
    \end{aligned}
\end{equation*}
and we define 
$\hat{\alpha}_{\mathrm{true},n} = \mathrm{argmax }_{\alpha}~ \log L_n(\alpha; x_n)$ and $\hat{\alpha}_{\mathrm{spa},n} = \mathrm{argmax}_{\alpha}~ \hat{L}_n(\alpha; x_n)$.

Figure~\ref{figDisc:GammaFixedRateDiscrepancyAsymtotocs} illustrates the theoretical findings \eqref{Theorem1_results1}--\eqref{Theorem1_results3}, showing the true discrepancies $\delta_n = \hat{\alpha}_{\mathrm{true},n} - \hat{\alpha}_{\mathrm{spa},n}$ and their approximations $\hat{\delta}_n$ obtained using \eqref{eqDisc:theta_n_notation}. 
The plot is based on artificial data in which $u_n$ is held fixed at $u_n=u_0=1.3045$ for all $n$, so $x_n=n u_0$.
The plot demonstrates how the discrepancies and their differences
$|\delta_n - \hat{\delta}_n|$ change across a range of values for $n$, in accordance with the predictions of Theorem 1.
The true and approximated lines are visually indistinguishable and 
become closer to zero with increasing $n$. Both discrepancies exhibit approximately $n^{-2}$ decline, reflected by the slope of approximately $-2$ in the log-log plot. The difference between discrepancies, represented by the purple line, shows a steeper $n^{-3}$ decline.
\end{example}

\begin{figure}[htbp]
\centering
\figuresize{.7}
\figurebox[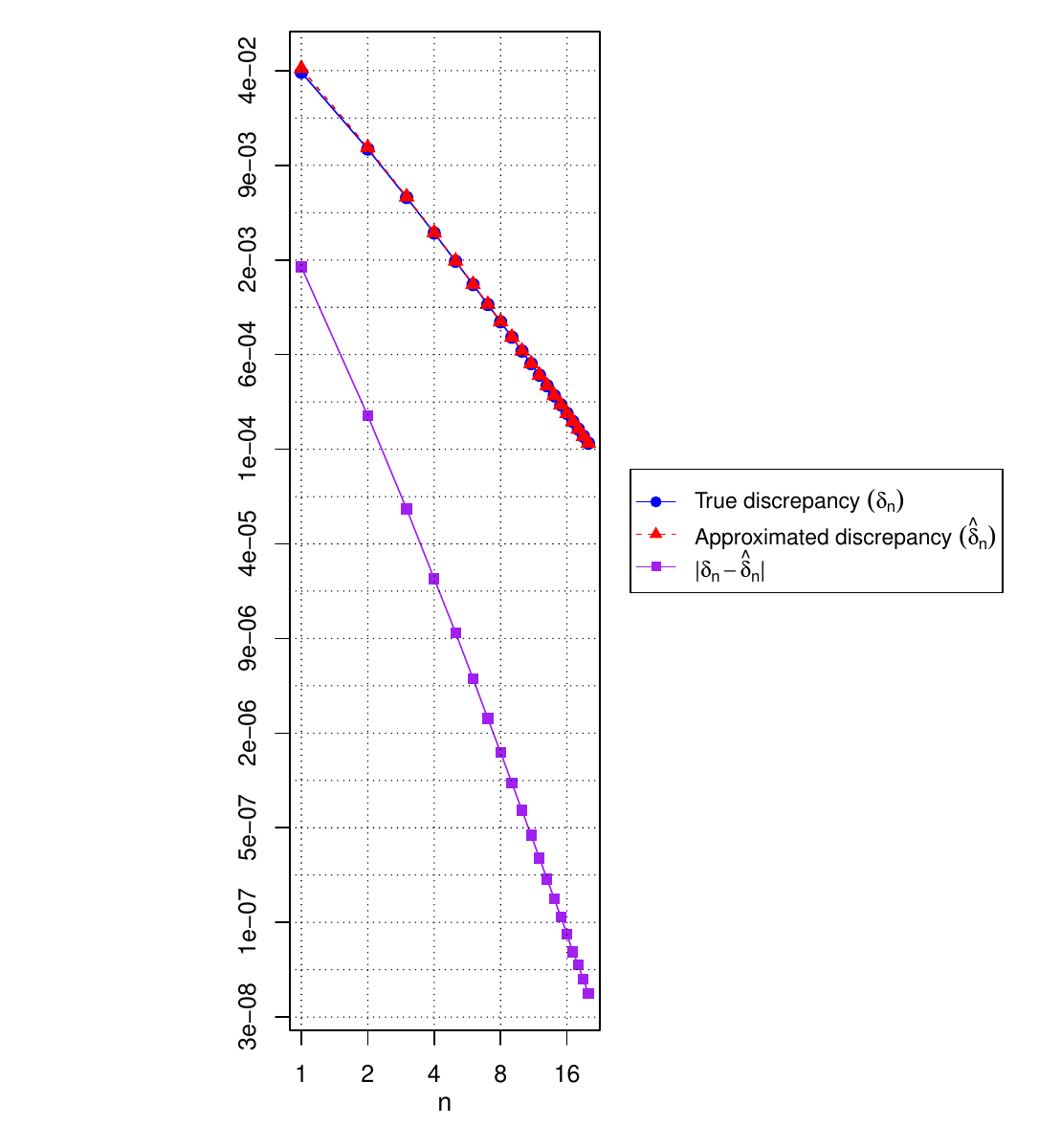]{20pc}{25pc}
\caption{
Log-log plot of true (blue) and approximated (red) discrepancies for the shape parameter estimates of a gamma distribution with a fixed rate. 
The underlying estimates are  derived from observations where $x_n = n u_0$ with $u_0 = 1.3045$.
Grid lines are plotted at equivalent spacing on horizontal and vertical axes.
}
\label{figDisc:GammaFixedRateDiscrepancyAsymtotocs}
\end{figure}

\subsection{Asymptotics when the model is partially identifiable at the level of the sample mean}
We now consider models in our second identifiability class, where only some of the parameters in $\theta$ are uniquely identified by $E(U_{\theta})$.
The following theorem is proved in \supsecref{ProofTheorem3}.

\newpage
\begin{theorem}
\label{theoremDisc:Theorem3}
Suppose the model parameter $\theta \in \mathbb{R}^{p \times 1}$ can be decomposed as
\begin{equation}
    \theta = 
    \begin{pmatrix}
        \omega \\
        \tau
    \end{pmatrix},
    ~\text{where}~ \omega \in \mathbb{R}^{p_{1} \times 1}~\text{and}~\tau \in \mathbb{R}^{p_2 \times 1},
\end{equation}
in such a way that
\begin{equation}
\label{eqDisc:PartiallyIdentifiableAssumption}
    K_U'(0; \theta) = E(U_\theta)~\text{depends only on}~\omega.
\end{equation} 
Fix $
\theta_0 = 
\smallmat{\omega_0 \\ \tau_0}
$ and set $u_0 =  K_U'(0; \theta_0) = E(U_{\theta_0})$.
Suppose that the $d \times p_1$ matrix
$
B
=
\nabla_{\omega} 
K'_U
\left(
0; 
\theta_0
\right)
$
has rank $p_1$.
Additionally, let $z_0 \in \mathbb{R}^{d \times 1}$ be a vector of fixed entries.
Consider a deterministic sequence of observations $x_n$ for $X_n$ given by 
\begin{equation}
\label{eqDisc:scalingOnObservationsTheorem3}
    x_n = n u_0 + n^{1/2}z_0. 
\end{equation}

Introduce a parametric model $Z_{v,\tau} = Bv + e$, where $v \in \mathbb{R}^{p_1 \times 1}$ and $e \in \mathbb{R}^{d\times 1}$ is normally distributed with a zero mean vector and a covariance matrix
$
K_U''
(
0;
\smallmat{\omega_0 \\ \tau}
)
$.
Suppose that this model has non-degenerate local MLEs for $v$ and $\tau$ when $Z_{v,\tau} = z_{0}$.

Under these assumptions, and if conditions \eqref{eqDisc:RegularityCondition1}--\eqref{eqDisc:RegularityCondition2} hold, then:

(a) \label{theoremItem:item_a_theorem3}
A constant 
$
\smallmat{
    k_{0,v} \\
    k_{0,\tau}
}
\in 
\mathbb{R}^{p \times 1}
$
exists such that 

\begin{align}
    &\begin{pmatrix}
        \delta_{n,\omega} \\
        \delta_{n,\tau}
    \end{pmatrix}
    = 
    \begin{pmatrix}
        n^{-3/2} (k_{0,v} + o(1)) \\
        n^{-1} (k_{0,\tau} + o(1))
    \end{pmatrix},
    \label{eqDisc:Theorem3DiscTrue}
    \\[4pt]
    &\begin{pmatrix}
        \hat{\delta}_{n,\omega} \\
        \hat{\delta}_{n,\tau}
    \end{pmatrix}
    = 
    \begin{pmatrix}
        n^{-3/2} (k_{0,v} + o(1)) \\
        n^{-1} (k_{0,\tau} + o(1))
    \end{pmatrix}
    ~\text{and}~
    \label{eqDisc:Theorem3DiscApprt}
    \\[4pt]
    &\begin{pmatrix}
        \delta_{n,\omega} \\
        \delta_{n,\tau}
    \end{pmatrix}
    =
    \begin{pmatrix}
        \hat{\delta}_{n,\omega} \\
        \hat{\delta}_{n,\tau}
    \end{pmatrix}
    +
    \begin{pmatrix}
        O(n^{-5/2})\\
        O(n^{-2})
    \end{pmatrix}
    ~\text{as}~n \to \infty.
    \label{eqDisc:Theorem3DiscDiff}
\end{align}

(b) \label{theoremItem:item_b_zero_matrix_condition} 
In addition
abbreviate 
$
\Tilde{Q} = K''_U
(0; \theta )^{-1}
$,
$
\Tilde{B} =
\nabla_{\omega} 
K'_U
( 0; \theta )
$,
and $\Tilde{J} = \Tilde{Q} - \Tilde{Q}\Tilde{B}(\Tilde{B}^T \Tilde{Q} \Tilde{B})^{-1}\Tilde{B}^T \Tilde{Q}$.
Suppose that for all $i = 1,\dots, p_2$, the $p_1 \times p_2$ matrix
\begin{equation}
    \label{eqDisc:ZeroMatrixConditionTheorem3}
    \Tilde{B}^T \Tilde{Q}
    \frac{\partial K''_U}{\partial \tau_i}
    \left(
    0; 
    \theta
    \right)
    \Tilde{J}
    ~\text{is the zero matrix for all $\theta$}.
\end{equation}
Then for an explicit constant 
\begin{equation}
\label{eqDisc:ConstantNuDiagonalMatrixTheorem3}
    k_{0,\omega}^* 
    = 
    \left(
    B^{T}
    K''_U
    (0; 
        \theta_0
    )^{-1}
    B
    \right)^{-1}
    \spacednablatranspose{\theta}{
    T_U(
        \theta_0
        ,
        u_0
        )}
        \in \mathbb{R}^{p_1 \times 1}
        ,
\end{equation}
\begin{align}
    \delta_{n,\omega} &= n^{-2}(k_{0, \omega}^* + o(1))
    \label{eqDisc:TrueOmegaDiscSpecialCase}
    \\
    \hat{\delta}_{n,\omega} &= n^{-2}(k_{0, \omega}^* + o(1))
    ~\text{and}~ 
    \label{eqDisc:ApprxOmegaDiscSpecialCase}
    \\
    \delta_{n,\omega}
    &=
    \hat{\delta}_{n,\omega} + 
    O(n^{-3})~\text{as}~n \to \infty.
    \label{eqDisc:OmegaDiscDiffSpecialCase}
\end{align}

\end{theorem}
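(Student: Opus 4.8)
The plan is to transfer the whole analysis to rescaled coordinates in which the partially identifiable problem converges to the finite-dimensional Gaussian model $Z_{v,\tau}=Bv+e$, and then to read off all three discrepancy rates from a block-structured Newton-step computation. Concretely, I would substitute $\omega=\omega_0+n^{-1/2}v$ while leaving $\tau$ unscaled, and use the sample-mean scaling $u_n=u_0+n^{-1/2}z_0$ coming from \eqref{eqDisc:scalingOnObservationsTheorem3}. The first task is to Taylor-expand the leading term $n\log\hat{L}_U^*(\theta,u_n)$ of \eqref{eqDisc:AsymptoticsFirstOrderLogLikelihood} in these coordinates. Because $K_U'(0;\theta)=E(U_\theta)$ depends only on $\omega$, at $(\omega_0,\tau,u_0)$ the saddlepoint is $\hat t_U=0$ for every $\tau$, so $\log\hat{L}_U^*$ and its $\theta$-gradient vanish there and $\nabla_\theta^T\nabla_\theta\log\hat{L}_U^*$ reduces to the block matrix whose only nonzero part is $-B^TQ_0B$ in the $\omega\omega$ corner, where $Q_0=K_U''(0;\theta_0)^{-1}$. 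The scaling $n\cdot(n^{-1/2})^2=O(1)$ then turns the quadratic-in-$v$ part into the exponent $-\tfrac12(z_0-Bv)^T\Sigma(\tau)^{-1}(z_0-Bv)$ of the Gaussian likelihood, while the $O(1)$ term $\log\hat{P}_U$ of \eqref{eqDisc:AsymptoticsFirstOrderLogLikelihood} supplies the $-\tfrac12\log\det\Sigma(\tau)$ normalizing term, with $\Sigma(\tau)=K_U''(0;(\omega_0,\tau))$. Matching these limits to the log-likelihood of $Z_{v,\tau}$ and invoking the assumed non-degeneracy of its MLE gives existence, uniqueness, and $n^{-1/2}$-localization of $\hat{\theta}_{\mathrm{spa},n}$, and via \eqref{eqDisc:AsymptoticsTrueBigOnotation} of $\hat{\theta}_{\mathrm{true},n}$, which is \eqref{eqDisc:theta_true_spa_convergence_theta0}.

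With the MLEs localized, I would obtain the discrepancies from the identity $\log L_n-\log\hat{L}_n=T_U(\theta,u_n)/n+O(n^{-2})$ implicit in \eqref{eqDisc:AsymptoticsTrueBigOnotation} and \eqref{eqDisc:TX_TU}, treating $\delta_n$ to leading order as the single Newton step $-H_n^{-1}\,\spacednablatranspose{\theta}{(T_U/n)}$ with $H_n=\nabla_\theta^T\nabla_\theta\log\hat{L}_n(\hat{\theta}_{\mathrm{spa},n};x_n)$; this is exactly the structure of $\hat{\delta}_n$ in \eqref{eqDisc:theta_n_notation}, so \eqref{eqDisc:Theorem3DiscTrue}--\eqref{eqDisc:Theorem3DiscDiff} follow from one computation once the orders are tracked. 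The crucial bookkeeping is the block scaling of $H_n$: the $\omega\omega$ block is $O(n)$ (from $nB^TQ_0B$), the $\tau\tau$ block is $O(1)$ (the leading term being flat in $\tau$, so curvature comes only from $\log\hat{P}_U$), and the $\omega\tau$ block is $O(n^{1/2})$, since the cross-Hessian vanishes at $u_0$ and is activated only by the $n^{-1/2}z_0$ fluctuation through a third derivative $\partial_u\partial_\omega\partial_\tau\log\hat{L}_U^*$ contracted against $z_0$. A Schur-complement inversion gives $[H_n^{-1}]_{\omega\omega}=O(n^{-1})$, $[H_n^{-1}]_{\omega\tau}=O(n^{-1/2})$, $[H_n^{-1}]_{\tau\tau}=O(1)$. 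Since $\spacednablatranspose{\theta}{(T_U/n)}=O(n^{-1})$ componentwise, the $\tau$-discrepancy is $O(1)\cdot O(n^{-1})=O(n^{-1})$, while the $\omega$-discrepancy is dominated by the cross term $[H_n^{-1}]_{\omega\tau}\,\nabla_\tau(T_U/n)=O(n^{-1/2})\cdot O(n^{-1})=O(n^{-3/2})$, giving the rates in \eqref{eqDisc:Theorem3DiscTrue}--\eqref{eqDisc:Theorem3DiscApprt}; the $O(n^{-2})$ remainder in the Newton linearization yields the difference rates \eqref{eqDisc:Theorem3DiscDiff}.

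For part (b) the point is that the zero-matrix condition \eqref{eqDisc:ZeroMatrixConditionTheorem3} annihilates precisely the $n^{-3/2}$ contribution isolated above. I would make the leading $O(n^{1/2})$ part of $H_{\omega\tau}$ explicit: differentiating the saddlepoint relation and the $\log\det$ term shows it is proportional to $\tilde{B}^T\tilde{Q}\,(\partial K_U''/\partial\tau_i)\,\tilde{Q}$ contracted against $z_0$, and the $z_0$-direction that feeds back into the $\tau$-estimating equation is exactly the one carried by $\tilde{J}=\tilde{Q}-\tilde{Q}\tilde{B}(\tilde{B}^T\tilde{Q}\tilde{B})^{-1}\tilde{B}^T\tilde{Q}$, which is the residual precision obtained by profiling $\omega$ out of the Gaussian model (note $\tilde{J}\tilde{B}=0$, so $\tilde{J}$ selects the fluctuation directions orthogonal, in the $\tilde{Q}$-metric, to the $\omega$-informative range of $B$). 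When $\tilde{B}^T\tilde{Q}(\partial K_U''/\partial\tau_i)\tilde{J}=0$ for all $i$, the product $[H_n^{-1}]_{\omega\tau}\,\nabla_\tau(T_U/n)$ loses its $n^{-3/2}$ part, so $\delta_{n,\omega}$ and $\hat{\delta}_{n,\omega}$ collapse to the residual $[H_n^{-1}]_{\omega\omega}\,\nabla_\omega(T_U/n)=O(n^{-2})$, yielding \eqref{eqDisc:TrueOmegaDiscSpecialCase}--\eqref{eqDisc:ApprxOmegaDiscSpecialCase}. Evaluating this residual at $(\theta_0,u_0)$, where $[H_n^{-1}]_{\omega\omega}$ reduces to $n^{-1}(B^TQ_0B)^{-1}$ and $\nabla_\omega(T_U/n)=n^{-1}\nabla_\omega T_U$, reproduces the constant $k_{0,\omega}^*$ of \eqref{eqDisc:ConstantNuDiagonalMatrixTheorem3}, matching the fully identifiable constant $k_0$ of \eqref{LimitingRescaledDiscrepancy} restricted to the $\omega$-block, and the Newton remainder again gives the $O(n^{-3})$ difference in \eqref{eqDisc:OmegaDiscDiffSpecialCase}.

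The main obstacle I anticipate is controlling the degenerate multiscale structure rigorously rather than heuristically: because the leading likelihood is flat in $\tau$, the usual quadratic-approximation and uniform-convergence arguments must be carried out in the mixed $(n^{-1/2}v,\tau)$ scaling, and the $O(n^{1/2})$ off-diagonal Hessian block must be expanded one order beyond its naive leading term in order to extract the genuine $n^{-3/2}$ coefficient and to verify that the zero-matrix condition kills it exactly, not merely to leading order. Establishing that the same cross-term cancellation occurs simultaneously for the true MLE (through the $O(n^{-2})$ control in \eqref{eqDisc:AsymptoticsTrueBigOnotation}) and for the approximated discrepancy $\hat{\delta}_n$ is the delicate part; the remaining order-counting is routine once the block scalings are in place.
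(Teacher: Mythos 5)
Your setup---the reparametrization $\omega=\omega_0+n^{-1/2}v$, the identification of the Gaussian limit $Z_{v,\tau}=Bv+e$, and the block scalings $H_{\omega\omega}=O(n)$, $H_{\omega\tau}=O(n^{1/2})$, $H_{\tau\tau}=O(1)$ with their Schur-complement inverses---matches the paper's strategy and correctly reproduces the part (a) rates for $\hat{\delta}_n$. The first genuine gap is in how you handle the \emph{true} discrepancy: you ``treat $\delta_n$ to leading order as the single Newton step $-H_n^{-1}\nabla_\theta^T T_X$'', citing \eqref{eqDisc:AsymptoticsTrueBigOnotation}. But \eqref{eqDisc:AsymptoticsTrueBigOnotation} bounds log-likelihood \emph{values}; it neither controls the gradient of $\log L_n-\log\hat{L}_n-T_X$ (differentiating an asymptotic expansion requires justification), nor yields existence, uniqueness, or localization of $\hat{\theta}_{\mathrm{true},n}$, and the assertion that $\delta_n$ equals the Newton step up to higher order is precisely conclusion \eqref{eqDisc:Theorem3DiscDiff}---it cannot be an input. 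The paper closes this by extending the rescaled score equations to a function $F(v,\tau;\epsilon,\rho)$ that is continuously differentiable down to $(\epsilon,\rho)=(0,0)$, using the smooth-in-$1/n$ gradient correction $q$ of \eqref{eqDisc:qFunctionDefinition} from \citet[Corollary~12]{Goodman2022} together with $\partial q/\partial\rho(\theta,u,0)=\nabla_\theta T_U$; the Implicit Function Theorem then makes \emph{both} MLEs values of one smooth map, $\hat{\theta}_{\mathrm{true},n}=G_\theta(n^{-1/2},n^{-1})$ and $\hat{\theta}_{\mathrm{spa},n}=G_\theta(n^{-1/2},0)$, and all of \eqref{eqDisc:Theorem3DiscTrue}--\eqref{eqDisc:Theorem3DiscDiff} follow from Taylor expansion of $G$ in $\rho$ with Lagrange remainder. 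Note also that your ``$O(n^{-2})$ Newton remainder'' yields only the $\tau$-row of \eqref{eqDisc:Theorem3DiscDiff}: the $\omega$-row requires $O(n^{-5/2})$, whose extra factor $n^{-1/2}$ is the prefactor $\epsilon$ in $\delta_{n,\omega}=\epsilon\,(G_v(\epsilon,\rho)-G_v(\epsilon,0))$ coming from the reparametrization itself, a step your sketch has no counterpart for.

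The second gap defeats part (b) as you argue it. Condition \eqref{eqDisc:ZeroMatrixConditionTheorem3} kills only the $O(n^{1/2})$ \emph{leading coefficient} of $H_{\omega\tau}$; along the estimate sequence the cross block is then generically $O(1)$, not $o(1)$, so $[H_n^{-1}]_{\omega\tau}=O(n^{-1})$ and the cross contribution $[H_n^{-1}]_{\omega\tau}\,\nabla_\tau^T T_X=O(n^{-1})\cdot O(n^{-1})=O(n^{-2})$---the \emph{same order} as the diagonal term $[H_n^{-1}]_{\omega\omega}\,\nabla_\omega^T T_X$ to which you claim the discrepancy ``collapses''. Your argument therefore cannot identify the constant as \eqref{eqDisc:ConstantNuDiagonalMatrixTheorem3}, nor deliver the $O(n^{-3})$ bound \eqref{eqDisc:OmegaDiscDiffSpecialCase}. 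What rescues this in the paper is that \eqref{eqDisc:ZeroMatrixConditionTheorem3} is assumed for \emph{all} $\theta$, not just $\theta_0$: this makes $\nabla_\tau F_\omega$ vanish identically along the entire $\epsilon=0$ solution curve $\rho\mapsto(G(0,\rho);0,\rho)$ (equation \eqref{eqDisc:gradtauFomegaVanishes}), whence $\partial G_v/\partial\rho(0,\rho)=0$ for all small $\rho$. That identity is what forces the $n^{-2}$ coefficient to be $\partial^2 G_v/\partial\rho\,\partial\epsilon(0,0)=k_{0,\omega}^*$ and gives $\partial^2 G_v/\partial\rho^2(\epsilon,\rho)=O(\epsilon+\rho)$ (Lemma~\ref{lemmaDisc:SecondDeriv}), the source of the $O(n^{-3})$ difference. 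Your proposal invokes the zero-matrix condition only at the limit point of the estimates and so cannot close either step, even though you correctly flag the ``kills it exactly, not merely to leading order'' issue as the delicate one.
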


Theorem 3 addresses scenarios where only a subset of parameters, namely $\omega$,
determines the mean, while the remaining parameters in $\tau$ affect the variance but have
no influence on the mean. Thus the exact expected value $u_0=E(U_{\theta_0})$ is not
sufficient to uniquely recover the full parameter vector $\theta_0$. When an observation
$x_n$ is given, only $\omega$ is asymptotically identifiable from the sample mean $x_n/n$,
while the remaining parameter $\tau$ is not.

The scaling \eqref{eqDisc:scalingOnObservationsTheorem3} of the observed values $x_n$ can be interpreted as holding the z-score
vector fixed as $n$ increases. Fixing the vector $z_0$ creates a limiting setting in which
we can simultaneously examine the differing approximation performance of the $\omega$ and
$\tau$ parameters. This framework for studying $n$-dependence in the approximation
discrepancy differs from customary limiting settings for studying inferential performance,
in which $z_0$ is typically subject to sampling variability.

Under the scaling \eqref{eqDisc:scalingOnObservationsTheorem3}, estimates of $\omega$ always converge to $\omega_0$ regardless of the value of $z_0$, whereas estimates of $\tau$ may depend upon $z_0$; details are given
in the proof. 
To express this, we introduce the reparametrization $\omega = \omega_0 + v n^{-1/2}$ such that $v$ captures the small deviations of $\omega$ from $\omega_0$ on the $n^{-1/2}$ scale.

The model $Z_{v,\tau}$ is introduced as a way of succinctly expressing the conditions for Theorem 3.  
Under the reparametrization $\omega=\omega_0 + v n^{-1/2}$, the
quantities $Bv$ and $\mathrm{cov}(e)$ emerge as the limiting mean and covariance of
$n^{1/2} (X_n /n - u_0)$.  
The vector expression $Bv$ is the change in $E(U_{\theta})$ when $\omega$ is given a small perturbation away from $\omega_0$ in the direction $v$,
whereas the covariance matrix $\mathrm{cov}(e)$ corresponds to $\mathrm{cov}(U_\theta)$ at $\theta=\left(\genfrac{}{}{0pt}{}{\omega_0}{\tau}\right)$.  Introducing the model $Z_{v, \tau}$ separates the roles of $v$ and $\tau$ in the sense that $v$ has no influence on the covariance; this need not be true of $\omega$ in the original model.  
Thus, in the model $Z_{v, \tau}$, the parameter $v$ affects the mean only, in a linear way; whereas the parameter $\tau$ affects the covariance only, potentially non-linearly.  
We may regard $Z_{v, \tau}$ as a type of regression model, where $B$ represents the design matrix and $v$ is the vector of regression coefficients.

The identifiability condition in Theorem 3 specifies that the observation
$Z_{v,\tau}=z_0$ must admit a non-degenerate local MLE. 
This will ensure the original model likelihood for $X_n = x_n$ also has a unique local maximizer when $n\to\infty$, so we can express the limiting behaviour for $X_n$ in terms of the approximating model $Z_{v,\tau}$ along with appropriate correction terms.  
The interpretation of $Z_{v,\tau}$ as a regression model makes it clear that the existence of an MLE for $v$ holds generally, because $B$ is of full rank and $\mathrm{cov}(e)$ is non-singular.
The condition for $\tau_0$ can alternatively be expressed as the explicit conditions \eqref{eqDisc:H3_appendixH_Goodman2022}--\eqref{eqDisc:H4_appendixH_Goodman2022} in the supplementary material, but it is difficult to provide a general and succinct interpretation of these expressions.

Theorem 3(b) strengthens the results for $\omega$ in our second identifiability class to match those
of the first class. 
The additional condition in \eqref{eqDisc:ZeroMatrixConditionTheorem3} emerges from the proof of Theorem 3 by causing certain mixed partial derivatives in $\omega$ and $\tau$ to vanish. 
Again, it appears to be hard to provide a concise interpretation of this condition.

Overall, the results in Theorem 3 show that where the model mean depends only on a subset $\omega$ of its parameters, the saddlepoint-based estimates for $\omega$ closely align with those from the true likelihood, whereas the $\tau$ estimates have larger true and approximated discrepancies. All these discrepancies, however, become negligible as $n \to \infty$. 

To illustrate these findings, we simulate from a simple model for which true and saddlepoint MLEs, and the corresponding true and estimated discrepancies, can be calculated exactly.

\begin{example}[Discrepancies in $\omega$ and $\tau$ estimates as $n$ increases]
\label{secDisc:MVGammaDiscSec}
Let $U = U_{\omega, \tau}$ be a random vector with entries $U_{ij}$ indexed by $i = 1,\dots,k$ and $j = 1,\dots,m$, all independent. 
We consider $U_{ij}$ to be drawn from a gamma distribution with shape parameter $\omega_i \tau$ and rate parameter $\tau$, independently across different $i$ and i.i.d.\ across different $j$.
This parameterization ensures that the parameter $\tau$ influences the variance but not the mean, in accordance with the setup of Theorem 3.
For $t = (t_{11},t_{12},\dots,t_{km}) \in \mathbb{R}^{1\times km}$, the CGF of $U_{\omega,\tau}$ is 
\begin{equation}
\label{eqDisc:CGF_simple_MVgamma}
K_{U}(t; \omega, \tau) 
= 
\sum_{i=1}^k 
\sum_{j=1}^m 
-\omega_i \tau \log(1 - t_{ij}/\tau),
\end{equation}
where $t_{ij} < \tau$ and $\omega = (\omega_1, \dots, \omega_k)$.

Introducing the limiting framework as defined in \eqref{eqDisc:SAR}, consider observing a sum of $n$ i.i.d copies of $U_{\omega,\tau}$, represented by $X_{n} = \sum_{r=1}^n U^{(r)}$.
The $i,j$ entry of $X_n$ follows a $\mathrm{Gamma}(n\omega_i \tau, \tau)$ distribution. The matrix
$K_{U}''(0; \omega, \tau)$ is diagonal, and 
it can be shown that matrix $\Tilde{J}$ in assumption \eqref{eqDisc:ZeroMatrixConditionTheorem3} is zero for all $\theta = (\omega, \tau)^T$.
Therefore, we examine the theoretical findings in \eqref{eqDisc:TrueOmegaDiscSpecialCase}--\eqref{eqDisc:OmegaDiscDiffSpecialCase} for $\omega$ estimates, and \eqref{eqDisc:Theorem3DiscTrue}--\eqref{eqDisc:Theorem3DiscDiff} for $\tau$ estimates, by comparing the saddlepoint estimates based on \eqref{eqDisc:CGF_simple_MVgamma} with the true MLEs for a range of values of $n$.

Figure~\ref{figDisc:MVGammaDiscPlot} illustrates these findings. 
The plots are based on a
deterministic sequence $x_n = n u_0 + n^{1/2} z_0$, 
where $u_0$ is the vector of means of $U_{ij}$ for $i=1,...,k$ and $j=1,...,m$, and $z_0$ is chosen randomly, but is held fixed over all $n$ in the plots. We set $k = 3$ and $m = 5$.
In plot (a), the lines corresponding to true and approximated discrepancies remain visually close, and become closer to zero as $n$ increases. The discrepancies for $\omega$ estimates decrease more rapidly at a rate of approximately $n^{-2}$, reflected by the slope of about $-2$, whereas the
$\tau$-based discrepancies decay with a slope of approximately $-1$.
In plot (b), we examine the differences between the true and approximated discrepancies. Here, the $\tau$-based differences show a slope of approximately $-2$, whereas the $\omega$-based differences again have a steeper slope of approximately $-3$.
Both plots show overall agreement with the predictions of Theorem 3.
\end{example}

\begin{figure}[htbp]
\centering
\figuresize{.6}
\figurebox[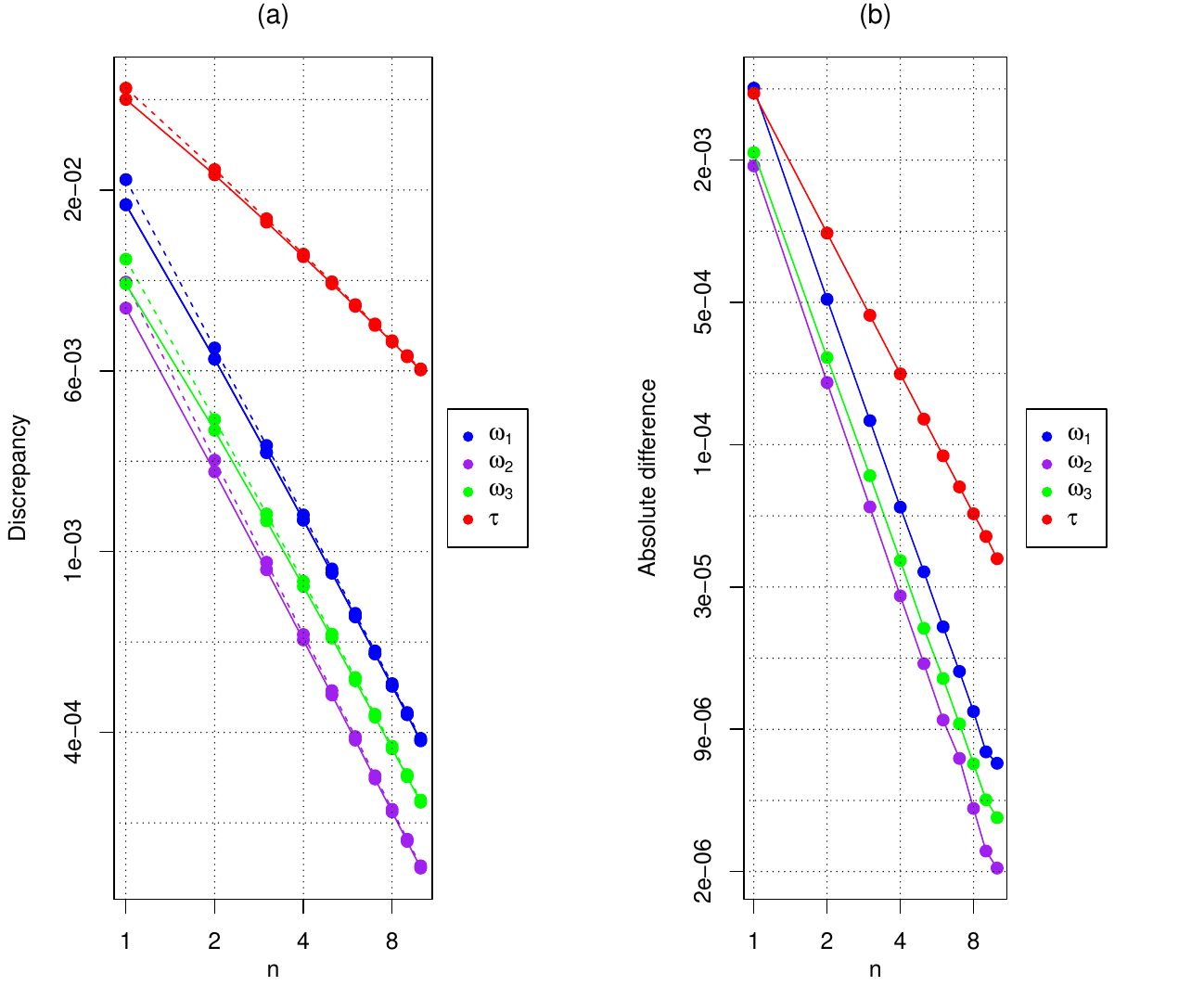]{20pc}{25pc}
\caption{
Log-log plots of discrepancies and their absolute differences for estimates of $\omega = (\omega_1, \omega_2, \omega_3) $ and $\tau$ for the model in Example~\ref{secDisc:MVGammaDiscSec}. 
The estimates are based on a deterministic sequence
$x_n = n u_0 + n^{1/2} z_0$,
where $u_0$ is the mean vector corresponding to $\omega_0=(1.5,3.6,5.8)$.
The vector $z_0$ of length $km = 15$ is assigned by simulating once from $z_0 \sim X_1 - E(X_1)$, then held fixed across all values of $n$.
Plot (a) compares true (solid lines) and approximated (dotted lines) discrepancies for $\omega$ and $\tau$ estimates, with discrepancies plotted as absolute values. 
Plot (b) shows the absolute differences in discrepancies between the true and approximated discrepancies.
Grid lines are plotted at equivalent spacing on horizontal and vertical axes.
}
\label{figDisc:MVGammaDiscPlot}
\end{figure}

\newpage
\appendix

\beginAppendixCounter

\section*{Appendix 1}
\customlabel{appendix1}{Appendix 1}
\subsection*{Proof of Theorems 1-2}

For $\theta \in \mathbb{R}^{p \times 1}$,
consider the following $p$-dimensional row-vector-valued expressions representing the gradients of rescaled log-likelihoods \eqref{eqDisc:AsyptoticTrueLoglik}, \eqref{eqDisc:AsymptoticsFirstOrderLogLikelihood}, and \eqref{eqDisc:AsymptoticsSecondOrderLogLikelihood}:
\begin{align}
      n^{-1} \nabla_{\theta} \log L_n(\theta; x_n) &= 
      \nabla_{\theta} \log \hat{L}_{U}^* (\theta, u_n) + n^{-1} \nabla_{\theta} \log P_n(\theta, u_n) ,
      \label{eqDisc:rescaledTrueLogLik}
    \\
    n^{-1} \nabla_{\theta} \log \hat{L}_n(\theta;x_n) &= 
    \nabla_{\theta}  \log \hat{L}_{U}^* (\theta, u_n) + n^{-1} \nabla_{\theta} \log \hat{P}_U(\theta, u_n) ,
    \label{eqDisc:rescaledFirstOrderLogLik}
    \\
    n^{-1} \nabla_{\theta} \log \hat{L}_n(\theta;x_n)_2 &= 
    \nabla_{\theta}  \log \hat{L}_{U}^* (\theta, u_n) + n^{-1} \nabla_{\theta} \log \hat{P}_U(\theta, u_n) +
    n^{-2} \nabla_{\theta} T_U(\theta, u_n).
    \label{eqDisc:rescaled_second_order_log_lik}
\end{align}
In \eqref{eqDisc:rescaledFirstOrderLogLik}--\eqref{eqDisc:rescaled_second_order_log_lik},
the gradients $n^{-1} \nabla_{\theta} \log \hat{L}_n(\theta;x_n)$ and $n^{-1} \nabla_{\theta} \log \hat{L}_n(\theta;x_n)_2$ depend on $n$ only as well-behaved functions of $u_n$ and $1/n$, and can be interpreted whether or not $n$ is an integer.
The term $ \nabla_{\theta} \log P_{n}(\theta, u_n)$
depends directly on $n$, which must be an integer.
However, 
\citet[Corollary~12]{Goodman2022} shows that $ \nabla_{\theta} \log P_{n}(\theta, u_n)$ is a well-behaved function of $1/n$. Specifically, there exists a continuously differentiable function $q(\theta, u, \rho)$ defined on a neighbourhood of $(\theta_0,u_0, 0)$ such that
\begin{equation}
\label{eqDisc:qFunctionDefinition}
    \nabla_\theta \log P_{n}(\theta, u)
    =
    \nabla_\theta \log \hat{P}_U(\theta, u)
    +  q(\theta,u,1/n) 
    ~
    \text{and}
    ~
    q(\theta,u,0) = 0
    .
\end{equation}
It follows that $n^{-1} \nabla_{\theta} \log L_n(\theta; x_n)$ 
can also be expressed as a well-behaved function of $u_n$ and $1/n$, even when $n$ is not an integer.

We prove the results in Theorem 2, which encompasses Theorem 1 as a special case, by seeking the value of $\theta$ that solves $n^{-1} \nabla_{\theta} \log \hat{L}_n(\theta;x_n) = 0$ instead of $n^{-1} \nabla_{\theta} \log L_n(\theta; x_n) = 0$ in the limit $n \to \infty$.

\begin{proof}[Proof of Theorem~\ref{theoremDisc:Theorem2}]
\label{prf:Theorem2Proof}
Define the function
\begin{equation}
\label{eqDisc:Proof1_1}
\begin{aligned}
    F(\theta; u, \epsilon, \rho) 
    &= 
    \nabla_{\theta}^T  \log \hat{L}_{U}^* (\theta, u) 
    + \epsilon \nabla_{\theta}^T \log \hat{P}_U(\theta, u) 
    + 
    \epsilon
    \left(
        \rho
        \spacednablatranspose{\theta}{T_U(\theta, u)}
          + 
        h(\theta, u, \rho)^T
    \right)
    ,
    \\
    h(\theta, u, \rho) 
    &= 
    q(\theta, u, \rho) - \rho \nabla_{\theta} T_U(\theta, u).
\end{aligned} 
\end{equation}

Suppose we set $\epsilon = 1/n$ and $u = u_n$.
For $\rho=0$, $F(\theta; u_n, 1/n, 0)$ reduces to the rescaled gradient of the first-order saddlepoint log-likelihood in \eqref{eqDisc:rescaledFirstOrderLogLik}, while for $\rho = 1/n$, $F(\theta; u_n, 1/n, 1/n)$ reduces to the rescaled gradient of the true log-likelihood \eqref{eqDisc:rescaledTrueLogLik}.
From \citet[Theorem 2]{Goodman2022}, the true and saddlepoint log-likelihood functions have unique  maximizers provided that $n$ is sufficiently large and $(\theta,u_n)$ is restricted to a sufficiently small neighbourhood of $(\theta_0,u_0)$. It follows that  $\hat{\theta}_{\mathrm{spa}, n}$ and $\hat{\theta}_{\mathrm{true}, n}$, 
are solutions of $F(\theta; u_n, 1/n, 0) = 0$ and $F(\theta; u_n, 1/n, 1/n) = 0$, respectively.

\begin{lemma}
\label{lemmaDisc:Theorem1ContinousDiff}
Let $F(\theta; u, \epsilon, \rho)$ be defined as in \eqref{eqDisc:Proof1_1}. 
For $(\theta; u, \epsilon, \rho)$ in a sufficiently small neighbourhood of $(\theta_0; u_0, 0, 0)$, the following hold:

    \lemmasubitem{lemmaDisc:continousDiff_of_h_dh}
    $\nabla_{\theta} T_U(\theta, u)$ and $\nabla_{\theta} F(\theta; u, \epsilon, \rho)$ are continuous, and both $h(\theta, u, \rho)$ and its derivative $\frac{\partial h}{\partial \rho}(\theta, u, \rho)$ are continuously differentiable.

    \lemmasubitem{lemma1Disc:NonSinglarity}
    $F(\theta_0; u_0, 0, 0) = 0$,  $\nabla_{\theta} F(\theta_0; u_0, 0, 0)$ is non-singular, and $\nabla_{\theta} F(\theta; u, \epsilon, \rho)$ remains non-singular in the neighbourhood of $(\theta_0; u_0, 0, 0)$. Additionally, there exists a continuously differentiable function $G_{\theta}(u,\epsilon,\rho)$ satisfying 
        \begin{equation}
        \label{eqDisc:Proof1_ImplictFunction}
        F(G_{\theta}(u,\epsilon,\rho); u, \epsilon, \rho) = 0,
        \end{equation}
    and $\frac{\partial G_{\theta}}{\partial \rho} (u, \epsilon, \rho)$ is also continuously differentiable.
    Indeed, $T_U$, $h$ and $F$ are of class $C^3$, i.e., three times continuously differentiable. 
    
    \lemmasubitem{lemma1Disc:h_grad_h_continouslydiff}
    $\frac{\partial h}{\partial \rho}(\theta, u, 0) = 0$ for all $\theta$, $u$.
    
\end{lemma}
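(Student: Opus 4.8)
The plan is to handle the three parts in order of logical dependence, establishing the smoothness statements of part~(a) first, since they underwrite both the implicit function theorem in part~(b) and the differentiation of $q$ in part~(c). For part~(a), I would trace the smoothness of each building block of $F$. The saddlepoint $\hat{t}_U(\theta;u)$ is defined implicitly by $K'_U(\hat{t};\theta)=u$; finiteness of the MGF in a neighbourhood of the origin licenses differentiation under the expectation, so $K_U$ is smooth in $(t,\theta)$ to the order guaranteed by \eqref{eqDisc:RegularityCondition2}, and the implicit function theorem transfers this smoothness to $\hat{t}_U$. The term $T_U(\theta,u)$ is assembled from third- and fourth-order $t$-derivatives of $K_U$ and from $Q=(K''_U)^{-1}$, all evaluated at $\hat{t}_U$; composing with the smooth $\hat{t}_U$ and using that matrix inversion is smooth on the invertible matrices, which applies by the positive-definiteness in \eqref{eqDisc:NonSingularCovAndMean}, shows that $T_U$ and hence $\nabla_{\theta} T_U$ are as smooth as these derivatives permit. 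The function $q(\theta,u,\rho)$ is continuously differentiable, indeed $C^3$ once enough derivatives of $K_U$ are available, by \citet[Corollary~12]{Goodman2022}. Since $h=q-\rho\,\nabla_{\theta} T_U$ is a smooth combination of these, both $h$ and $\partial h/\partial\rho$ inherit continuous differentiability, and $\nabla_{\theta} F$ is continuous because every summand of $F$ is continuously differentiable in $\theta$.

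For part~(b), I would evaluate $F$ at the base point. Setting $\epsilon=0$ removes the two $\epsilon$-weighted terms, so $F(\theta;u,0,0)=\nabla_{\theta}^T\log\hat{L}_{U}^*(\theta,u)$, and assumption \eqref{eqDisc:gradientLeadingOrder} gives $F(\theta_0;u_0,0,0)=0$. Differentiating, $\nabla_{\theta} F(\theta;u,0,0)=\nabla_{\theta}^T\nabla_{\theta}\log\hat{L}_{U}^*(\theta,u)$ is the Hessian, which at $(\theta_0,u_0)$ is negative definite by \eqref{eqDisc:HessianLeadingOrder} and hence non-singular. Continuity of $\nabla_{\theta} F$ from part~(a), together with the openness of the set of non-singular matrices, extends non-singularity to a neighbourhood of $(\theta_0;u_0,0,0)$. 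The implicit function theorem then produces the continuously differentiable solution map $G_{\theta}(u,\epsilon,\rho)$ satisfying $F(G_{\theta}(u,\epsilon,\rho);u,\epsilon,\rho)=0$; the higher-smoothness claim, that $T_U$, $h$, $F$, and therefore $G_{\theta}$ and $\partial G_{\theta}/\partial\rho$, are of class $C^3$, follows by feeding the $C^3$ smoothness of the building blocks into the $C^3$ version of the implicit function theorem.

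For part~(c), the key input is the expansion $\log P_n(\theta,u)=\log\hat{P}_U(\theta,u)+n^{-1}T_U(\theta,u)+O(n^{-2})$ recorded after \eqref{eqDisc:AsyptoticTrueLoglik}, whose $\theta$-gradient, combined with the defining relation \eqref{eqDisc:qFunctionDefinition}, gives $q(\theta,u,1/n)=n^{-1}\nabla_{\theta} T_U(\theta,u)+O(n^{-2})$, the gradient remainder staying of order $n^{-2}$ by the same corollary. Because $q$ is continuously differentiable in $\rho$ with $q(\theta,u,0)=0$, its one-sided derivative at $\rho=0$ may be computed as the limit of difference quotients along the sequence $\rho=1/n$, yielding $\frac{\partial q}{\partial\rho}(\theta,u,0)=\lim_{n\to\infty} n\,q(\theta,u,1/n)=\nabla_{\theta} T_U(\theta,u)$. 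Substituting into $\frac{\partial h}{\partial\rho}=\frac{\partial q}{\partial\rho}-\nabla_{\theta} T_U$ and setting $\rho=0$ gives the claimed vanishing for all $\theta,u$.

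The main obstacle is the smoothness bookkeeping: one must verify that the regularity bound \eqref{eqDisc:RegularityCondition2}, which controls $\theta$-derivatives up to order four and mixed $(\theta,t)$-derivatives up to combined order nine, genuinely supplies enough derivatives of $K_U$ for the fourth-order $t$-object $T_U$ to be $C^3$, and for \citet[Corollary~12]{Goodman2022} to deliver a $C^3$ function $q$. A secondary subtlety, relevant to part~(c), is that $q$ is a priori tied to the integer sequence $\rho=1/n$, so reading off $\frac{\partial q}{\partial\rho}(\theta,u,0)$ from the $n\to\infty$ expansion is legitimate only because that corollary extends $q$ to a genuinely differentiable function of the continuous variable $\rho$ near $0$.
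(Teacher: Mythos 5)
Your parts (a) and (b) are correct and follow essentially the same route as the paper's proof: smoothness of the building blocks ($\hat{t}_U$ via the implicit function theorem, $T_U$ as a composition of CGF derivatives and matrix inversion, $q$ via \citet[Corollary~12]{Goodman2022}), then evaluation of $F$ at the base point so that assumptions \eqref{eqDisc:gradientLeadingOrder}--\eqref{eqDisc:HessianLeadingOrder} become $F(\theta_0;u_0,0,0)=0$ and non-singularity of $\nabla_\theta F(\theta_0;u_0,0,0)$, followed by the (here $C^3$) implicit function theorem.

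Part (c), however, has a genuine gap. You derive $q(\theta,u,1/n)=n^{-1}\nabla_\theta T_U(\theta,u)+O(n^{-2})$ by taking the $\theta$-gradient of the scalar expansion $\log P_n-\log\hat{P}_U=n^{-1}T_U+O(n^{-2})$ and asserting that the remainder ``stays of order $n^{-2}$ by the same corollary.'' Term-by-term differentiation of an asymptotic expansion is not valid in general: an $O(n^{-2})$ bound on a scalar function gives no bound whatsoever on its $\theta$-gradient, and Corollary~12 of \citet{Goodman2022}, as it is used in this paper, supplies only the existence and continuous differentiability of $q$ in \eqref{eqDisc:qFunctionDefinition} with $q(\theta,u,0)=0$ --- it does not identify $\frac{\partial q}{\partial\rho}(\theta,u,0)$, which is precisely the content of part (c). As written, your step assumes the conclusion. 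The paper closes this gap by a different mechanism: it introduces the scalar function $r(\theta,u,\rho)$ (Proposition~10 of \citet{Goodman2022}, shown $C^2$ in its Appendix~H) satisfying $\log P_n(\theta,u)-\log\hat{P}_U(\theta,u)=r(\theta,u,1/n)$ and $r(\theta,u,0)=0$. The scalar expansion after \eqref{eqDisc:AsyptoticTrueLoglik} then identifies $\frac{\partial r}{\partial\rho}(\theta,u,0)=T_U(\theta,u)$; comparing definitions gives $\nabla_\theta r(\theta,u,1/n)=q(\theta,u,1/n)$; and, crucially, the $C^2$ regularity of $r$ permits the interchange of mixed partials, so that $\frac{\partial q}{\partial\rho}(\theta,u,0)=\nabla_\theta\bigl(\frac{\partial r}{\partial\rho}(\theta,u,0)\bigr)=\nabla_\theta T_U(\theta,u)$. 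This interchange-of-derivatives argument through the $C^2$ primitive $r$ is the key idea missing from your proposal; without it (or an equivalent uniformity argument showing that the gradients $n\,q(\theta,u,1/n)$ actually converge to $\nabla_\theta T_U$), the vanishing of $\frac{\partial h}{\partial\rho}(\theta,u,0)$ does not follow.
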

We defer the proof to 
\supsecref{SecDisc:Theorem1ApproximationFormulaProof}.

Since $F(\theta; u_n, 1/n, 1/n)$ and $F(\theta; u_n, 1/n, 0)$ reduce to the rescaled gradients of the true log-likelihood and 
first-order saddlepoint log-likelihoods respectively, we see that
\begin{equation}
\label{eqDisc:theta_spa_true_G_theta}
    G_{\theta}(u_n,1/n,1/n) = \hat{\theta}_{\mathrm{true},n}
    ~\text{and}~
    G_{\theta}(u_n,1/n,0) = \hat{\theta}_{\mathrm{spa},n}.
\end{equation}
Additionally, 
from 
\lemmaitemref{lemmaDisc:Theorem1ContinousDiff}{lemma1Disc:NonSinglarity}, 
$G_{\theta}(u,\epsilon,\rho)$ is the solution of $\theta$ in the neighbourhood of $\theta_0$, thus at the base point $(u_0,0,0)$,
$G_{\theta}(u_0,0,0) = \theta_0$ since $F(\theta_0; u_0, 0, 0) = 0$.

We seek to examine the discrepancy, denoted as $\delta_n$, in $\theta$ estimates derived from solving $F(\theta; u_n, 1/n, 0) = 0$ instead of $F(\theta; u_n, 1/n, 1/n) = 0$. By \eqref{eqDisc:theta_spa_true_G_theta}, this difference is 
\begin{equation}
\label{eqDisc:Proof1_trueDiscrepancy}
\delta_n =
G_{\theta}(u_n,1/n,1/n) - G_{\theta}(u_n,1/n,0)
.
\end{equation}

\begin{lemma}
\label{lemmaDisc:LemmaOnTheDiscrepancyFormular} 
The approximated discrepancy $\hat{\delta}_n$ as defined in \eqref{eqDisc:DiscrepancyFormula} is 
\begin{equation}
\label{eqDisc:Proof1_4}
\hat{\delta}_n = \left(\frac{1}{n} \right)\frac{\partial G_{\theta}}{\partial \rho} (u_n,1/n, 0).
\end{equation}
\end{lemma}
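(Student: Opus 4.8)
The plan is to obtain \eqref{eqDisc:Proof1_4} by implicit differentiation of the defining relation \eqref{eqDisc:Proof1_ImplictFunction} with respect to $\rho$, evaluated at the base point $(u,\epsilon,\rho)=(u_n,1/n,0)$, and then to match the two resulting factors against the inverse-Hessian and gradient factors appearing in the definition \eqref{eqDisc:DiscrepancyFormula} of $\hat\delta_n$. Since \lemmaitemref{lemmaDisc:Theorem1ContinousDiff}{lemma1Disc:NonSinglarity} guarantees that $\nabla_{\theta} F$ is non-singular near $(\theta_0;u_0,0,0)$ and that $G_\theta$ is continuously differentiable in $\rho$, differentiating $F(G_\theta(u,\epsilon,\rho);u,\epsilon,\rho)=0$ in $\rho$ via the chain rule yields
\[
\frac{\partial G_\theta}{\partial\rho}(u,\epsilon,\rho) = -\{\nabla_{\theta} F(G_\theta;u,\epsilon,\rho)\}^{-1}\frac{\partial F}{\partial\rho}(G_\theta;u,\epsilon,\rho).
\]
It then remains to evaluate the two factors at $(u_n,1/n,0)$, where $G_\theta(u_n,1/n,0)=\hat\theta_{\mathrm{spa},n}$ by \eqref{eqDisc:theta_spa_true_G_theta}.

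For the gradient factor, I would differentiate the definition \eqref{eqDisc:Proof1_1} of $F$ directly in $\rho$; the $\rho\,\spacednablatranspose{\theta}{T_U(\theta,u)}$ term contributes $\epsilon\,\spacednablatranspose{\theta}{T_U(\theta,u)}$, while the $h$-term contributes $\epsilon\bigl(\tfrac{\partial h}{\partial\rho}(\theta,u,\rho)\bigr)^T$, which vanishes at $\rho=0$ by \lemmaitemref{lemmaDisc:Theorem1ContinousDiff}{lemma1Disc:h_grad_h_continouslydiff}. Hence at the base point the gradient factor equals $n^{-1}\spacednablatranspose{\theta}{T_U(\hat\theta_{\mathrm{spa},n},u_n)}$. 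For the Hessian factor, I would observe that because $h(\theta,u,0)=q(\theta,u,0)=0$ by \eqref{eqDisc:qFunctionDefinition}, the function $F(\theta;u_n,1/n,0)$ reduces to the transpose of the rescaled first-order gradient in \eqref{eqDisc:rescaledFirstOrderLogLik}, namely $n^{-1}\spacednablatranspose{\theta}{\log\hat L_n(\theta;x_n)}$. Differentiating once more in $\theta$ and using symmetry of the Hessian gives $\nabla_{\theta} F(\theta;u_n,1/n,0) = n^{-1}\nabla_{\theta}^T\nabla_{\theta}\log\hat L_n(\theta;x_n)$.

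Finally I would combine the two factors. Substituting them into the implicit-differentiation identity, the explicit factor $n^{-1}$ in the gradient factor cancels against the inversion of $n^{-1}\nabla_{\theta}^T\nabla_{\theta}\log\hat L_n$, leaving
\[
\frac{\partial G_\theta}{\partial\rho}(u_n,1/n,0) = -\{\nabla_{\theta}^T\nabla_{\theta}\log\hat L_n(\hat\theta_{\mathrm{spa},n};x_n)\}^{-1}\spacednablatranspose{\theta}{T_U(\hat\theta_{\mathrm{spa},n},u_n)}.
\]
Using the relation \eqref{eqDisc:TX_TU}, $T_X(\theta,x_n)=T_U(\theta,u_n)/n$, so $\spacednablatranspose{\theta}{T_U}=n\,\spacednablatranspose{\theta}{T_X}$, which introduces one further factor of $n$ and produces $\partial G_\theta/\partial\rho(u_n,1/n,0)=n\hat\delta_n$ with $\hat\delta_n$ as in \eqref{eqDisc:theta_n_notation}. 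Dividing by $n$ gives \eqref{eqDisc:Proof1_4}.

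This lemma is essentially a careful verification rather than a deep argument: the main obstacle is purely the bookkeeping of the several factors of $n$ and the transpose conventions, together with confirming that both the $q$-term (through $h(\theta,u,0)=0$) and the $\rho$-derivative of $h$ vanish at $\rho=0$, so that no spurious contributions survive and the two surviving factors align exactly with those in \eqref{eqDisc:DiscrepancyFormula}.
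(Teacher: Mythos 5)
Your proposal is correct and follows essentially the same route as the paper: implicit differentiation of $F(G_{\theta}(u,\epsilon,\rho);u,\epsilon,\rho)=0$ in $\rho$, using \lemmaitemref{lemmaDisc:Theorem1ContinousDiff}{lemma1Disc:h_grad_h_continouslydiff} to reduce $\frac{\partial F}{\partial \rho}$ at $\rho=0$ to $\epsilon\spacednablatranspose{\theta}{T_U}$, identifying $F(\theta;u_n,1/n,0)$ with the rescaled first-order gradient \eqref{eqDisc:rescaledFirstOrderLogLik} so that $n\nabla_{\theta}F$ becomes the Hessian of $\log\hat{L}_n$, and converting $T_U$ to $T_X$ via \eqref{eqDisc:TX_TU}. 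The factor-of-$n$ bookkeeping and the transpose conventions all check out, matching the paper's computation exactly.
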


We will use \eqref{eqDisc:Proof1_trueDiscrepancy} and \eqref{eqDisc:Proof1_4} to verify our results. 
First, we will prove Lemma~\ref{lemmaDisc:LemmaOnTheDiscrepancyFormular} by examining $\epsilon \frac{\partial G_{\theta}}{\partial \rho} (u,\epsilon, 0)$, then continue to verify \eqref{Theorem1_results1}--\eqref{Theorem1_results3}.

By implicit partial differentiation of \eqref{eqDisc:Proof1_ImplictFunction},
\begin{equation}
\label{eqDisc:Theorem1Implicit_fully_expressed}
     \frac{\partial F}{\partial \rho} (G_{\theta}(u,\epsilon,\rho); u, \epsilon, \rho)
    +
    \nabla_{\theta} F(G_{\theta}(u,\epsilon,\rho); u, \epsilon, \rho)
    \frac{\partial G_{\theta}}{\partial \rho} (u,\epsilon,\rho) = 0.
\end{equation}
Rearranging  this, we get
\begin{equation}
\label{eqDisc:dG_theta_d_rho}
    \frac{\partial G_{\theta}}{\partial \rho} (u,\epsilon,\rho) =
     - (\nabla_{\theta} F(G_{\theta}(u,\epsilon,\rho); u, \epsilon, \rho))^{-1}
     \frac{\partial F}{\partial \rho} (G_{\theta}(u,\epsilon,\rho); u, \epsilon, \rho)
     . 
\end{equation}
Using \eqref{eqDisc:Proof1_1} and \lemmaitemref{lemmaDisc:Theorem1ContinousDiff}{lemma1Disc:h_grad_h_continouslydiff},
$
    \frac{\partial F}{\partial \rho} (\theta; u, \epsilon, 0)
    =
    \epsilon
    \spacednablatranspose{\theta}{T_U(\theta, u)}
    .    
$
Thus for $\rho=0$, \eqref{eqDisc:dG_theta_d_rho} simplifies to
\begin{equation}
    \frac{\partial G_{\theta}}{\partial \rho} (u,\epsilon,0) 
    =
    - \epsilon (\nabla_{\theta} F(G_{\theta}(u,\epsilon,0); u, \epsilon, 0))^{-1}
    \spacednablatranspose{\theta}{T_U(G_{\theta}(u,\epsilon,0), u)}
     \label{eqDisc:Proof1_DerivativeImplicitFn}
     .
\end{equation}
We can now determine $\epsilon \tfrac{\partial G_{\theta}}{\partial \rho} (u,\epsilon, 0)$ to be
\begin{equation}
    \epsilon \frac{\partial G_{\theta}}{\partial \rho} (u,\epsilon,0) 
    =
    - \epsilon^2 (\nabla_{\theta} F(G_{\theta}(u,\epsilon,0); u, \epsilon, 0))^{-1}
    \spacednablatranspose{\theta}{T_U(G_{\theta}(u,\epsilon,0), u)}
     \label{eqDisc:Proof1_DerivativeImplicitFn_epsilon}
     .
\end{equation}

\begin{proof}[Proof of Lemma~\ref{lemmaDisc:LemmaOnTheDiscrepancyFormular}]
    Setting $\epsilon = 1/n$ and $u = u_n$ in \eqref{eqDisc:Proof1_DerivativeImplicitFn_epsilon}, we have
    \begin{equation*}
    \begin{aligned}
        \left(
        \frac{1}{n}
        \right)
        \frac{\partial G_{\theta}}{\partial \rho} (u_n,1/n, 0)
        &= 
        - \left( 
              \frac{1}{n^2} 
          \right)
          \left(
                \nabla_{\theta} F(G_{\theta}(u_n,1/n,0); u_n, 1/n, 0)
          \right)^{-1} 
          \spacednablatranspose{\theta}{T_U(G_{\theta}(u_n,1/n,0), u_n)}
        \\
        &=
        -  \left(
                n \nabla_{\theta} F(\hat{\theta}_{\mathrm{spa},n}; u_n, 1/n, 0)
          \right)^{-1} 
          \left(\frac{1}{n}\right)
          \spacednablatranspose{\theta}{T_U(\hat{\theta}_{\mathrm{spa},n}, u_n)}
          . 
    \end{aligned}
    \end{equation*}
    Using \eqref{eqDisc:TX_TU}, 
    $\nabla_{\theta} T_U(\hat{\theta}_{\mathrm{spa},n}, u_n) /n = \nabla_{\theta} T_X(\hat{\theta}_{\mathrm{spa},n}, x_n)$, and by \eqref{eqDisc:Proof1_1} and \eqref{eqDisc:rescaledFirstOrderLogLik}, 
    $
    n F(\hat{\theta}_{\mathrm{spa},n}; u_n, 1/n, 0) = 
    \nabla_\theta^T
    \log \hat{L}_n(\hat{\theta}_{\mathrm{spa},n};x_n)
    $. Therefore, 
    \begin{equation}
        \left(
        \frac{1}{n}
        \right)
        \frac{\partial G_{\theta}}{\partial \rho} (u_n,1/n, 0)
        =
        -(\nabla_{\theta}^T \nabla_{\theta} \log \hat{L}_n(\hat{\theta}_{\mathrm{spa},n};x_n))^{-1}
        \spacednablatranspose{\theta}{T_X(\hat{\theta}_{\mathrm{spa},n}, x_n)} 
        =
        \hat{\delta}_n.
    \end{equation}
    This completes the proof of Lemma~\ref{lemmaDisc:LemmaOnTheDiscrepancyFormular}. 
\end{proof}

To verify \eqref{Theorem1_results2}, we proceed as follows. 
\begin{equation}
\label{eqDisc:delta_n_F_G_u_n}
\begin{aligned}
    \hat{\delta}_n
    &=
    \left(
    \frac{1}{n}
    \right)
    \frac{\partial G_{\theta}}{\partial \rho} (u_n,1/n, 0),
    \\
    &=
    -
    \left(
    \frac{1}{n^2}
    \right)
    \nabla_{\theta} F(G_{\theta}(u_n,1/n,0); u_n, 1/n, 0)^{-1}
    \spacednablatranspose{\theta}{T_U(G_{\theta}(u_n,1/n,0), u_n)} 
     .
\end{aligned}
\end{equation}
By assumption, $u_n = u_0 + o_P(1)$,
and since $G_\theta$ is continuous by 
\lemmaitemref{lemmaDisc:Theorem1ContinousDiff}{lemma1Disc:NonSinglarity},
\begin{equation}
    \hat{\theta}_{\mathrm{spa},n}
    =
    G_{\theta}(u_n,1/n,0) 
    = 
    G_{\theta}(u_0,0,0) + o_P(1)
    =
    \theta_0 + o_P(1).
\end{equation}
In fact, both
$\hat{\theta}_{\mathrm{spa},n} - \theta_0$
and 
$\hat{\theta}_{\mathrm{true},n} - \theta_0$ 
are $o_P(n^{-1/2})$, as shown in \citet[Theorem 5]{Goodman2022}. 
Indeed, 
$n^{1/2}(\hat{\theta}_{\mathrm{spa},n} - \theta_0)$
converges to a normal distribution with a known covariance matrix.
Similarly, since 
$\nabla_{\theta} F$ and $\nabla_{\theta} T_U$ are continuous, \eqref{eqDisc:delta_n_F_G_u_n} 
can be expressed as
\[
    \hat{\delta}_n
    =
    -
    \left(
    \frac{1}{n^2}
    \right)
    \left(
    \nabla_{\theta} F(\theta_0; u_0, 0, 0)^{-1}
    \spacednablatranspose{\theta}{T_U(\theta_0, u_0)}
     +
    o_P(1)
     \right).
\]
Recognizing that $F(\theta_0; u_0, 0, 0) = \nabla_{\theta}^T \log \hat{L}_{U}^* (\theta_0, u_0)$,  it follows that $\hat{\delta}_n = n^{-2} (k_0 + o_P(1))$.

Next, to assess the order of the difference between the true discrepancy and its approximation, we utilize Lagrange's form for the Taylor series remainder, see for instance \cite[Chapter~5]{WhittakerWatson1996}. Motivated by \eqref{eqDisc:Proof1_trueDiscrepancy} and \eqref{eqDisc:Proof1_4}, we write
\begin{align}
    G_{\theta}(u,\epsilon,\rho) - G_{\theta}(u,\epsilon,0) - \rho \cdot \frac{\partial G_{\theta}}{\partial \rho} (u,\epsilon, 0)
    =
        \tfrac{1}{2}\rho^2
        \cdot
        \frac{\partial^2 G_{\theta}}{\partial \rho^2} (u,\epsilon,\rho^{*})~
        \text{for some}~\rho^{*} \in \left[ 0,\rho \right].
        \label{eqDisc:LangrangeTaylorRemainderTheorem1}
\end{align} 
We examine the leading order behaviour of $\frac{\partial^2 G_{\theta}}{\partial \rho^2} (u,\epsilon,\rho)$ in the neighbourhood of $(u,0,0)$.
The next order derivative of \eqref{eqDisc:Theorem1Implicit_fully_expressed} yields
\begin{equation}
\label{eqDisc:SecondImpicitDerTheorem1}
\begin{aligned}
    &
    \frac{\partial^2 F}{\partial \rho^2}
    (G_{\theta}(u,\epsilon,\rho); u, \epsilon, \rho)
    +
    2
    \left(
        \bigl(\tfrac{\partial }{\partial \rho}
        \nabla_{\theta} F \bigr)
        (G_{\theta}(u,\epsilon,\rho); u, \epsilon, \rho)
        \frac{\partial G_{\theta}}{\partial \rho}(u, \epsilon, \rho) 
    \right)
    \\
    &\qquad\qquad
    +
    \sum_{i = 1}^p 
    \bigl(
    \tfrac{\partial}{\partial \theta_i} \nabla_{\theta} F
    \bigr) 
    (G_{\theta}(u,\epsilon,\rho); u, \epsilon, \rho)
    \frac{\partial G_{\theta}}{\partial \rho}(u, \epsilon, \rho)
    \frac{\partial G_{\theta}}{\partial \rho}(u, \epsilon, \rho)_i
    \\
    &\qquad\qquad\qquad
    +
    \nabla_{\theta} F
    (G_{\theta}(u,\epsilon,\rho); u, \epsilon, \rho)
    \frac{\partial^2 G_{\theta}}{\partial \rho^2}(u, \epsilon, \rho)
    = 0
\end{aligned}
\end{equation}
For all values of $(\theta, u)$ in the neighbourhood of $(\theta_0, u_0)$,
\begin{equation}
\label{eqDisc:d2Fdrho2}
     \frac{\partial^2 F}{\partial \rho^2}(\theta; u, \epsilon, \rho)
    =
    \epsilon
    \left(
    \frac{\partial^2 h}{\partial \rho^2}(\theta, u, \rho)
    \right)^T
    = O(\epsilon).
\end{equation}
The second and third terms involve the continuously differentiable function $\frac{\partial G_{\theta}}{\partial \rho}(u, \epsilon, \rho)$.
From \eqref{eqDisc:Proof1_DerivativeImplicitFn}, $\frac{\partial G_{\theta}}{\partial \rho}(u, \epsilon, 0) = O(\epsilon)$ for all $u$ in the neighbourhood of $u_0$. It follows that in the same neighbourhood,
\begin{equation}
\label{eqDisc:dGdrho_order_epsilon_rho}
    \frac{\partial G_{\theta}}{\partial \rho}(u, \epsilon, \rho) = O(\epsilon + \rho).
\end{equation}
Using \eqref{eqDisc:d2Fdrho2} and \eqref{eqDisc:dGdrho_order_epsilon_rho} in \eqref{eqDisc:SecondImpicitDerTheorem1}, and noting that $\nabla_{\theta} F(\theta; u, \epsilon, \rho)$ remains non-singular in the vicinity of $(\theta_0; u_0, 0, 0)$,  we conclude that 
\[
    \tfrac{\partial^2 G_{\theta}}{\partial \rho^2}(u, \epsilon, \rho) = O(\epsilon+\rho)~\text{for all values}~(u, \epsilon, \rho)
\]
in the neighbourhood of $(u_0, 0,0)$.
Therefore, we write 
\eqref{eqDisc:LangrangeTaylorRemainderTheorem1} as
\begin{align*}
    \tfrac{1}{2}\rho^2
    \cdot
    \tfrac{\partial^2 G_{\theta}}{\partial \rho^2} (u,\epsilon,\rho^{*})
    =
    O(\rho^2(\epsilon + \rho^*)).
\end{align*}
Evaluating this at $u = u_n$, $\epsilon = 1/n$ and $\rho = 1/n$, yields
$
    \delta_n - \hat{\delta}_n
    =
    O(n^{-3})
$, which verifies \eqref{Theorem1_results3}. 

Finally, \eqref{Theorem1_results1} follows by combining \eqref{Theorem1_results2} and \eqref{Theorem1_results3}.
\end{proof}
\begin{proof}[Proof of Theorem~\ref{theoremDisc:Theorem1}]
\label{prf:Theorem1Proof}
Given a well-specified model, where $ u_0 = E(U_{\theta_0})$, the solution of the saddlepoint equation is such that $t_0 = 0$. 
It follows that $\nabla_{\theta} K_U(0;\theta_0) = 0$ and $\nabla_{\theta}^T \nabla_{\theta} K_U(0; \theta_0) = 0$, since $K_U(0; \theta) = 0$ for any $\theta$. 
Hence, assuming a well-specified model implies assumption \eqref{eqDisc:gradientLeadingOrder} in Theorem 2. 
Additionally, as also shown in \supsecref{labelDisc:ConditionNegDefin}, condition \eqref{eqDisc:HessianLeadingOrder} matches \eqref{eqDisc:Theorem1NonSingularCondition} in the well-specified case.
Consequently, Theorem~\ref{theoremDisc:Theorem1} follows from Theorem~\ref{theoremDisc:Theorem2}. 
\end{proof}

\newpage
\bibliographystyle{apalike}
\bibliography{paper-ref}

\newpage
\section*{Supplementary material for ``What is the price of approximation? The saddlepoint approximation to a likelihood function''}

\beginSupplementCounter

\section{General notations and derivatives}
\label{secDisc:GenralNotatiosAndDer}

In addition to the notation from
Section~\ref*{secDisc:Notations}, this section collects some common gradients and Hessians relevant to the theoretical and computational components used in the subsequent derivations. 

As with $\nabla_\theta$, gradient vectors with respect to $t$ are denoted $\nabla_t$, and are vectors with the shape of $t^T$. For a scalar-valued function $f(\theta,t)$, where $\theta \in \mathbb{R}^{p \times 1}$ and $t \in \mathbb{R}^{1 \times d}$, we have $\spacednabla{t}{f(\theta,t)} \in \mathbb{R}^{d \times 1}$. Similarly,  
$\spacednablatranspose{t}{f(\theta,t)} \in \mathbb{R}^{1 \times d}$.
The function $\hat{t}(\theta; x)$ is a $d$-dimensional row-vector-valued function, where
$x \in \mathbb{R}^{d \times 1}$. 
To align with our conventions, we convert its values to column vectors, and express its gradient with respect to $\theta$ as $\nabla_{\theta} \hat{t}^T(\theta; x) \in \mathbb{R}^{d \times p}$. Additionally, we introduce the gradient operation $\nabla_x$, analogous to $\nabla_{\theta}$ and $\nabla_{t}$, resulting in $\nabla_x \hat{t}^T(\theta; x) \in \mathbb{R}^{d \times d}$.

Given that    
\begin{equation}
\label{eqDisc:SaddEqnAppendix}
    K'_{X}(\hat{t}(\theta; x);\theta) = x,
\end{equation}
implicit partial differentiation yields
\begin{align}
\notag
\nabla_\theta K'_{X}\left(\hat{t}(\theta; x);\theta\right) + 
K''_{X}\left(\hat{t}(\theta; x);\theta\right) \nabla_\theta \hat{t}^T(\theta; x) &= 0,
\\ \label{eqDisc:deltheta_t}
\nabla_\theta \hat{t}^T(\theta; x) 
&= 
- 
K''_{X}\left(\hat{t}(\theta; x);\theta\right)^{-1} \nabla_\theta K'_{X}\left(\hat{t}(\theta; x);\theta\right)
.
\end{align}
Using \eqref{eqDisc:SaddEqnAppendix}, we also have that
\begin{equation}
\label{eqDisc:dxdt}
    \begin{aligned}
        I_{d \times d} = \nabla_x x =  
        \nabla_x 
        \left(
            K'_{X}(\hat{t}(\theta; x);\theta)
        \right)
        &=
        K''_{X}(\hat{t}(\theta; x);\theta)
        \nabla_x \hat{t}^T(\theta; x),
        \\
        \nabla_x \hat{t}^T(\theta; x) 
        &= K''_{X}(\hat{t}(\theta; x);\theta)^{-1}.
    \end{aligned}
\end{equation}
Using \eqref{eqDisc:deltheta_t},
it follows that the first and second-order gradients of the first two terms of the saddlepoint log-likelihood \eqref{eqDisc:SaddlepointLogLikelihood} has the following expressions.
\begin{equation}
\label{eqDisc:Grad_of_the_log_L_U_star}
    \begin{aligned}
        &
        \hspace{-2em}
        \nabla_\theta 
        \left(
            K_{X}(\hat{t}(\theta;x);\theta) - \hat{t}(\theta;x) x
        \right)
        \\
        &\qquad
        =
        K'_{X}(\hat{t}(\theta; x);\theta)^T \nabla_\theta \hat{t}^T(\theta; x) 
        +
        \nabla_\theta K_{X}(\hat{t}(\theta; x);\theta) - 
        x^T \nabla_\theta \hat{t}^T(\theta; x)
        \\
        &\qquad
        =
        \nabla_\theta K_{X}\left(\hat{t}(\theta; x);\theta\right).
    \end{aligned}
\end{equation}
\begin{equation}\label{eqDisc:Second-orderFirst2Terms}
    \begin{aligned}
        & 
        \nabla_\theta
        \left(
        \left(
        \nabla_\theta 
        \left(
            K_{X}(\hat{t}(\theta;x);\theta) - \hat{t}(\theta;x) x
        \right)
        \right)^T
        \right)
        \\
        &~~
        =
        \nabla_\theta
        \left(
            \nabla_\theta K_{X}\left(\hat{t}(\theta; x);\theta\right)
        \right)^T
        \\
        &~~
        =
        \nabla_{\theta}^T \nabla_\theta K_{X}\left(\hat{t}(\theta; x);\theta\right) + 
        \left(\nabla_\theta K'_{X}\left(\hat{t}(\theta; x);\theta\right) \right)^T \nabla_\theta \hat{t}^T(\theta; x)
        \\
        &~~
        =
        \nabla_{\theta}^T \nabla_\theta K_{X}\left(\hat{t}(\theta; x);\theta\right) 
       - \left(\nabla_\theta K'_{X}\left(\hat{t}(\theta; x);\theta\right) \right)^T
       K''_{X}\left(\hat{t}(\theta; x);\theta\right)^{-1} \nabla_\theta K'_{X}\left(\hat{t}(\theta; x);\theta\right).
    \end{aligned}
\end{equation}
Note that $K_X(0;\theta)=0$ for all $\theta$, and therefore 
\begin{equation}
\label{eqDisc:grad_at_0_of_K}
    \nabla_{\theta} K_X(0;\theta) = 0
    ~\text{and}~
    \nabla_{\theta}^T \nabla_{\theta} K_X(0;\theta) = 0.
\end{equation}

Finally, for an invertible matrix-valued  
function $R(s)$ with scalar argument $s$, the
the derivative of its inverse is expressed as
\[ 
\frac{\partial }{\partial s} R\left(s\right)^{-1} 
= 
-R(s)^{-1} R'(s) R(s)^{-1}. 
\]
Analysing this at a scalar level for individual matrix elements, we obtain
\begin{equation}\label{eqDisc:GradInverseMat}
    \frac{\partial }{\partial s} R\left(s\right)_{ij}^{-1} = 
    -\sum_{k,l} R(s)^{-1}_{ik} 
    R'_{kl}(s)
    R(s)^{-1}_{lj},
\end{equation}
where $R(s)_{ij}^{-1}$ means the $i,j$ entry of the inverse of the matrix $R(s)$.

\section{Assumption \texorpdfstring{\eqref{eqDisc:HessianLeadingOrder}}{\ref*{eqDisc:HessianLeadingOrder}} and proof of Lemma \texorpdfstring{\ref{lemmaDisc:Theorem1ContinousDiff}}{\ref*{lemmaDisc:Theorem1ContinousDiff}}}

\subsection{Assumption \texorpdfstring{\eqref{eqDisc:HessianLeadingOrder}}{\ref*{eqDisc:HessianLeadingOrder}} in a well-specified model}
\label{labelDisc:ConditionNegDefin}

We begin by showing that assumption
\eqref{eqDisc:Theorem1NonSingularCondition} from Theorem 1 is a special case of 
the negative definiteness of the Hessian 
in assumption \eqref{eqDisc:HessianLeadingOrder} from Theorem 2.
Theorem 1 uses
$u_0 = K_U'(0;\theta_0) = E(U_{\theta_0})$, so that
$\hat{t}_U(\theta_0;u_0) = t_0 = 0$.
By \eqref{eqDisc:Second-orderFirst2Terms} and \eqref{eqDisc:grad_at_0_of_K}, the Hessian in \eqref{eqDisc:HessianLeadingOrder} resolves to 
\begin{equation}
    \label{eqDisc:WellSpecifiedNegDefiniteCondition}
    \nabla_{\theta}^T \nabla_{\theta} \log \hat{L}_{U}^* (\theta_0, u_0)= - (\nabla_{\theta} K'_{U}(0;\theta_0))^T K''_{U}(0;\theta_0)^{-1}
    \nabla_{\theta} K'_{U}(0;\theta_0).
\end{equation}
The right-hand side of \eqref{eqDisc:WellSpecifiedNegDefiniteCondition} is negative definite 
if and only if the condition \eqref{eqDisc:Theorem1NonSingularCondition} holds,
because $K''_{U}(0;\theta_0)$ is positive definite by assumption \eqref{eqDisc:NonSingularCovAndMean}.
Therefore, \eqref{eqDisc:HessianLeadingOrder} reduces to \eqref{eqDisc:Theorem1NonSingularCondition} when $u_0=E(U_{\theta_0})$, as claimed.

\subsection{Proof of Lemma~\ref{lemmaDisc:Theorem1ContinousDiff}}
\label{SecDisc:Theorem1ApproximationFormulaProof}
\begin{proof}[Proof of Lemma~\ref{lemmaDisc:Theorem1ContinousDiff}]
Recall 
\begin{align}
F(\theta; u, \epsilon, \rho) 
    = 
    \nabla_{\theta}^T  \log \hat{L}_{U}^* (\theta, u) + \epsilon \nabla_{\theta}^T \log \hat{P}_U(\theta, u) 
    + 
    \epsilon
    \left(
        \rho \spacednablatranspose{\theta}{T_U(\theta, u)}
        + h(\theta, u, \rho)^T
    \right),
\end{align}
where $h(\theta, u, \rho)$ is defined in terms of a continuously differentiable $q$ as follows:
\[
h(\theta, u, \rho) = 
q(\theta, u, \rho) 
- 
\rho \nabla_{\theta} T_U(\theta, u),
~\text{with $q(\theta, u, 0) = 0$}.
\]
We will first show that $F$ is continuously differentiable with respect to its parameters in the neighborhood $(\theta;u,\epsilon,\rho) = (\theta_0; u_0, 0,0)$.

The dependence on $u$ arises through the saddlepoint equation solution, $\hat{t}_U(\theta;u)$, whose derivatives with respect to $\theta$ and $u$ are continuously differentiable functions, with forms analogous to \eqref{eqDisc:deltheta_t} and \eqref{eqDisc:dxdt}, respectively.

The functions $\log \hat{L}_{U}^* (\theta, u)$ and $\log \hat{P}_U(\theta, u)$ are continuously differentiable (see \eqref{eqDisc:Second-orderFirst2Terms} and \citep[Appendix~B.2]{Goodman2022}).
By the definition in \eqref{eqDisc:funcT}, $T_U(\theta, u)$ is a composition of continuously differentiable functions of the CGF, and thus, is itself continuously differentiable (see \supsecref{appendixDerOfT} for the derivative with respect to $\theta$).
Additionally, $h(\theta, u, \rho)$ is continuously differentiable due to the continuous differentiability of $q$ from \citet[Corollary~12]{Goodman2022}. Therefore, $\nabla_\theta F(\theta; u, \epsilon, \rho)$ is a composition of continuous functions.  
Furthermore, since $q$ is twice continuously differentiable \citep[Appendix~H]{Goodman2022}, it follows that  
\[
    \tfrac{\partial h}{\partial \rho}(\theta, u, \rho)
    ~\text{and}~
    \tfrac{\partial F}{\partial \rho}(\theta, u, \rho)
    ~\text{are also continuously differentiable.}
\]
This completes the proof for part (\ref{lemmaDisc:continousDiff_of_h_dh}).

For part (\ref{lemma1Disc:NonSinglarity}), 
we refer to assumptions \eqref{eqDisc:gradientLeadingOrder} and \eqref{eqDisc:HessianLeadingOrder}.
First, we observe that assumptions \eqref{eqDisc:gradientLeadingOrder} and \eqref{eqDisc:HessianLeadingOrder} correspond directly to $F(\theta_0; u_0, 0, 0) = 0$ and the non-singularity of $\nabla_{\theta} F(\theta_0; u_0, 0, 0)$, respectively. 
And since $F$ is continuously differentiable in the neighborhood of $(\theta_0; u_0, 0, 0)$, then by the Implicit Function Theorem, a continuously differentiable function $G_{\theta}(u,\epsilon, \rho)$ exists in the neighborhood of $(u_0, 0, 0)$ such that 
\[
    F(G_{\theta}(u, \epsilon, \rho); u, \epsilon, \rho) = 0.
\] 
Further, the continuous differentiability of $\tfrac{\partial G_{\theta}}{\partial \rho} (u, \epsilon, \rho)$ follows from that of $\tfrac{\partial F}{\partial \rho}(\theta, u, \rho)$.

More generally, the argument of \citet[Appendix H]{Goodman2022} can be repeated to show that if the bounds \eqref{eqDisc:RegularityCondition2} hold then the function $q$ is $C^3$ instead of merely continuously differentiable. 
Similarly, $K_U$ and the derivatives of $K_U$ that appear in $\hat{t}_U$, $T_U$,
$\log \hat{L}_{U}^*$, and $\log \hat{P}_U$ are also $C^3$.
Hence $F$, and therefore also $G_{\theta}$, are also $C^3$ instead of merely continuously differentiable. We remark that we will make only limited use of the stronger $C^3$ property, and most of the proofs use only continuous differentiability.

For part (\ref{lemma1Disc:h_grad_h_continouslydiff}), we use a function $r(\theta,u,\rho)$ of the form introduced in Proposition 10 of \citet{Goodman2022} and shown to be $C^2$ 
in Appendix~H of that work. The function is defined such that
\begin{equation}
\label{eqDisc:r_def}
    \log P_{n}(\theta, u) - \log \hat{P}_U(\theta, u) = r(\theta, u, 1/n), ~ \text{and} ~ r(\theta, u, 0) = 0.
\end{equation}
Comparing this with \eqref{eqDisc:AsyptoticTrueLoglik}, we obtain
\begin{equation}
\label{eqDisc:r_expansion}
    r(\theta, u, 1/n) = \tfrac{1}{n}T_U(\theta, u) + O(1/n^2).
\end{equation}
Since $r(\theta,u,\rho)$ is continuously differentiable, the derivative $\frac{\partial r}{\partial \rho} (\theta, u, \rho)$ exists, thus 
\begin{equation}
\label{eqDisc:r_derivative}
    \lim_{n \to \infty} \frac{r(\theta, u, 1/n) - r(\theta, u, 0)}{1/n}
    =
    T_U(\theta, u)
    ,
\end{equation}
implies that 
$
    \frac{\partial r}{\partial \rho} (\theta, u, 0) = T_U(\theta, u).
$

Now, taking the gradient of \eqref{eqDisc:r_def} with respect to $\theta$, we have
\begin{equation}
\label{eqDisc:grad_theta_and_q}
    \nabla_\theta \log P_{n}(\theta, u) - \nabla_\theta \log \hat{P}_U(\theta, u) = 
    \spacednabla{\theta}{r}(\theta, u, 1/n)
\end{equation}
The left-hand side of \eqref{eqDisc:grad_theta_and_q} defines $q(\theta,u,1/n)$ (see \eqref{eqDisc:qFunctionDefinition}), therefore,
\begin{equation}
\label{eq:grad_r_equals_q}
\spacednabla{\theta}{r}(\theta, u, 1/n) = q(\theta, u, 1/n).
\end{equation}
Since $q$ is continuously differentiable, at $\rho = 0$,
\begin{equation}
\begin{aligned}
    \frac{\partial q}{\partial \rho} (\theta, u, 0)
    &=
    \lim_{n \to \infty} \frac{q(\theta, u, 1/n) - q(\theta, u, 0)}{1/n},
    \\
    &=
    \lim_{n \to \infty} \frac{\spacednabla{\theta}{r}(\theta, u, 1/n) - \spacednabla{\theta}{r}(\theta, u, 0)}{1/n},
    \\
    &=
    \frac{\partial}{\partial \rho} 
    \left(
    \spacednabla{\theta}{r}(\theta, u, \rho) 
    \right)
    \Big|_{\rho=0}.
\end{aligned}
\end{equation}
As $r$ is $C^2$, we can interchange the order of differentiation
\begin{equation}
\label{eqDisc:dq_d_rho_equals_grad_T}
    \frac{\partial q}{\partial \rho} (\theta, u, 0) 
    = 
    \frac{\partial}{\partial \rho} 
    \left(
    \spacednabla{\theta}{r}(\theta, u, \rho) 
    \right)
    \Big|_{\rho=0}
    =
    \nabla_\theta
    \left(
    \frac{\partial r}{\partial \rho} (\theta, u, 0)
    \right)
    =
    \nabla_\theta T_U(\theta, u).
\end{equation}

To complete the proof, we differentiate $h(\theta,u,\rho)$ and evaluate it at $\rho = 0$. 
\begin{align}
    \frac{\partial h}{\partial \rho}(\theta, u, 0) = \frac{\partial q}{\partial \rho} (\theta, u, 0) - \nabla_{\theta} T_U(\theta, u),
\end{align}
and using \eqref{eqDisc:dq_d_rho_equals_grad_T}, $\frac{\partial h}{\partial \rho}(\theta, u, 0) = 0$.
\end{proof}

\section{Proof of Theorem 3}
\label{ProofTheorem3}
We aim at verifying the discrepancy results \eqref{eqDisc:Theorem3DiscTrue}-\eqref{eqDisc:OmegaDiscDiffSpecialCase}. We validate the assumption of non-degeneracy for the local MLEs $v_0$ and $\tau_0$, then, analogously to the proof of Theorem 1, apply the Implicit Function Theorem and obtain the discrepancy results.

Motivated by the transformations $x_n = n u_n$ and $u_n = u_0 + n^{-1/2}z_0$ in \eqref{eqDisc:scalingOnObservationsTheorem3}, we write 
\begin{equation}
\label{eqDisc:ObservationXTheorem3}
x = \epsilon^{-2} u,
~~
u = u_0 + \epsilon z_0,
~\text{and}~
\theta = 
    \begin{pmatrix}
    \omega \\ 
    \tau 
    \end{pmatrix}
    =
    \begin{pmatrix}
    \omega_0 + \epsilon v \\ 
    \tau 
    \end{pmatrix}
.
\end{equation}
where $\epsilon$ is a scaling parameter.
Throughout this proof we assume that the quantities $x$, $\epsilon$, $u$, $u_0$, $z_0$, $\theta$, $v$, and $\tau$ are related as in \eqref{eqDisc:ObservationXTheorem3}.
For this proof, in contrast to the proof of Theorems 2, we now use $\epsilon=n^{-1/2}$ instead of $n^{-1}$ while we continue to use $\rho=n^{-1}$.
Recall that in the proof of Theorem 2, 
we used likelihood functions of a single parameter vector, $\theta$, and achieved well-behaved functions in the limit $n \to \infty$ by applying a scaling of $n^{-1}$.
However, the current results introduces two components of the parameter space, $\omega$ and $\tau$.
Notably, the scaling for these components differs because $\omega$ depends on the unknown parameter $v$, while $\tau$ does not.

Consider the following column vector-valued functions defined for $\epsilon \neq 0$, of dimensions $p_1$ and $p_2$, respectively.
\begin{equation}
\label{eqDisc:F_omega}
    \begin{aligned}
        F_{\omega}(v, \tau; \epsilon, \rho) 
        = 
        \epsilon^{-1} \nabla_{\omega}^T 
        & \log \hat{L}_{U}^* (\theta, u) + 
        \epsilon \nabla_{\omega}^T \log \hat{P}_U(\theta,u) +
        \epsilon
        \left(
        \rho \nabla_{\omega}^T T_U(\theta, u) + g_{\omega}(\theta,u,\rho)^T
        \right),
        \\
        & g_{\omega}(\theta,u,\rho) =  q_{\omega}(\theta,u,\rho) - \rho \nabla_{\omega} T_U (\theta, u).
    \end{aligned}
\end{equation}

\begin{equation}
\label{eqDisc:F_tau}
    \begin{aligned}
        F_{\tau}(v, \tau; \epsilon, \rho) = 
        \epsilon^{-2} \nabla_{\tau}^T 
        & \log \hat{L}_{U}^* (\theta, u) + 
        \nabla_{\tau}^T \log \hat{P}_U(\theta,u) +
        \rho \nabla_{\tau}^T T_U(\theta, u) + g_{\tau}(\theta,u,\rho)^T,
        \\
        & g_{\tau}(\theta,u,\rho) =  q_{\tau}(\theta,u,\rho) - \rho \nabla_{\tau} T_U (\theta, u).
    \end{aligned}
\end{equation}
The terms $q_\omega$ and $q_\tau$ mean the continuously differentiable vector-valued function $q(\theta, u, \rho)$ defined in \eqref{eqDisc:qFunctionDefinition}, split into two blocks of dimension $p_1$ and $p_2$, respectively.

Suppose we set $\epsilon = 1/n^{1/2}$ and $\rho = 1/n$. $F_{\omega}$ and $F_{\tau}$ 
relate to the gradients of the true log-likelihood 
$\log L_n(\theta; x_n)$ as follows:
\begin{equation}
\begin{aligned}
    F_{\omega}(v, \tau; 1/n^{1/2}, 1/n) &= \tfrac{1}{n^{1/2} } \nabla_{\omega}^T \log L_n(\theta; x_n) \quad \text{(rescaled gradient)},
    \\
    F_{\tau}(v, \tau; 1/n^{1/2}, 1/n) &= \nabla_{\tau}^T \log L_n(\theta; x_n) \quad \text{(unscaled gradient)}.
\end{aligned}
\end{equation}
On the other hand, for $\rho = 0$, these functions relate to the first-order saddlepoint log-likelihoods:
\begin{equation}
\label{eqDisc:F_omega_F_tau_log_xn}
\begin{aligned}
    F_{\omega}(v, \tau; 1/n^{1/2}, 0) &= \tfrac{1}{n^{1/2} } \nabla_{\omega}^T \log \hat{L}_n(\theta;x_n) \quad \text{(rescaled gradient)},
    \\
    F_{\tau}(v, \tau; 1/n^{1/2}, 0) &= \nabla_{\tau}^T \log \hat{L}_n(\theta;x_n) \quad \text{(unscaled gradient)}.
\end{aligned}
\end{equation}
Therefore, to examine the discrepancies in $\omega$ and $\tau$ estimates, we focus on identifying the critical points for which $F_{\omega} = 0$ and $F_{\tau} = 0$ when $(\epsilon, \rho) = (1/n^{1/2}, 0)$ instead of $(1/n^{1/2}, 1/n)$ as $n \to \infty$.
\begin{proof}[Proof of Theorem~\ref{theoremDisc:Theorem3}]
\label{prf:Theorem3Proof}
Define a continuously differentiable function $F$ by
\begin{equation}\label{eqDisc:F_combined_omega_tau}
    F
    \left(
    \smallmat{v \\ \tau } 
    ; \epsilon, \rho
    \right) 
    = 
    \begin{pmatrix}
        F_{\omega}(v, \tau; \epsilon, \rho) \\
        F_{\tau}(v, \tau; \epsilon, \rho)
    \end{pmatrix}
    \in \mathbb{R}^{(p_1 + p_2) \times 1}
    .
\end{equation}

\begin{lemma}
\label{lemmaDisc:Theorem3ContinousDiff}
The function $F$ as defined in \eqref{eqDisc:F_combined_omega_tau}, can be extended to a neighborhood of 
$
    \left(
    \smallmat{v_0 \\ \tau_0 } 
    ; 0, 0
    \right)
$ in such a way that the following hold:

    \lemmasubitem{lemmaItemNonSingularTheorem3ContinousDiff} 
    The function $F$ is continuously differentiable,
    $
    F \left( \smallmat{v_0 \\ \tau_0 } ; 0, 0 \right) = 0$, and
    $
    \nabla_{v, \tau} F \left( \smallmat{v_0 \\ \tau_0 } ; 0, 0 \right)
    $ is non-singular.

    \lemmasubitem{lemmaItem_g_omega_tau_differentiable}
    $(\theta_0, u_0, 0)$, 
    $g_{\omega}(\theta,u,\rho)$,
    $\tfrac{\partial g_{\omega}}{\partial \rho}(\theta,u,\rho)$,
    $g_{\tau}(\theta,u,\rho)$,
    and
    $\frac{\partial g_{\tau}}{\partial \rho}(\theta,u,\rho)$ are continuously differentiable. Further, $\frac{\partial g_{\omega}}{\partial \rho}(\theta,u,0) = 0$ and $\frac{\partial g_{\tau}}{\partial \rho}(\theta,u,0) = 0$.

    \lemmasubitem{lemmaItem_dG_drho_continuous_diff_Theorem3}
    There exist a continuously differentiable function
    $
    G(\epsilon, \rho)
    =
    \begin{pmatrix}
        G_v(\epsilon, \rho) \\ 
        G_{\tau}(\epsilon, \rho) 
    \end{pmatrix} 
    \in \mathbb{R}^{(p_1 + p_2) \times 1}
    $ 
    defined in the neighborhood of $(\epsilon, \rho) = (0, 0)$ such that 
    \begin{equation}\label{eqDisc:F_combined_omega_tau_at_0}
        F(G(\epsilon, \rho); \epsilon, \rho) = 
        \begin{pmatrix}
            F_{\omega}(G_v(\epsilon, \rho), G_{\tau}(\epsilon, \rho); \epsilon, \rho) \\
            F_{\tau}(G_v(\epsilon, \rho), G_{\tau}(\epsilon, \rho); \epsilon, \rho)
        \end{pmatrix}
        = 0,
    \end{equation}
    and further $\frac{\partial G}{\partial \rho}(\epsilon, \rho)$ is continuously differentiable in the same neighborhood.
    Indeed, $F$ and $G$ are of class $C^3$, i.e., three times continuously differentiable.

    \lemmasubitem{lemmaItem_grad_tau_F_omega}
    Abbreviate $Q = K''_U\left(0; \theta_0\right)^{-1}$ and
    let $W$ be the $d \times p_2$ matrix whose $i^{\mathrm{th}}$ column is
    \[
    \frac{\partial K''_U}{\partial \tau_{i}}
        \left(
        0; 
        \theta_0
        \right)
        \left(
        Q - Q B 
        (B^{T} Q B)^{-1}
        B^T Q
        \right)z_0
    \]
    for all $i = 1,\dots,p_2$. 
    Then
    $
        \nabla_{\tau} F_{\omega}
        \left(
        v_0, \tau_0; 0, 0
        \right) 
        =
        - 
        B^{T}
        Q
        W.
    $
    Additionally, 
    $\nabla_{v} F_{\omega}(v_0, \tau_0; 0, 0) 
    = 
    -
    B^{T}
    K''_U
    \left(0; 
        \theta_0
    \right)^{-1}
    B
    $.
\end{lemma}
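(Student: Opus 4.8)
The plan is to follow the template of the proof of Theorems 1--2: reparametrize according to \eqref{eqDisc:ObservationXTheorem3}, verify that the stacked score $F$ of \eqref{eqDisc:F_combined_omega_tau} extends to a $C^3$ function across $\epsilon=0$ with a non-degenerate zero at the base point, and then invoke the Implicit Function Theorem. The essential new difficulty is the pair of singular prefactors $\epsilon^{-1}$ and $\epsilon^{-2}$ in \eqref{eqDisc:F_omega}--\eqref{eqDisc:F_tau}, so the first task is to show that the corresponding gradients of $\log\hat L_U^*$ vanish to the matching order in $\epsilon$.

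The cleanest route is to work with a single scalar potential. Writing $\ell^*(v,\tau,\epsilon)=\log\hat L_U^*(\theta,u)$ with $\theta,u$ given by \eqref{eqDisc:ObservationXTheorem3}, the identity $\epsilon\nabla_\omega=\nabla_v$ shows that $\epsilon^{-1}\nabla_\omega^T\log\hat L_U^*=\nabla_v^T(\epsilon^{-2}\ell^*)$ and $\epsilon^{-2}\nabla_\tau^T\log\hat L_U^*=\nabla_\tau^T(\epsilon^{-2}\ell^*)$, so both singular blocks arise from the single quantity $\tilde\ell=\epsilon^{-2}\ell^*$. I would then show $\tilde\ell$ extends to a $C^3$ function of $(v,\tau,\epsilon)$: since $K_U(0;\theta)=0$ we have $\ell^*(v,\tau,0)=0$, and since $\nabla_\theta\log\hat L_U^*=\nabla_\theta K_U(\hat t;\theta)$ by \eqref{eqDisc:Grad_of_the_log_L_U_star} vanishes at $\hat t=0$ (by \eqref{eqDisc:grad_at_0_of_K}) while a direct computation gives $\nabla_u\log\hat L_U^*=-\hat t$, which also vanishes there, we get $\partial_\epsilon\ell^*(v,\tau,0)=0$. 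Crucially, assumption \eqref{eqDisc:PartiallyIdentifiableAssumption} guarantees $\hat t_U(\smallmat{\omega_0\\\tau};u_0)=0$ for every $\tau$, so the base point is legitimate uniformly in $\tau$. A Hadamard-type division lemma (dividing a $C^5$ function with vanishing zeroth and first $\epsilon$-coefficients by $\epsilon^2$) then yields a $C^3$ extension, the required degree of smoothness coming from the high-order bounds \eqref{eqDisc:RegularityCondition2} exactly as in the $C^3$ upgrade of Lemma~\ref{lemmaDisc:Theorem1ContinousDiff}.

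With $\tilde\ell$ in hand, $F$ is assembled from $\tilde\ell$, $\log\hat P_U$, $\rho T_U$, and the blocks $g_\omega,g_\tau$ of $h=q-\rho\nabla_\theta T_U$; the continuous differentiability of these blocks and the vanishing $\partial_\rho g_\omega(\theta,u,0)=\partial_\rho g_\tau(\theta,u,0)=0$ are inherited from Lemma~\ref{lemmaDisc:Theorem1ContinousDiff} and \eqref{eqDisc:dq_d_rho_equals_grad_T}, giving part~(\ref{lemmaItem_g_omega_tau_differentiable}). To obtain the base-point value and non-singularity for part~(\ref{lemmaItemNonSingularTheorem3ContinousDiff}), I would compute the leading coefficient $\tilde\ell(v,\tau,0)=\tfrac12\partial_\epsilon^2\ell^*(v,\tau,0)$; a differentiation of the saddlepoint relation $K_U'(\hat t;\theta)=u$ (giving $\partial_\epsilon\hat t^T|_0=K_U''(0;\theta)^{-1}(z_0-Bv)$) together with \eqref{eqDisc:grad_at_0_of_K} yields $\tilde\ell(v,\tau,0)=-\tfrac12(z_0-Bv)^T K_U''(0;\smallmat{\omega_0\\\tau})^{-1}(z_0-Bv)$. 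Adding $\log\hat P_U=-\tfrac12\log\det K_U''(0;\smallmat{\omega_0\\\tau})$ shows that $F(\,\cdot\,;0,0)$ is exactly the score of the Gaussian regression model $Z_{v,\tau}=Bv+e$ evaluated at $z_0$. Hence $F(\smallmat{v_0\\\tau_0};0,0)=0$ and $\nabla_{v,\tau}F(\smallmat{v_0\\\tau_0};0,0)$ equals the negative observed information of that model, which is non-singular precisely because $(v_0,\tau_0)$ is a non-degenerate local MLE; the Implicit Function Theorem then delivers $G$ and its regularity for part~(\ref{lemmaItem_dG_drho_continuous_diff_Theorem3}).

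Finally, part~(\ref{lemmaItem_grad_tau_F_omega}) is read off from the explicit leading form $F_\omega(v,\tau;0,0)=\nabla_v^T\tilde\ell(v,\tau,0)=B^T K_U''(0;\smallmat{\omega_0\\\tau})^{-1}(z_0-Bv)$, recalling that $B=\nabla_\omega K_U'(0;\theta_0)$ is $\tau$-independent by \eqref{eqDisc:PartiallyIdentifiableAssumption}. Differentiating in $v$ gives $\nabla_v F_\omega(v_0,\tau_0;0,0)=-B^T K_U''(0;\theta_0)^{-1}B$; differentiating in $\tau$ via the inverse-derivative formula \eqref{eqDisc:GradInverseMat}, and substituting the generalized-least-squares residual $z_0-Bv_0=(I-B(B^TQB)^{-1}B^TQ)z_0$ obtained from $F_\omega(v_0,\tau_0;0,0)=0$ with $Q=K_U''(0;\theta_0)^{-1}$, collapses the expression into $-B^TQW$ with $W$ as defined. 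I expect the main obstacle to be the regularity bookkeeping of the second paragraph: establishing the exact orders of vanishing in $\epsilon$ and carrying enough differentiability through the Hadamard division and the \citet{Goodman2022} construction of $q$ to legitimately claim the $C^3$ conclusion.
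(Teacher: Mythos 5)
Your proposal is correct, and it reaches all four parts of the lemma, but by a route that differs from the paper's in two substantive ways. First, where you perform the division by $\epsilon^2$ once, at the level of the scalar potential $\tilde\ell=\epsilon^{-2}\log\hat L_U^*$, and then differentiate in $v$ and $\tau$, the paper never forms this potential: it applies Taylor-with-integral-remainder representations separately to the rescaled saddlepoint solution $\hat\sigma=\epsilon^{-1}\hat t_U$ (via the auxiliary function $S$ in \eqref{eqDisc:SaddEqnFuncS}--\eqref{eqDisc:SaddlepointEqnIntegral}), to the $F_\omega$ block (\eqref{eqDisc:FirstTermF_omega}), and to the $F_\tau$ block (\eqref{eqDisc:FirstTerm_grad_tau_F_tau}). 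Both divisions rest on exactly the same structural facts you identify --- $K_U(0;\theta)\equiv 0$, \eqref{eqDisc:grad_at_0_of_K}, and the crucial observation that \eqref{eqDisc:PartiallyIdentifiableAssumption} forces $\hat t_U\bigl(\smallmat{\omega_0\\ \tau};u_0\bigr)=0$ uniformly in $\tau$ --- so the two computations are interchangeable. Second, your identification of $F(\,\cdot\,;0,0)$ as \emph{exactly} the score of the Gaussian regression model $Z_{v,\tau}=Bv+e$ at $z_0$ (via $\tilde\ell(v,\tau,0)=-\tfrac12(z_0-Bv)^TK_U''\bigl(0;\smallmat{\omega_0\\ \tau}\bigr)^{-1}(z_0-Bv)$, which I have checked) is cleaner than what the paper does: the paper instead reduces $F=0$ to the explicit conditions \eqref{eqDisc:H3_appendixH_Goodman2022}--\eqref{eqDisc:H4_appendixH_Goodman2022} and establishes non-singularity of $\nabla_{v,\tau}F$ by a block row-reduction to $\mathrm{diag}\bigl(A,\;D-W^T(Q-QBA^{-1}B^TQ)W\bigr)$, citing \citet{Goodman2022} for the negative definiteness of the Schur complement; your route makes both conclusions immediate consequences of the non-degenerate-local-MLE hypothesis, with no appeal to the explicit conditions. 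What the paper's approach buys in exchange is twofold: (i) differentiating your extended potential $\tilde\ell$ costs one order of smoothness, so to conclude $F\in C^3$ you need $\log\hat L_U^*$ to be roughly $C^6$ in the reparametrized variables (or you must apply the division directly to the gradient blocks, which essentially reproduces the paper's integral representations); the paper's gradient-level representations avoid this loss, and you correctly flag this as the delicate point; (ii) the paper's construction yields the explicit formulas \eqref{eqDisc:ConditionForSaddlepointEqn}, \eqref{eqDisc:grad_v_sigma}, \eqref{eqDisc:grad_tau_sigma} for $\hat\sigma$ and its gradients, which are reused later in the proof of Lemma~\ref{lemmaDisc:SecondDeriv}. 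Your part~(d) computation (GLS residual substitution and the inverse-derivative formula \eqref{eqDisc:GradInverseMat}) coincides with the paper's evaluation of \eqref{eqDisc:del_vF_omega} and \eqref{eqDisc:del_tau_F_omega}, and your treatment of parts~(b) and~(c) by inheritance from Lemma~\ref{lemmaDisc:Theorem1ContinousDiff} and the Implicit Function Theorem is exactly the paper's.
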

We defer the proof to \supsecref{appDisc:DerivativesFomegaFtau}.

From \lemmaitemref{lemmaDisc:Theorem3ContinousDiff}{lemmaItem_dG_drho_continuous_diff_Theorem3}, 
$G(\epsilon, \rho)$ is the unique solution of \eqref{eqDisc:F_combined_omega_tau_at_0} 
in the neighborhood of
$
\smallmat{v_0 \\ \tau_0 }
$. 
Thus, 
$G(0, 0) = 
\smallmat{
    G_v(0,0) \\ G_{\tau}(0,0)
}
=
\smallmat{v_0 \\ \tau_0 }$ since $F \left( \smallmat{v_0 \\ \tau_0 } ; 0, 0 \right) = 0$.

Recall that the gradients of the true likelihoods \eqref{eqDisc:F_omega}-\eqref{eqDisc:F_tau} correspond to those of the first-order saddlepoint log-likelihoods at $\rho = 0$, whereas for $\rho = 1/n$, they correspond the true log-likelihoods.
From \citet[Theorem 6]{Goodman2022}, the true and saddlepoint log-likelihood functions have unique maximizers provided that $n$ is sufficiently large and $\theta$ is restricted to a sufficiently small neighborhood of $\theta_0$. Recall also that $\omega$ and $v$ are related as in \eqref{eqDisc:ObservationXTheorem3}.
It follows that,
\begin{align}
    &\hat{\theta}_{\mathrm{true},n}
    =
    \begin{pmatrix}
        \hat{\omega}_{\mathrm{true},n} \\
        \hat{\tau}_{\mathrm{true},n}
    \end{pmatrix}
    =
    \begin{pmatrix}
         \omega_0 + (1/n^{1/2}) G_{v}(1/n^{1/2}, 1/n) \\
         G_{\tau}(1/n^{1/2}, 1/n)
    \end{pmatrix}
    =
    G_\theta(1/n^{1/2}, 1/n),
    \\
    &\hat{\theta}_{\mathrm{spa},n}
    =
    \begin{pmatrix}
        \hat{\omega}_{\mathrm{spa},n} \\
        \hat{\tau}_{\mathrm{spa},n}
    \end{pmatrix}
    =
    \begin{pmatrix}
         \omega_0 + (1/n^{1/2}) G_{v}(1/n^{1/2},0) \\
         G_{\tau}(1/n^{1/2},0)
    \end{pmatrix}
    =
    G_\theta(1/n^{1/2}, 0)
    ,
\end{align}
where in accordance with \eqref{eqDisc:ObservationXTheorem3}, we introduce the abbreviation 
\begin{equation}
     G_{\theta}(\epsilon, \rho)
    =
    \begin{pmatrix}
        \omega_0 + \epsilon G_{v}(\epsilon,\rho) \\
        G_{\tau}(\epsilon,\rho)
    \end{pmatrix}.
\end{equation}
Therefore,
\begin{equation}
\label{eqDisc:TrueDiscTheorem3}
    \begin{pmatrix}
        \delta_{n,\omega} \\
        \delta_{n,\tau}
    \end{pmatrix}
    = 
    \begin{pmatrix}
        (1/n^{1/2}) 
        \left(
        G_{v}(1/n^{1/2},1/n) - G_{v}(1/n^{1/2},0)
        \right) 
        \\
        G_{\tau}(1/n^{1/2},1/n) - G_{\tau}(1/n^{1/2},0)
    \end{pmatrix}
    .
\end{equation}

\begin{lemma}
\label{lemmaDisc:Theorem3approximated_delta}
    The approximated discrepancy as defined in \eqref{eqDisc:DiscrepancyFormula} is given by
    \begin{equation}
    \label{eqDisc:ApprtDiscTheorem3}
        \begin{pmatrix}
            \hat{\delta}_{n,\omega} \\
            \hat{\delta}_{n,\tau}
        \end{pmatrix}
        =
        \begin{pmatrix}
            (1/n^{3/2})  
            \frac{\partial G_{v}}{\partial \rho} (1/n^{1/2},0) 
            \\
            (1/n) 
            \frac{\partial G_{\tau}}{\partial \rho} (1/n^{1/2},0)
        \end{pmatrix}
        .
    \end{equation} 
\end{lemma}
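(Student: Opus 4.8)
The plan is to mirror the proof of Lemma~\ref{lemmaDisc:LemmaOnTheDiscrepancyFormular}, using implicit differentiation of the identity \eqref{eqDisc:F_combined_omega_tau_at_0} in $\rho$, but now tracking the two different scalings attached to the $\omega$- and $\tau$-blocks. Differentiating $F(G(\epsilon,\rho);\epsilon,\rho)=0$ with respect to $\rho$ and solving for the derivative gives
\begin{equation*}
    \frac{\partial G}{\partial \rho}(\epsilon,0)
    =
    -\bigl(\nabla_{v,\tau} F(G(\epsilon,0);\epsilon,0)\bigr)^{-1}
    \frac{\partial F}{\partial \rho}(G(\epsilon,0);\epsilon,0),
\end{equation*}
where the Jacobian is invertible in the relevant neighbourhood by \lemmaitemref{lemmaDisc:Theorem3ContinousDiff}{lemmaItemNonSingularTheorem3ContinousDiff}, and the base point $G(n^{-1/2},0)$ corresponds to $\hat\theta_{\mathrm{spa},n}$. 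I would then evaluate the two factors on the right-hand side separately.

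For the $\rho$-derivative I would invoke \lemmaitemref{lemmaDisc:Theorem3ContinousDiff}{lemmaItem_g_omega_tau_differentiable}, which supplies $\frac{\partial g_\omega}{\partial \rho}(\theta,u,0)=0$ and $\frac{\partial g_\tau}{\partial \rho}(\theta,u,0)=0$. Differentiating \eqref{eqDisc:F_omega} and \eqref{eqDisc:F_tau} then leaves only the $T_U$ contributions, so that at $\rho=0$,
\begin{equation*}
    \frac{\partial F}{\partial \rho}(\theta;\epsilon,0)
    =
    \begin{pmatrix}
        \epsilon\,\nabla_\omega^T T_U(\theta,u)\\
        \nabla_\tau^T T_U(\theta,u)
    \end{pmatrix}
    =
    S\,\nabla_\theta^T T_U(\theta,u),
\end{equation*}
where I introduce the block scaling matrix $S=\mathrm{diag}(\epsilon I_{p_1},I_{p_2})$ with $\epsilon=n^{-1/2}$. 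I would then convert $T_U$ to $T_X$ through \eqref{eqDisc:TX_TU}, giving $\nabla_\theta^T T_U = n\,\nabla_\theta^T T_X$, so that $\frac{\partial F}{\partial \rho}(\,\cdot\,;n^{-1/2},0)=n\,S\,\nabla_\theta^T T_X$.

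The central computation is to express the Jacobian $\nabla_{v,\tau}F$ in terms of the saddlepoint Hessian $H=\nabla_\theta^T\nabla_\theta\log\hat{L}_n(\hat\theta_{\mathrm{spa},n};x_n)$. Using \eqref{eqDisc:F_omega_F_tau_log_xn}, at $\rho=0$ the vector $F(\,\cdot\,;n^{-1/2},0)$ is exactly $S\,\nabla_\theta^T\log\hat{L}_n$, evaluated at $\theta=(\omega_0+\epsilon v,\tau)$. Since $F$ depends on $(v,\tau)$ only through $\theta$, and the reparametrization $\omega=\omega_0+\epsilon v$ contributes $\partial\theta/\partial(v,\tau)=\mathrm{diag}(\epsilon I_{p_1},I_{p_2})=S$ by the chain rule, differentiation produces an additional right factor $S$, yielding $\nabla_{v,\tau}F = S H S$. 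Substituting both ingredients and using $(SHS)^{-1}=S^{-1}H^{-1}S^{-1}$, the two outer $S$-factors collapse and I obtain
\begin{equation*}
    \frac{\partial G}{\partial \rho}(n^{-1/2},0)
    =
    -\,n\,S^{-1}H^{-1}\,\nabla_\theta^T T_X(\hat\theta_{\mathrm{spa},n},x_n)
    =
    n\,S^{-1}\hat\delta_n,
\end{equation*}
since $\hat\delta_n=-H^{-1}\nabla_\theta^T T_X$ by Definition~\ref{labelDisc:approxdiscrepancy}. Rearranging gives $\hat\delta_n=\tfrac1n S\,\frac{\partial G}{\partial\rho}(n^{-1/2},0)$, and reading off the two blocks of $S=\mathrm{diag}(n^{-1/2}I_{p_1},I_{p_2})$ produces the $n^{-3/2}$ and $n^{-1}$ prefactors claimed in \eqref{eqDisc:ApprtDiscTheorem3}.

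I expect the main obstacle to be the careful bookkeeping of the asymmetric scalings: the $\epsilon^{-1}$ and $\epsilon^{-2}$ weights in \eqref{eqDisc:F_omega}--\eqref{eqDisc:F_tau}, the reparametrization $\omega=\omega_0+\epsilon v$, and the choice $\epsilon=n^{-1/2}$, $\rho=n^{-1}$ must all conspire so that the factorization $\nabla_{v,\tau}F=SHS$ holds exactly and the spurious factors of $S$ cancel against those of $(SHS)^{-1}$. Verifying this identity, rather than any individual differentiation, is where an error would most easily enter, and it is precisely the step that makes the differing $n^{-3/2}$ versus $n^{-1}$ rates for the $\omega$- and $\tau$-discrepancies emerge.
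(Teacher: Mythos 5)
Your proof is correct and takes essentially the same approach as the paper's: implicit differentiation of $F(G(\epsilon,\rho);\epsilon,\rho)=0$ in $\rho$, vanishing of the $g_\omega$, $g_\tau$ derivatives at $\rho=0$ via \lemmaitemref{lemmaDisc:Theorem3ContinousDiff}{lemmaItem_g_omega_tau_differentiable}, conversion of $T_U$ to $T_X$ through \eqref{eqDisc:TX_TU}, and identification of $\nabla_{v,\tau}F$ with the scaled saddlepoint Hessian. Your chain-rule derivation of the factorization $\nabla_{v,\tau}F=SHS$ from \eqref{eqDisc:F_omega_F_tau_log_xn} (valid because $F$ at $\rho=0$ depends on $(v,\tau)$ only through $\theta=\smallmat{\omega_0+\epsilon v\\ \tau}$) is simply a compact version of the paper's block-by-block computation of $Y_{v,\omega,n},\dots,Y_{\tau,\tau,n}$, so the mathematical content is identical.
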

We defer the proof to \supsecref{AppDisc:LemmaTheorem3ApprxFormula}

We will use \eqref{eqDisc:TrueDiscTheorem3} and \eqref{eqDisc:ApprtDiscTheorem3} to verify our results for this proof.

Motivated by \eqref{eqDisc:ApprtDiscTheorem3}, we will examine 
$
\smallmat{
        \epsilon \rho \frac{\partial G_{v}}{\partial \rho} (\epsilon,0) 
        \\
        \rho \frac{\partial G_{\tau}}{\partial \rho} (\epsilon,0)
}
$. 
To analyze $\frac{\partial G}{\partial \rho}(\epsilon, \rho) = 
\smallmat{
    \frac{\partial G_{v}}{\partial \rho} (\epsilon,\rho) 
    \\
    \frac{\partial G_{\tau}}{\partial \rho} (\epsilon,\rho)
}
$,  
we apply partial implicit differentiation to  \eqref{eqDisc:F_combined_omega_tau_at_0}, obtaining the following identity: 
\begin{align}
    \frac{\partial F}{\partial \rho}
    \left( G(\epsilon, \rho); \epsilon, \rho \right)
    +
    \nabla_{v,\tau}
    F(G(\epsilon, \rho); \epsilon, \rho)
    \frac{\partial G}{\partial \rho}(\epsilon, \rho)
    = 0.
    \label{eqDisc:FirstOrderImplicitPartial}
\end{align}
Here, $\nabla_{v,\tau}F$ denotes a $2 \times 2$ block matrix of the form
$
\smallmat{
        \nabla_{v} F_{\omega} & 
        \nabla_{\tau} F_{\omega}
        \\
        \nabla_{v} F_{\tau} & 
        \nabla_{\tau} F_{\tau}
    }
$.
Because $G(\epsilon, \rho)$ is continuously differentiable, then in a restricted neighborhood of $(\epsilon, \rho) = (0,0)$, $\frac{\partial G}{\partial \rho}(\epsilon, \rho)$ is bounded. This implies that
$\frac{\partial G}{\partial \rho}(\epsilon, \rho) = 
\smallmat{
    \frac{\partial G_{v}}{\partial \rho} (\epsilon,\rho) 
    \\
    \frac{\partial G_{\tau}}{\partial \rho} (\epsilon,\rho)
}
=
O(1)
$
within this neighborhood.

Now we shift the focus to identifying the leading term. 
Rearranging \eqref{eqDisc:FirstOrderImplicitPartial}, we have 
\begin{align}
\hspace{-0.5em}
    \begin{pmatrix}
        \frac{\partial G_{v}}{\partial \rho} (\epsilon,\rho) 
        \\
        \frac{\partial G_{\tau}}{\partial \rho} (\epsilon,\rho)
    \end{pmatrix}
    &=
    -(\nabla_{v, \tau} F(G(\epsilon, \rho); \epsilon, \rho))^{-1}
    \begin{pmatrix}
        \frac{\partial F_{\omega}}{\partial \rho}( G_{v}(\epsilon,\rho), G_{\tau}(\epsilon,\rho); \epsilon, \rho)
        \\
        \frac{\partial F_{\tau}}{\partial \rho}(G_{v}(\epsilon,\rho), G_{\tau}(\epsilon,\rho); \epsilon, \rho)
    \end{pmatrix}
    \notag
    \\
    &\hspace{-1em}=
    -(\nabla_{v, \tau} F(G(\epsilon, \rho); \epsilon, \rho))^{-1}
    \begin{pmatrix}
        \epsilon 
        \spacednablatranspose{\omega}{T_U
        \left(
        G_{\theta}(\epsilon, \rho)
        , u
        \right)}         
        + \epsilon \frac{\partial g_{\omega}}{\partial \rho}(G_{\theta}(\epsilon, \rho),u,\rho)^T
        \\
        \spacednablatranspose{\tau}{T_U
        \left(
        G_{\theta}(\epsilon, \rho), 
        u
        \right)}
        + 
        \frac{\partial g_{\tau}}{\partial \rho} (G_{\theta}(\epsilon, \rho),u,\rho)^T
    \end{pmatrix}
    .
    \label{eqDisc:grad_rho_G}
\end{align}
By \lemmaitemref{lemmaDisc:Theorem3ContinousDiff}{lemmaItem_g_omega_tau_differentiable}, the
derivatives $\frac{\partial g_{\omega}}{\partial \rho}$, $\frac{\partial g_{\tau}}{\partial \rho}$
vanishes when $\rho = 0$. Therefore,
\begin{equation}
    \begin{aligned}
        \begin{pmatrix}
        \frac{\partial G_{v}}{\partial \rho} (\epsilon,0) 
        \\
        \frac{\partial G_{\tau}}{\partial \rho} (\epsilon,0)
        \end{pmatrix}
        =
        -(\nabla_{v, \tau} F(G(\epsilon, 0); \epsilon, 0))^{-1}
        \begin{pmatrix}
            \epsilon 
            \spacednablatranspose{\omega}{T_U
            \left(
            G_{\theta}(\epsilon, 0)
            , u
            \right)}
            \\
            \spacednablatranspose{\tau}{T_U
            \left(
            G_{\theta}(\epsilon, 0), 
            u
            \right)}
        \end{pmatrix}
    \end{aligned}
\end{equation}
We can set $\epsilon=0$ and define
\begin{equation}
    \begin{aligned}
         \begin{pmatrix}
        k_{0,v}\\
        k_{0,\tau}
    \end{pmatrix}
    =
    \begin{pmatrix}
        \frac{\partial G_{v}}{\partial \rho} (0,0) 
        \\
        \frac{\partial G_{\tau}}{\partial \rho} (0,0)
    \end{pmatrix}
    =
    -
    \left(\nabla_{v, \tau} 
    F
    \left(
    \begin{pmatrix}
        v_0 \\ \tau_0
    \end{pmatrix};  0, 0
    \right)\right)^{-1}
    \begin{pmatrix}
        0_{p_1 \times 1} \\
        \spacednablatranspose{\tau}{T_U
        \left(
        \theta_0, 
        u_0 
        \right)}
    \end{pmatrix}
    .
    \end{aligned}
\end{equation}
In particular, 
\[
\frac{\partial G_{v}}{\partial \rho} (1/n^{1/2},0) = k_{0,v} + o(1)
~
\text{and}
~
 \frac{\partial G_{\tau}}{\partial \rho} (1/n^{1/2},0) = k_{0,\tau} + o(1).
\]
Together with \eqref{eqDisc:ApprtDiscTheorem3}, this verifies \eqref{eqDisc:Theorem3DiscApprt}.

Next, we examine the difference between the true discrepancy \eqref{eqDisc:TrueDiscTheorem3} and its approximation \eqref{eqDisc:ApprtDiscTheorem3}. 
By Lagrange’s form for the Taylor series remainder, 
for some $\rho^{*} \in \left[ 0,\rho \right],$
we have the following differences. 
\begin{align}
    &
    \begin{pmatrix}
        \epsilon 
        \left(
        G_{v}(\epsilon,\rho) - 
        G_{v}(\epsilon,0) -
        \rho \frac{\partial G_{v}}{\partial \rho} \left(\epsilon,0\right)
        \right)
        \\
        G_{\tau}(\epsilon,\rho) - 
        G_{\tau}(\epsilon,0) - 
        \rho \frac{\partial G_{\tau}}{\partial \rho} \left(\epsilon,0\right)
    \end{pmatrix}
    =
    \begin{pmatrix}
        \tfrac{1}{2}\epsilon\rho^2 \frac{\partial^2 G_{v}}{\partial \rho^2}(\epsilon, \rho^*)
        \\
        \tfrac{1}{2}\rho^2 \frac{\partial^2 G_{\tau}}{\partial \rho^2} \left(\epsilon, \rho^*\right)
    \end{pmatrix}
    .
    \label{eqDisc:DiscDifferenceTheorem3}
\end{align}
Setting $\epsilon = 1/n^{1/2}$ and $\rho = 1/n$ and using 
\lemmaitemref{lemmaDisc:Theorem3ContinousDiff}{lemmaItem_dG_drho_continuous_diff_Theorem3}, it follows that
\[
\begin{pmatrix}
        \delta_{n,\omega} \\
        \delta_{n,\tau}
    \end{pmatrix}
    -
    \begin{pmatrix}
        \hat{\delta}_{n,\omega} \\
        \hat{\delta}_{n,\tau}
    \end{pmatrix}
    =
    \begin{pmatrix}
        O(1/n^{5/2})\\
        O(1/n^{2})
    \end{pmatrix}
    ~\text{as}~n \to \infty,
\]
which verifies  \eqref{eqDisc:Theorem3DiscDiff}. The result \eqref{eqDisc:Theorem3DiscTrue} follows from \eqref{eqDisc:Theorem3DiscApprt} and \eqref{eqDisc:Theorem3DiscDiff}. This completes the proof for Theorem 3 (a).

For Theorem 3 (b), 
assume \eqref{eqDisc:ZeroMatrixConditionTheorem3}.
We will 
verify the results in \eqref{eqDisc:ConstantNuDiagonalMatrixTheorem3}-\eqref{eqDisc:OmegaDiscDiffSpecialCase}.
From \eqref{eqDisc:ApprtDiscTheorem3}, we have
\begin{align}
\label{eqDisc:delta_hat_theorem3b}
    \hat{\delta}_{n,\omega} 
    =
    \frac{1}{n^{3/2}}
    \frac{\partial G_v}{\partial \rho}(1/n^{1/2},0). 
\end{align}
Motivated by this, we will examine the quantity $\epsilon^3\frac{\partial G_v}{\partial \rho}(\epsilon,0)$.

The factor 
$
    \nabla_{v,\tau}
    F(G(\epsilon, \rho); \epsilon, \rho)
$ in \eqref{eqDisc:FirstOrderImplicitPartial} is a block matrix of the form:
\begin{align}
    \begin{pmatrix}
        \nabla_{v} F_{\omega}\left(G_v(\epsilon, \rho), G_{\tau}(\epsilon, \rho); \epsilon, \rho\right) & 
        \nabla_{\tau} F_{\omega}\left(G_v(\epsilon, \rho), G_{\tau}(\epsilon, \rho); \epsilon, \rho\right)
        \\
        \nabla_{v} F_{\tau}\left(G_v(\epsilon, \rho), G_{\tau}(\epsilon, \rho); \epsilon, \rho \right) & 
        \nabla_{\tau} F_{\tau}\left(G_v(\epsilon, \rho), G_{\tau}(\epsilon, \rho); \epsilon, \rho \right)
    \end{pmatrix}.
    \label{eqDisc:BlockMatrixZeroMatrixCondition}
\end{align}
Evaluating \eqref{eqDisc:FirstOrderImplicitPartial} at $(\epsilon,\rho) = (0,0)$ and to find $\frac{\partial G_v}{\partial \rho}(0,0)$, we have 
\begin{align}
    \begin{pmatrix}
        0_{p_1 \times 1} \\
        \spacednablatranspose{\tau}{T_U
        \left(
        \theta_0, 
        u_0 
        \right)}
    \end{pmatrix}
    +
    \begin{pmatrix}
        \nabla_{v} F_{\omega}\left(v_0, \tau_0; 0, 0\right) & 
        \nabla_{\tau} F_{\omega}\left(v_0, \tau_0; 0, 0\right)
        \\
        \nabla_{v} F_{\tau}\left(v_0, \tau_0; 0, 0 \right) & 
        \nabla_{\tau} F_{\tau}\left(v_0, \tau_0; 0, 0 \right)
    \end{pmatrix}
    \begin{pmatrix}
        \frac{\partial G_v}{\partial \rho}(0,0)
        \\
        \frac{\partial G_{\tau}}{\partial \rho}(0,0)
    \end{pmatrix}
    = 0.
    \label{eqDisc:FirstOrderImplicitDerivAtBasepoint}
\end{align}
By \lemmaitemref{lemmaDisc:Theorem3ContinousDiff}{lemmaItem_grad_tau_F_omega} and the assumption \eqref{eqDisc:ZeroMatrixConditionTheorem3},
\begin{align}
    \nabla_{\tau} F_{\omega}
    \left(
    v_0, \tau_0; 0, 0
    \right) 
    =
    - B^{T}
    Q
    W = 0.
    \label{eqDisc:eqDisc:Condition2ndSectionTheorem3}
\end{align}
Additionally, $\nabla_{\tau} F_{\omega}$ and $\nabla_{v} F_{\tau}$ are transposes.
Therefore,
under condition \eqref{eqDisc:ZeroMatrixConditionTheorem3}, we can rewrite \eqref{eqDisc:FirstOrderImplicitDerivAtBasepoint} as
\begin{align}
    \begin{pmatrix}
        0_{p_1 \times 1} \\
        \nabla_{\tau} 
        T_U
        \left(
        \theta_0, 
        u_0 
        \right)^T 
    \end{pmatrix}
    +
    \begin{pmatrix}
        \nabla_{v} F_{\omega}\left(v_0, \tau_0; 0, 0\right) & 
        0_{p_1 \times p_2}
        \\
        0_{p_2 \times p_1} & 
        \nabla_{\tau} F_{\tau}\left(v_0, \tau_0; 0, 0 \right)
    \end{pmatrix}
    \begin{pmatrix}
        \frac{\partial G_v}{\partial \rho}(0,0)
        \\
        \frac{\partial G_{\tau}}{\partial \rho}(0,0)
    \end{pmatrix}
    = 0.
    \label{eqDisc:FirstOrderImplicitDerivAtBasepoint_Diagonal}    
\end{align}
A direct rearrangement of \eqref{eqDisc:FirstOrderImplicitDerivAtBasepoint_Diagonal} indicates that
\begin{equation}
\label{eqDisc:dGv_drho_is_0}
    \frac{\partial G_v}{\partial \rho}(0,0) = 0.
\end{equation}
Since $\frac{\partial G_{v}}{\partial \rho} (\epsilon,0)$ is continuously differentiable by 
\lemmaitemref{lemmaDisc:Theorem3ContinousDiff}{lemmaItem_dG_drho_continuous_diff_Theorem3}
and vanishes at $\epsilon = 0$ by \eqref{eqDisc:dGv_drho_is_0}, we have
\begin{equation}
\label{eqDisc:delta_omega_at_epsilon}
    \epsilon^3\frac{\partial G_v}{\partial \rho}(\epsilon,0)
    =
    \epsilon^4
    \left(
        \frac{\partial^2 G_v}{\partial \rho \partial \epsilon}(0,0) + o(1)
    \right).
\end{equation}
Rewrite
\eqref{eqDisc:FirstOrderImplicitPartial} as
\begin{equation}
\begin{aligned}
        \nabla_{v,\tau}
        F(G(\epsilon, \rho); \epsilon, \rho)
        \frac{\partial G}{\partial \rho}(\epsilon, \rho)
        =
        -
        \begin{pmatrix}
            \epsilon 
            \spacednablatranspose{\omega}{T_U
            \left(
            G_{\theta}(\epsilon, \rho)
            , u
            \right)}         
            + \frac{\partial g_{\omega}}{\partial \rho}(G_{\theta}(\epsilon, \rho),u,\rho)^T
            \\
            \spacednablatranspose{\tau}{T_U
            \left(
            G_{\theta}(\epsilon, \rho), 
            u
            \right)}
            + 
            \frac{\partial g_{\tau}}{\partial \rho} (G_{\theta}(\epsilon, \rho),u,\rho)^T
        \end{pmatrix}
\end{aligned}
\end{equation}
Setting $\rho = 0$ and using
\lemmaitemref{lemmaDisc:Theorem3ContinousDiff}{lemmaItem_g_omega_tau_differentiable},
this reduces to 
\begin{equation}
\label{eqDisc:at_rho_0_implict_partial_derivative}
\begin{aligned}
        \nabla_{v,\tau}
        F(G(\epsilon, 0); \epsilon, 0)
        \frac{\partial G}{\partial \rho}(\epsilon, 0)
        =
        -
        \begin{pmatrix}
            \epsilon 
            \spacednablatranspose{\omega}{T_U
            \left(
            G_{\theta}(\epsilon, 0)
            , u
            \right)}         
            \\
            \spacednablatranspose{\tau}{T_U
            \left(
            G_{\theta}(\epsilon, 0), 
            u
            \right)}
        \end{pmatrix}.
\end{aligned}
\end{equation}
To obtain $\frac{\partial^2 G_v}{\partial \rho \partial \epsilon}(0,0)$, 
differentiate \eqref{eqDisc:at_rho_0_implict_partial_derivative} and set $\epsilon=0$.
From the LHS we obtain
\begin{equation}
\label{eqDisc:LHS_at_rho_0_implict_partial_derivative}
\resizebox{0.9\hsize}{!}{$
    \begin{aligned}
        &\left. \frac{\partial}{\partial \epsilon} \right|_{\epsilon=0}
        \left(
            \nabla_{v,\tau}
            F(G(\epsilon, 0); \epsilon, 0)
            \frac{\partial G}{\partial \rho}(\epsilon, 0)
        \right)
        \\
        &\qquad
        =
        \left. \frac{\partial}{\partial \epsilon} \right|_{\epsilon=0}
        \left( 
        \Big. 
        \nabla_{v,\tau}
        F(G(\epsilon, 0); \epsilon, 0) 
        \right) 
        \begin{pmatrix}
            \frac{\partial G_v}{\partial \rho}(0,0)
            \\
            \frac{\partial G_{\tau}}{\partial \rho}(0,0)
        \end{pmatrix}
        +
        \nabla_{v,\tau}
        F(G(0, 0); 0, 0)
        \begin{pmatrix}
            \frac{\partial^2 G_v}{\partial \rho \partial \epsilon}(0,0)
            \\
            \frac{\partial^2 G_{\tau}}{\partial \rho \partial \epsilon}(0,0)
        \end{pmatrix}.
    \end{aligned}
$}
\end{equation}
From the RHS,
\begin{equation}
\label{eqDisc:RHS_at_rho_0_implict_partial_derivative}
    \begin{aligned}
        -
        \left. \frac{\partial}{\partial \epsilon} \right|_{\epsilon=0}
        \begin{pmatrix}
            \epsilon 
            \spacednablatranspose{\omega}{T_U
            \left(
            G_{\theta}(\epsilon, 0)
            , u
            \right)}         
            \\
            \spacednablatranspose{\tau}{T_U
            \left(
            G_{\theta}(\epsilon, 0), 
            u
            \right)}
        \end{pmatrix}
        =
        -
        \begin{pmatrix}
            \spacednablatranspose{\omega}{T_U
            \left(
            G_{\theta}(0, 0)
            , u_0
            \right)}         
            \\
            \left. \frac{\partial}{\partial \epsilon} \right|_{\epsilon=0}
            \left(
                \spacednablatranspose{\tau}{T_U
                \left(
                G_{\theta}(\epsilon, 0), 
                u
                \right)}
            \right)
        \end{pmatrix}.
    \end{aligned}
\end{equation}
From \eqref{eqDisc:dGv_drho_is_0}, $\frac{\partial G_v}{\partial \rho}(0,0) = 0$, and from \eqref{eqDisc:FirstOrderImplicitDerivAtBasepoint_Diagonal}, $\nabla_{v,\tau} F(G(0, 0); 0, 0)$ is a block diagonal matrix.
Then, by comparing the upper block entries of \eqref{eqDisc:LHS_at_rho_0_implict_partial_derivative} and \eqref{eqDisc:RHS_at_rho_0_implict_partial_derivative} and recalling that $G_v(0,0)=v_0$, $G_\tau(0,0)=\tau_0$, and $G_\theta(0,0)=\theta_0$, we see that
\begin{equation}
    \begin{aligned}
        \nabla_v F_{\omega}(v_0, \tau_0; 0, 0)
        \frac{\partial^2 G_v}{\partial \rho \partial \epsilon}(0,0) 
        =
        -
        \spacednablatranspose{\omega}{T_U
            \left(
            \theta_0
            , u_0
            \right)}.
    \end{aligned}
\end{equation}
By \lemmaitemref{lemmaDisc:Theorem3ContinousDiff}{lemmaItem_grad_tau_F_omega},
$\nabla_{v} F_{\omega}(v_0, \tau_0; 0, 0) = - B^{T}
    K''_U
    \left(0; 
        \theta_0
    \right)^{-1}
    B$, and therefore,
\begin{equation}
    \frac{\partial^2 G_v}{\partial \rho \partial \epsilon}(0,0)
    =
    \left(
    B^{T}
    K''_U
    \left(0; 
        \theta_0
    \right)^{-1}
    B
    \right)^{-1} 
    \spacednablatranspose{\omega}{T_U
    \left(
     \theta_0, u_0
    \right)}
    =
    k_{0,\omega}^*
    .
\end{equation}
Therefore, combining \eqref{eqDisc:delta_hat_theorem3b}-\eqref{eqDisc:delta_omega_at_epsilon} gives $\hat{\delta}_{n,\omega} = \frac{1}{n^2}(k_{0, \omega}^* + o(1))$, which verifies \eqref{eqDisc:ApprxOmegaDiscSpecialCase}.

Next, we assess the consequence of condition \eqref{eqDisc:ZeroMatrixConditionTheorem3} on the difference $\delta_{n,\omega} - \hat{\delta}_{n,\omega}$, following from \eqref{eqDisc:DiscDifferenceTheorem3}. For
$\rho^* \in [0,\rho]$,
the $\omega$-based difference is expressed as
\begin{align*}
    \epsilon 
    \left(
    G_{v}(\epsilon,\rho) - G_{v}(\epsilon, 0)
    -
    \rho  
    \frac{\partial G_{v}}{\partial \rho} \left(\epsilon, 0\right)
    \right)
    =
    \tfrac{1}{2}\epsilon \rho^2
    \frac{\partial^2 G_{v}}{\partial \rho^2} \left(\epsilon, \rho^*\right).
\end{align*}
\begin{lemma}
\label{lemmaDisc:SecondDeriv}
    For the continuously differentiable function $G_v(\epsilon, \rho)$, under condition \eqref{eqDisc:ZeroMatrixConditionTheorem3},
    $\frac{\partial^2 G_{v}}{\partial \rho^2} \left(\epsilon, \rho\right) = O(\epsilon + \rho)$.
\end{lemma}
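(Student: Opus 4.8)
The plan is to reduce the estimate to a single vanishing value at the base point. By \lemmaitemref{lemmaDisc:Theorem3ContinousDiff}{lemmaItem_dG_drho_continuous_diff_Theorem3}, $G$ is of class $C^3$, so $\frac{\partial^2 G_v}{\partial\rho^2}$ is continuously differentiable in $(\epsilon,\rho)$ near $(0,0)$; a first-order Taylor expansion then gives $\frac{\partial^2 G_v}{\partial\rho^2}(\epsilon,\rho) = \frac{\partial^2 G_v}{\partial\rho^2}(0,0) + O(\epsilon+\rho)$. Hence it suffices to prove $\frac{\partial^2 G_v}{\partial\rho^2}(0,0)=0$.

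To isolate $\frac{\partial^2 G_v}{\partial\rho^2}(0,0)$, I differentiate the first-order implicit identity \eqref{eqDisc:FirstOrderImplicitPartial} once more in $\rho$, obtaining the block analogue of \eqref{eqDisc:SecondImpicitDerTheorem1},
\[ \frac{\partial^2 F}{\partial\rho^2} + 2\Bigl(\frac{\partial}{\partial\rho}\nabla_{v,\tau}F\Bigr)\frac{\partial G}{\partial\rho} + \sum_i\Bigl(\frac{\partial}{\partial(v,\tau)_i}\nabla_{v,\tau}F\Bigr)\frac{\partial G}{\partial\rho}\,\frac{\partial G_i}{\partial\rho} + \nabla_{v,\tau}F\,\frac{\partial^2 G}{\partial\rho^2} = 0, \]
with all $F$-derivatives explicit and evaluated along $G(\epsilon,\rho)$. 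Writing $\nabla_{v,\tau}F = \smallmat{A_{vv} & A_{v\tau}\\ A_{\tau v} & A_{\tau\tau}}$ as in \eqref{eqDisc:BlockMatrixZeroMatrixCondition}, the final term contributes $A_{vv}\frac{\partial^2 G_v}{\partial\rho^2} + A_{v\tau}\frac{\partial^2 G_\tau}{\partial\rho^2}$ to the top ($v$) block, and I read off that block and evaluate at $(\epsilon,\rho)=(0,0)$.

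I then argue that every other contribution to the $v$-block drops at the base point. Since the explicit $\rho$-dependence of $F_\omega$ carries a prefactor $\epsilon$ (see \eqref{eqDisc:F_omega}), both $\frac{\partial^2 F_\omega}{\partial\rho^2}$ and the $\rho$-derivatives of the top block-row of the Jacobian are $O(\epsilon)$ and vanish at $\epsilon=0$; the coupling $A_{v\tau}\frac{\partial^2 G_\tau}{\partial\rho^2}$ drops because $A_{v\tau}(v_0,\tau_0;0,0)=-B^{T}QW=0$ under \eqref{eqDisc:ZeroMatrixConditionTheorem3}, as recorded in \eqref{eqDisc:eqDisc:Condition2ndSectionTheorem3}. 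In the quadratic term, the entries multiplying $\frac{\partial G_v}{\partial\rho}(0,0)$ vanish by \eqref{eqDisc:dGv_drho_is_0}, leaving only contributions of the form $(\partial_{\tau_i}A_{v\tau})(0,0)\,\frac{\partial G_\tau}{\partial\rho}(0,0)$. Granting that these also vanish, the identity collapses to $A_{vv}(0,0)\,\frac{\partial^2 G_v}{\partial\rho^2}(0,0)=0$, and since $A_{vv}(0,0)=-B^{T}K_U''(0;\theta_0)^{-1}B$ is non-singular by \lemmaitemref{lemmaDisc:Theorem3ContinousDiff}{lemmaItem_grad_tau_F_omega}, I conclude $\frac{\partial^2 G_v}{\partial\rho^2}(0,0)=0$.

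The main obstacle is precisely the surviving quadratic contribution: I must show $\partial_{\tau_i}A_{v\tau}(v_0,\tau_0;0,0)=0$, i.e. that the off-diagonal block $\nabla_\tau F_\omega$ vanishes not merely at the base point but to first order in $\tau$ there. This is exactly where the hypothesis that \eqref{eqDisc:ZeroMatrixConditionTheorem3} holds \emph{for all} $\theta$ is essential. At $\epsilon=0$ the block $\nabla_\tau F_\omega(v_0,\tau;0,0)$ is $\rho$-independent, and expanding $\log\hat L_{U}^*$ about the mean manifold $u=E(U_\theta)$—on which, by the argument behind \eqref{eqDisc:WellSpecifiedNegDefiniteCondition} together with \eqref{eqDisc:PartiallyIdentifiableAssumption}, the mixed $\omega$–$\tau$ Hessian of $\log\hat L_{U}^*$ vanishes identically—shows that $\nabla_\tau F_\omega(v_0,\tau;0,0)$ equals a $\tau$-dependent expression built from $B^{T}K_U''(0;\theta)^{-1}\frac{\partial K_U''}{\partial\tau_i}(0;\theta)$ contracted with the residual of $z_0$. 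The condition \eqref{eqDisc:ZeroMatrixConditionTheorem3} forces this to be the zero matrix for every $\tau$ near $\tau_0$, whence its $\tau$-derivative at $\tau_0$ vanishes. Carrying out this matching cleanly—differentiating the saddlepoint Hessian formula \eqref{eqDisc:Second-orderFirst2Terms} in $\tau$ and $u$ and identifying the result with \eqref{eqDisc:ZeroMatrixConditionTheorem3}—is the delicate computational step of the proof.
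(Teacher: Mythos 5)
Your reduction and most of the bookkeeping are right: by the $C^3$ statement in \lemmaitemref{lemmaDisc:Theorem3ContinousDiff}{lemmaItem_dG_drho_continuous_diff_Theorem3} it suffices to prove $\frac{\partial^2 G_v}{\partial\rho^2}(0,0)=0$; the explicit $\rho$-derivatives of $F_\omega$ do vanish at $\epsilon=0$ because of the prefactor $\epsilon$; $\nabla_\tau F_\omega(v_0,\tau_0;0,0)=0$; and the quadratic contributions carrying a factor $\frac{\partial G_v}{\partial\rho}(0,0)$ drop by \eqref{eqDisc:dGv_drho_is_0}. The genuine gap is exactly the step you ``grant'': that $\partial_{\tau_i}\nabla_\tau F_\omega(v_0,\tau_0;0,0)=0$. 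Your justification---that condition \eqref{eqDisc:ZeroMatrixConditionTheorem3} ``forces'' $\nabla_\tau F_\omega(v_0,\tau;0,0)$ to vanish for all $\tau$ near $\tau_0$---does not follow as stated. By \eqref{eqDisc:del_tau_F_omega} and \eqref{eqDisc:ConditionForSaddlepointEqn}, the $j$th column of $\nabla_\tau F_\omega(v_0,\tau;0,\rho)$ is $-B^T\tilde Q\,\tfrac{\partial K''_U}{\partial\tau_j}\,\tilde Q(z_0-Bv_0)$ with $\tilde Q=K''_U\left(0;\smallmat{\omega_0\\ \tau}\right)^{-1}$, and the hypothesis only annihilates contractions with $\tilde J$. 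Writing $\tilde Q(z_0-Bv_0)=\tilde Jz_0+\tilde QB\,(\tilde v(\tau)-v_0)$, where $\tilde v(\tau)=(B^T\tilde QB)^{-1}B^T\tilde Qz_0$, the first term dies by \eqref{eqDisc:ZeroMatrixConditionTheorem3} but the second does not: you must additionally prove $\tilde v(\tau)\equiv v_0$ near $\tau_0$. That identity is true, but it is a separate, second application of the hypothesis, namely $\nabla_\tau\tilde v=-(B^T\tilde QB)^{-1}B^T\tilde Q\,\tfrac{\partial K''_U}{\partial\tau_k}\,\tilde Jz_0=0$ (using also that $B$ is $\tau$-independent, by \eqref{eqDisc:PartiallyIdentifiableAssumption}), which your sketch never identifies. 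Without it, the surviving quadratic term equals $-B^TQ\bigl(\sum_j a_j\tfrac{\partial K''_U}{\partial\tau_j}\bigr)QB\,\nabla_\tau\tilde v(\tau_0)\,a$ with $a=\frac{\partial G_\tau}{\partial\rho}(0,0)$, and condition \eqref{eqDisc:ZeroMatrixConditionTheorem3} gives no control whatsoever over $B^TQ\tfrac{\partial K''_U}{\partial\tau_j}QB$.

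It is also worth noting that the paper avoids this difficulty entirely and never forms the second-order implicit identity. It works at $\epsilon=0$, solves the linear system explicitly to get $\hat\sigma=\tilde Jz_0$ along the solution path, so that \eqref{eqDisc:ZeroMatrixConditionTheorem3} yields $\nabla_\tau F_\omega(G(0,\rho);0,\rho)=0$ for \emph{all} small $\rho$ (equation \eqref{eqDisc:gradtauFomegaVanishes}); since $F_\omega(v,\tau;0,\rho)$ has no explicit $\rho$-dependence, the first-order identity then collapses to $\nabla_vF_\omega\,\frac{\partial G_v}{\partial\rho}(0,\rho)=0$, giving $\frac{\partial G_v}{\partial\rho}(0,\rho)\equiv 0$, and $\frac{\partial^2 G_v}{\partial\rho^2}(0,0)=0$ follows by differentiating this identity in $\rho$---no quadratic-term analysis is needed. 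Your plan can be repaired by inserting the missing lemma ($\tilde v\equiv v_0$, equivalently $\nabla_\tau F_\omega(v_0,\,\cdot\,;0,0)\equiv 0$ near $\tau_0$), but as written the delicate step, which is the real content of the lemma, is not established.
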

We defer the proof to \supsecref{lemmaDisc:SecondDerivTheorem3}.

Applying Lemma~\ref{lemmaDisc:SecondDeriv} yields
$
    \epsilon \rho^2
    \frac{\partial^2 G_{v}}{\partial \rho^2} \left(\epsilon, \rho^*\right)
    =
    O(\rho^2(\epsilon + \rho^*))
$.
Setting $\rho = 1/n$ and $\epsilon = 1/n^{1/2}$, we have
\[
    \delta_{n,\omega} - \hat{\delta}_{n,\omega} 
    =
    O(1/n^3)~\text{as}~n \to \infty.
\]
This verifies
\eqref{eqDisc:OmegaDiscDiffSpecialCase}, and 
\eqref{eqDisc:TrueOmegaDiscSpecialCase}
follows from \eqref{eqDisc:ApprxOmegaDiscSpecialCase} and \eqref{eqDisc:TrueOmegaDiscSpecialCase}.
\end{proof}

\section{Proofs of Lemmas~\ref{lemmaDisc:Theorem3ContinousDiff}--\ref{lemmaDisc:SecondDeriv}}

\subsection{Proof of Lemma \ref{lemmaDisc:Theorem3ContinousDiff}}
\label{appDisc:DerivativesFomegaFtau}

Using the abbreviation 
\[
    Q = K''_U\left(0; 
        \smallmat{
        \omega_0 \\ 
        \tau_0 } 
        \right)^{-1}~\text{and}~
    B = \nabla_{\omega}
            K'_U\left(0; 
                \smallmat{
                \omega_0 \\ 
                \tau_0 }
            \right),
\]
we begin by remarking that the condition for $(v_0,\tau_0)$ to be a non-degenerate local MLE for $Z_{v,\tau}=z_0$ can be expressed explicitly as the conditions that 
\begin{align}
    v_0 = 
    &\left(
        B^{T}
        Q
        B
    \right)^{-1}
    B^{T} Q z_0,
    ~\text{and}~
    \label{eqDisc:H3_appendixH_Goodman2022}
    \\
    (Qz_0 - Q B v_0)^T
    \tfrac{\partial K''_U}{\partial \tau_k} 
    \left(0; 
        \smallmat{
        \omega_0 \\ 
        \tau_0 } 
    \right)
    &(Qz_0 - Q B v_0)
    =
    \tfrac{\partial }{\partial \tau_k} 
    \log \det \left( K''_{U}
    \left(0; 
        \smallmat{
        \omega_0 \\ 
        \tau_0 }
    \right)
    \right),
    \label{eqDisc:H4_appendixH_Goodman2022}
\end{align}
see \citet[Appendix H]{Goodman2022}.

\begin{proof}[Proof of Lemma~\ref{lemmaDisc:Theorem3ContinousDiff}]

Part (\ref{lemmaItemNonSingularTheorem3ContinousDiff}) supposes that, the function $F$
can be extended to be a continuously differentiable with respect to its parameters in the neighborhood of $\epsilon = 0$. 
Both $F_{\omega}$ and $F_{\tau}$ as defined in \eqref{eqDisc:F_omega} and \eqref{eqDisc:F_tau}, contain terms that depend on the solution of the saddlepoint equation, 
\[
\hat{t}_U
\left(
    \smallmat{
    \omega_0 + \epsilon v \\ 
    \tau }
    ;
    u_0 + \epsilon z_0
\right).
\]
Therefore, we will first address continuous differentiability of this within the same neighborhood, an essential step, before examining $F_{\omega}$ and $F_{\tau}$. 
Throughout this proof, we write $\theta_0 = 
\smallmat{\omega_0 \\ \tau_0}$.

For $\epsilon \neq 0$, define the rescaled solution of the saddlepoint equation as
\begin{equation}
\label{eqDisc:sigma_hat_epsilon_neq_0}
     \hat{\sigma}(v,\tau,\epsilon) = 
    \epsilon^{-1} \hat{t}_U
    \left(
        \smallmat{
        \omega_0 + \epsilon v \\ 
        \tau }
        ;
        u_0 + \epsilon z_0
    \right)
    .
\end{equation} 
To satisfy the saddlepoint equation, we have
\begin{equation}\label{eqDisc:SaddlepointEqnRescaled}
    \epsilon^{-1}(K'_U(\epsilon \hat{\sigma}(v,\tau, \epsilon); \smallmat{
        \omega_0 + \epsilon v \\ 
        \tau }) -
        u_0
    - \epsilon z_0) = 0.
\end{equation}
For $\epsilon \neq 0$, and $\sigma \in \mathbb{R}^{1 \times d}$, define the function
\begin{equation}
\label{eqDisc:SaddEqnFuncS}
    S(\sigma,v,\tau,\epsilon) = \epsilon^{-1}(K'_U(\epsilon \sigma; \smallmat{
        \omega_0 + \epsilon v \\ 
        \tau }) -
        u_0
    - \epsilon z_0).
\end{equation}
We will show that this function can be extended to be a continuously differentiable function of all its arguments in the neighborhood of $\epsilon = 0$. 
We have 
$u_0 =  
K'_U
\left(0;
    \theta_0
\right) $. 
By assumption \eqref{eqDisc:PartiallyIdentifiableAssumption}, 
\[
    K_U' 
    \left(
        0; 
        \theta_0
    \right)
    =
    K_U' 
    \left(
        0; 
        \smallmat{
        \omega_0 \\ 
        \tau } 
    \right).
\] 
Therefore, we can express \eqref{eqDisc:SaddEqnFuncS} as
\begin{align}
    S(\sigma,v,\tau,\epsilon)
    &=
    \epsilon^{-1}(K'_U(\epsilon \sigma; \smallmat{
        \omega_0 + \epsilon v \\ 
        \tau }) 
    - K'_U(0; \smallmat{
        \omega_0 + \epsilon v \\ 
        \tau }) 
    + K'_U(0; \smallmat{
        \omega_0 + \epsilon v \\ 
        \tau }) - 
    K'_U
    \left(
    0;  
    \smallmat{
        \omega_0 \\ 
        \tau }
    \right) 
    - \epsilon z_0)
    \label{eqDisc:SaddEqnExpressionIndependtTau},
    \\
    &=
    \int_0^1
    \left( 
        K''_U(y\epsilon \sigma; \smallmat{
        \omega_0 + \epsilon v \\ 
        \tau }) \sigma^T
        +
        \nabla_{\omega} K_U' 
        \left(
            0; 
             \smallmat{
            \omega_0 + y\epsilon v \\ 
            \tau 
            } 
        \right)
        v
    \right)
    \, dy
    -
    z_0.
    \label{eqDisc:SaddlepointEqnIntegral}
\end{align}
This integral which defines $S(\sigma,v,\tau,\epsilon)$, remains valid even when $\epsilon = 0$. Additionally, applying 
\citeauthor{Goodman2022}'s \citeyearpar{Goodman2022} Lemma~14  to this representation we can show that $S(\sigma,v,\tau,\epsilon)$ is continuously differentiable in the neighborhood of $\epsilon = 0$. At $\epsilon = 0$,
\begin{align}
    &S(\sigma,v,\tau,0)
    =
    K''_U(0; \smallmat{
        \omega_0 \\ 
        \tau }) \sigma^T
    +
    \nabla_{\omega} K_U' 
    \left(
        0; 
         \smallmat{
        \omega_0\\ 
        \tau 
        } 
    \right)
    v
    - z_0,
    \label{eqDisc:S_at_epsilon_0}
    ~\text{and}~
    \\
    &\nabla_{\sigma} S(\sigma,v,\tau,0)
    = 
    K''_U(0; \smallmat{
        \omega_0 \\ 
        \tau })
    ~\text{is non-singular by assumption 
    (\ref*{eqDisc:NonSingularCovAndMean}, Section~\ref*{secDisc:Notations})
    }.
\end{align}
By the Implicit Function Theorem, there exist a continuously differentiable function $\hat{\sigma}(v,\tau,\epsilon)$ that satisfies
\begin{equation}
    S(\hat{\sigma}(v,\tau,\epsilon), v, \tau, \epsilon) = 0.
\end{equation}
Since $S(\hat{\sigma}(v,\tau,\epsilon), v, \tau, \epsilon)$ corresponds to the left-hand side of \eqref{eqDisc:SaddlepointEqnRescaled}, for $\epsilon \neq 0$, $\hat{\sigma}(v,\tau,\epsilon)$ is given by \eqref{eqDisc:sigma_hat_epsilon_neq_0}. For $\epsilon = 0$,  using \eqref{eqDisc:S_at_epsilon_0}, we have that $S(\hat{\sigma}(v,\tau,0), v, \tau, 0) = 0$ if  
\begin{equation}\label{eqDisc:ConditionForSaddlepointEqn}
    \hat{\sigma}^T(v,\tau,0) = 
    K''_U
    \left(0; 
          \smallmat{
          \omega_0 \\ 
          \tau }
    \right)^{-1} 
    \left(
        z_0 - 
        \nabla_{\omega} K_U' 
        \left(
            0; 
            \smallmat{
            \omega_0 \\ 
            \tau }
        \right)
        v
    \right)
    .
\end{equation}
The gradient of \eqref{eqDisc:ConditionForSaddlepointEqn} with respect to $v$ is
\begin{align}
    \spacednabla{v}{\hat{\sigma}}^T(v, \tau, 0)
    =
    - 
    K''_U
    \left(0; 
            \smallmat{
            \omega_0 \\ 
            \tau } 
    \right)^{-1} 
    \nabla_{\omega} K_U' 
    \left(
        0; 
        \smallmat{
        \omega_0 \\ 
        \tau }
    \right)
    .
    \label{eqDisc:grad_v_sigma}
\end{align}
We also see that since 
$K'_U(0; \smallmat{
        \omega_0 \\ 
        \tau })$ 
does not depend on $\tau$ (by assumption \eqref{eqDisc:PartiallyIdentifiableAssumption}), the gradient with respect to $\tau$ only involves the first term of \eqref{eqDisc:S_at_epsilon_0}. We have
\begin{align}
    \nabla_{\tau}
    \left(
        K''_U(0; \smallmat{
        \omega_0 \\ 
        \tau })
        \hat{\sigma}(v, \tau, 0)^T
    \right)
    = 0
    ,
\end{align}
and by implicit partial differentiation, the $k^{\mathrm{th}}$ column of $d \times p_2$ matrix $\spacednabla{\tau}{\hat{\sigma}}^T (v, \tau, 0)$ is
\begin{equation}
\label{eqDisc:grad_tau_sigma}
    -
     K''_U
    \left(0; 
        \smallmat{
        \omega_0 \\ 
        \tau }
    \right)^{-1}
    \frac{\partial K''_U}{\partial \tau_k}
    \left(
    0; 
    \smallmat{
    \omega_0 \\ 
    \tau }
    \right)
    \hat{\sigma}(v, \tau, 0)^T
    .
\end{equation}

Next, we consider the functions $F_{\omega}$ and $F_{\tau}$. The functions as defined in \eqref{eqDisc:F_omega} and \eqref{eqDisc:F_tau} are only valid for $\epsilon \neq 0$. We use a similar argument to extend the definition of 
$
    F
    = 
    \smallmat{
    F_{\omega} \\
    F_{\tau}}
$
to be continuously differentiable and valid for $\epsilon = 0$.

We first examine $F_{\omega}$.
Recall from \eqref{eqDisc:grad_at_0_of_K} that $\nabla_{\omega} K_U(0; \theta) = 0$ for all $\theta$, so
the first term of \eqref{eqDisc:F_omega} can be expressed using \eqref{eqDisc:Grad_of_the_log_L_U_star} as
\begin{equation}
\label{eqDisc:FirstTermF_omega}
\begin{aligned}
    \epsilon^{-1} 
    \nabla_{\omega}^T 
    \log \hat{L}_{U}^* 
    &(\smallmat{
        \omega_0 + \epsilon v \\ 
        \tau }, 
        u_0 + \epsilon z_0)
    \\
    &\qquad
    =
    \epsilon^{-1} 
    \left(
        \nabla_{\omega}^T K_U(\epsilon\hat{\sigma}(v,\tau, \epsilon); 
        \smallmat{
        \omega_0 + \epsilon v \\ 
        \tau }) 
        - 
        \nabla_{\omega}^T K_U(0; 
        \smallmat{
        \omega_0 + \epsilon v \\ 
        \tau }) 
    \right)
    \\
    &\qquad
    =
    \int_0^1 
    \left(
    \nabla_{\omega} K'_U(y\epsilon\hat{\sigma}; 
    \smallmat{
        \omega_0 + \epsilon v \\ 
        \tau })
    \right)^T
    \hat{\sigma}(v,\tau,\epsilon)^T \,dy
    .
\end{aligned}
\end{equation}
With this integral, we see that $F_{\omega}$ can be extended to be continuously differentiable in the neighborhood of $\epsilon = 0$. See \citet[Lemma~14]{Goodman2022}.
It follows that
\begin{equation}
\label{eqDisc:F_omega_at_epsilon_rho_equal_0}
    F_{\omega}(v, \tau; 0, \rho) 
=
\left(
\nabla_{\omega} 
K'_U\left(0; 
        \smallmat{
        \omega_0 \\ 
        \tau }
\right)
\right)^T
\hat{\sigma}(v, \tau, 0)^T
,
\end{equation}
and using \eqref{eqDisc:ConditionForSaddlepointEqn}, 
it follows that $F_{\omega}(v, \tau, 0, \rho) = 0$ if and only if
\begin{equation}
\label{eqDisc:v_at_epsilon_zero}
\begin{aligned}
v = 
&\left(
    \left(
        \nabla_{\omega}
        K'_U\left(0; 
            \smallmat{
            \omega_0 \\ 
            \tau }
        \right)
    \right)^{T}
    K''_U\left(0; 
    \smallmat{
    \omega_0 \\ 
    \tau } 
    \right)^{-1}
    \nabla_{\omega}
    K'_U\left(0; 
    \smallmat{
    \omega_0 \\ 
    \tau }
    \right)
\right)^{-1}
\\
&\qquad\qquad\qquad
\cdot 
\left(
    \nabla_{\omega}
    K'_U\left(0; 
        \smallmat{
        \omega_0 \\ 
        \tau } 
    \right)
\right)^{T}
    K''_U\left(0; 
        \smallmat{
        \omega_0 \\ 
        \tau } \right)^{-1}
    z_0
    .
\end{aligned}
\end{equation}

For $F_{\tau}$, we have that
using \eqref{eqDisc:Grad_of_the_log_L_U_star}, the first term of \eqref{eqDisc:F_tau} is expressible as
\begin{equation}
\label{eqDisc:FirstTermF_tau_sigma_hat}
    \epsilon^{-2} \nabla_{\tau}^T \log \hat{L}_{U}^* (\smallmat{
    \omega_0 + \epsilon v \\ 
    \tau }, u_0+\epsilon z_0) 
= \epsilon^{-2} \nabla_{\tau}^T K_U(\epsilon\hat{\sigma}(v,\tau,\epsilon); 
\smallmat{
        \omega_0 + \epsilon v \\ 
        \tau }
).
\end{equation}
From \eqref{eqDisc:PartiallyIdentifiableAssumption}, we see that $\nabla_{\tau} K'_U(0;\theta) = 0$, therefore, \eqref{eqDisc:FirstTermF_tau_sigma_hat} 
can be represented by the first-order Taylor remainder as follows.
\begin{align}
    &\epsilon^{-2} \nabla_{\tau}^T K_U(\epsilon\hat{\sigma}(v,\tau,\epsilon); 
        \smallmat{
        \omega_0 + \epsilon v \\ 
        \tau }
    )
    \notag
    \\
    &\quad
    =
    \epsilon^{-2}
    \left(
        \nabla_{\tau}^T K_U(\epsilon\hat{\sigma}(v,\tau, \epsilon); 
        \smallmat{
        \omega_0 + \epsilon v \\ 
        \tau }
        ) - 
        \nabla_{\tau}^T K_U(0; 
        \smallmat{
        \omega_0 + \epsilon v \\ 
        \tau }
        ) - 
        \epsilon
        \left(
        \nabla_{\tau} K'_U(0; \smallmat{
        \omega_0 + \epsilon v \\ 
        \tau }
        )
        \right)^T
        \hat{\sigma}(v,\tau, \epsilon)^T
    \right)
    \notag
    \\
    &\quad
    = \int_0^1 (1-y) \sum_{i,j = 1}^d \hat{\sigma}(v,\tau, \epsilon)_i~ \hat{\sigma}(v,\tau, \epsilon)_j \nabla_{\tau}^T \frac{\partial^2 K_U}{\partial t_i \partial t_j} (y\epsilon\hat{\sigma}(v,\tau,\epsilon); 
    \smallmat{
    \omega_0 + \epsilon v \\ 
    \tau }
    )  \,dy.
    \label{eqDisc:FirstTerm_grad_tau_F_tau}
\end{align}
This integral representation is also continuously differentiable in the neighbourhood of $\epsilon = 0$, and thus, $F_{\tau}$ can be extended can be extended to be valid in this neighbourhood.
Evaluating \eqref{eqDisc:FirstTerm_grad_tau_F_tau} at $\epsilon = 0$, yields a $p_2$-dimensional column vector whose $k^{\mathrm{th}}$ entry is
\begin{equation}
\label{eqDisc:kth_entry_first_term_grad_tau_F_tau}
    \tfrac{1}{2}\hat{\sigma}(v,\tau,0) \frac{\partial K''_U}{\partial \tau_k} 
    \left(0; 
        \smallmat{
        \omega_0 \\ 
        \tau }
    \right)
    \hat{\sigma}(v,\tau, \epsilon)^T
    .
\end{equation}
On the second term of \eqref{eqDisc:F_tau}, the $k^{\mathrm{th}}$ entry of  $\nabla_{\tau}^T \log \hat{P}_U(\theta,u)$ at $\epsilon = 0$ is
\begin{equation}
\label{eqDisc:kth_entry_second_term_grad_tau_F_tau}
    - \tfrac{1}{2} \frac{\partial }{\partial \tau_k} \log \det \left( K''_{U}
    \left(0; 
        \smallmat{
        \omega_0 \\ 
        \tau }
    \right)
    \right)
    .
\end{equation}
Combining \eqref{eqDisc:kth_entry_first_term_grad_tau_F_tau} and \eqref{eqDisc:kth_entry_second_term_grad_tau_F_tau}, the $k^{\mathrm{th}}$ entry of $F_{\tau}(v,\tau, 0, 0)$ is given by
\begin{equation}
\label{eqDisc:F_tau_kth_entry}
\begin{aligned}
    F_{\tau}(v,\tau; 0, \rho)_k
    &=
    \tfrac{1}{2} 
    \left(
    \hat{\sigma}(v,\tau,0)
    \tfrac{\partial K''_U}{\partial \tau_k} 
    \left(0; 
        \smallmat{
        \omega_0 \\ 
        \tau }
    \right)
    \hat{\sigma}(v,\tau,0)^T
    -
    \tfrac{\partial }{\partial \tau_k} \log \det \left( K''_{U}
    \left(0; 
        \smallmat{
        \omega_0 \\ 
        \tau }
    \right)
    \right)
    \right)
    \\
    &\qquad
    +
    \rho \nabla_{\tau}^T 
    T_U(\smallmat{
        \omega_0 \\ 
        \tau }, u_0) 
    + 
    g_{\tau}(\smallmat{
        \omega_0 \\ 
        \tau }, u_0, \rho)^T.
\end{aligned}
\end{equation}
If in addition we set $\rho = 0$,
it follows that $F_{\tau}(v,\tau, 0, 0)_k = 0$
under the condition
\begin{equation}
\label{eqDisc:condition_for_Fv_equal_zero}
    \hat{\sigma}(v,\tau,0)
    \tfrac{\partial K''_U}{\partial \tau_k} 
    \left(0; 
        \smallmat{
        \omega_0 \\ 
        \tau } 
    \right)
    \hat{\sigma}(v,\tau,0)^T
    =
    \tfrac{\partial }{\partial \tau_k} 
    \log \det \left( K''_{U}
    \left(0; 
        \smallmat{
        \omega_0 \\ 
        \tau }
    \right)
    \right)
    .
\end{equation}

Thus far, we have demonstrated that the gradient functions $F_{\omega}$ and $F_{\tau}$ can be expressed in forms that are continuously differentiable in the neighbourhood of $\epsilon = 0$, and they reduce to 0 under certain conditions. 
Specifically, the conditions 
\eqref{eqDisc:ConditionForSaddlepointEqn}, \eqref{eqDisc:v_at_epsilon_zero} and
\eqref{eqDisc:condition_for_Fv_equal_zero} 
can be shown to reduce to conditions \eqref{eqDisc:H3_appendixH_Goodman2022}--\eqref{eqDisc:H4_appendixH_Goodman2022}, which are necessary conditions for $(v_0,\tau_0)$ to be a non-degenerate local MLE for $Z_{v,\tau}=z_0$. Hence, under the hypotheses of Theorem 3, we have 
\begin{equation}
    F
    \left(
    \smallmat{v_0 \\ \tau_0 }
    ; 0, 0
    \right) = 0
    .
\end{equation}

Next we examine the non-singularity of the gradient of $F$ at $(v_0, \tau_0)$, which has the block form
\begin{equation}
\label{eqDisc:HessianF_unabbreviated}
    \nabla_{v, \tau} F
    = 
    \smallmat{
    \nabla_v F_{\omega} & \nabla_{\tau} F_{\omega} \\
    \nabla_v F_{\tau} & \nabla_{\tau} F_{\tau}
    }
    .
\end{equation}
Using \eqref{eqDisc:F_omega_at_epsilon_rho_equal_0} and \eqref{eqDisc:grad_v_sigma},
\begin{equation}
\label{eqDisc:del_vF_omega}
    \nabla_v F_{\omega}(v, \tau; 0, \rho)
    = 
    -
    \left(
    \nabla_{\omega}
    K'_U
    \left(0; 
        \smallmat{
        \omega_0 \\ 
        \tau }
    \right)
    \right)^{T}
    K''_U
    \left(0; 
        \smallmat{
        \omega_0 \\ 
        \tau }
    \right)^{-1}
    \nabla_{\omega}
    K'_U
    \left(0; 
        \smallmat{
        \omega_0 \\ 
        \tau } 
    \right)
    .
\end{equation}
Next, using \eqref{eqDisc:F_omega_at_epsilon_rho_equal_0} and \eqref{eqDisc:grad_tau_sigma}, let 
$\tilde{W}$
be $d \times p_2$ matrix whose $j^{\mathrm{th}}$ column is
\begin{equation}
\label{eqDisc:MatrixTildeW}
    \tfrac{\partial K''_U}{\partial \tau_j}
    \left(
    0; 
    \smallmat{
    \omega_0 \\ 
    \tau }
    \right)
    \hat{\sigma}(v, \tau, 0)^T,
\end{equation}
then, 
\begin{equation}
\label{eqDisc:del_tau_F_omega}
    \nabla_{\tau} F_{\omega}(v, \tau; 0, \rho) = 
    -
    \left(
    \nabla_{\omega}
    K'_U
    \left(0; 
        \smallmat{
        \omega_0 \\ 
        \tau }
    \right)
    \right)^{T}
    K''_U
    \left(0; 
        \smallmat{
        \omega_0 \\ 
        \tau }
    \right)^{-1}
    \tilde{W}
    .
\end{equation}
For the gradients of $F_{\tau}$ we consider only the case $\rho = 0$, 
where the last terms in \eqref{eqDisc:F_tau_kth_entry} vanish. Of the remaining terms, only the first term depends on $v$, so that
\begin{equation}
\label{eqDisc:del_v_F_tau}
    \nabla_v F_{\tau}(v, \tau; 0, 0) = 
    - 
    \tilde{W}^T
    K''_U
    \left(0; 
        \smallmat{
        \omega_0 \\ 
        \tau }
    \right)^{-1}
    \nabla_{\omega}
    K'_U
    \left(0; 
        \smallmat{
        \omega_0 \\ 
        \tau }
    \right)
    .
\end{equation}
Finally, using \eqref{eqDisc:F_tau_kth_entry} and \eqref{eqDisc:grad_tau_sigma}, we see that
the $k, l$ entry of the $p_2 \times p_2$ matrix $\nabla_{\tau} F_{\tau}(v, \tau; 0, 0)$ becomes 
\begin{equation}
\begin{aligned}
\hat{\sigma}(v,\tau,0)
&\left(
    \tfrac{1}{2}
    \tfrac{\partial^2 K''_U}{\partial \tau_k \partial \tau_l} 
    \left(
     0; 
        \smallmat{
        \omega_0 \\ 
        \tau }
    \right)
    -
    \tfrac{\partial K''_U}{\partial \tau_k} 
    \left(
         0; 
        \smallmat{
        \omega_0 \\ 
        \tau }
    \right)
    K''_U
    \left(0; 
        \smallmat{
        \omega_0 \\ 
        \tau } 
    \right)^{-1}
    \tfrac{\partial K''_U}{\partial \tau_l} 
    \left(
         0; 
        \smallmat{
        \omega_0 \\ 
        \tau }
    \right)
\right)
\hat{\sigma}(v,\tau,0)^T
\\
&\qquad\qquad\qquad
- \tfrac{1}{2}
\tfrac{\partial^2 }{\partial \tau_k \partial \tau_l} \log \det \left( K''_{U}
    \left(0; 
        \smallmat{
        \omega_0 \\ 
        \tau }
    \right)
    \right)
    .
\end{aligned}
\end{equation}
Abbreviate the evaluation of \eqref{eqDisc:HessianF_unabbreviated} as
\begin{equation}
\label{eqDisc:HessianF_abbreviated}
    \nabla_{v, \tau} F
    \left(
    \smallmat{
    v_0 \\ 
    \tau_0
    }
    ; 0, 0
    \right)
    = 
    \smallmat{
    A & -B^T Q W \\
    -W^T Q B & D - W^T Q W
    }
    ,
\end{equation}
where $A$ correspond to the evaluation of \eqref{eqDisc:del_vF_omega} at $\tau_0$, $W$ denote 
the evaluation of the matrix defined by \eqref{eqDisc:MatrixTildeW} evaluated at $\tau_0$, 
$
B =
    \nabla_{\omega}
    K'_U
    \left(0; 
        \theta_0 
    \right)
$,
$
Q = 
K''_U
\left(0; 
    \theta_0
\right)^{-1}
$
and the $k,l$ entry of $p_2 \times p_2$ matrix $D$ is defined by
\[
D_{k,l} =
\tfrac{1}{2}
\left(
\tfrac{\partial^2 K''_U}{\partial \tau_k \partial \tau_l} 
\left(
 0; 
\theta_0
\right)
- 
\tfrac{\partial^2 }{\partial \tau_k \partial \tau_l} \log \det \left( K''_{U}
    \left(0; 
        \theta_0 
    \right)
    \right)
    .
\right)
\]
To establish the non-singularity of $\nabla_{v, \tau} F$, consider the following block-wise matrix transformations of \eqref{eqDisc:HessianF_abbreviated}. 
\begin{align}
    \smallmat{
        I & 0 \\
        W^T Q B A^{-1} & I
    }
    &
    \smallmat{
        A & -B^T Q W \\
        -W^T Q B & D - W^T Q W
    } 
    \smallmat{
        I & A^{-1} B^T Q W \\
        0 & I
    }
    =
    \smallmat{
        A & 0 \\
        0 & D - W^T (Q - QBA^{-1}B^T Q) W
    }
    .
\end{align}
We note that $A = -B^T Q B$ is negative definite 
by the same argument as in \eqref{eqDisc:WellSpecifiedNegDefiniteCondition}.
On the other hand, the matrix $D - W^T (Q - QBA^{-1}B^T Q) W$ can be shown to be negative definite when $(v_0,\tau_0)$ is a non-degenerate local MLE for $Z_{v,\tau}=z_0$, see \citet[Appendix H]{Goodman2022}.

Overall, the condition that $v_0$ and $\tau_0$ are non-degenerate local MLEs can be explicitly expressed as the requirement that  \eqref{eqDisc:ConditionForSaddlepointEqn}, \eqref{eqDisc:v_at_epsilon_zero}, and \eqref{eqDisc:condition_for_Fv_equal_zero} hold, and additionally, the matrix $D - W^T (Q - QBA^{-1}B^T Q) W$ is negative definite.
Under these conditions, 
\begin{equation}
\label{eqDisc:Equal_to_zero_And_Non_singularity}
    F \left( 
    \smallmat{v_0 \\ \tau_0} ; 0, 0 
    \right) = 0
    ~\text{and}~
    \nabla_{v, \tau} F \left( \smallmat{v_0 \\ \tau_0} ; 0, 0 \right)
    ~\text{is non-singular}
    .
\end{equation}
This completes the proof for part~(\ref{lemmaItemNonSingularTheorem3ContinousDiff}).

We next consider part~(\ref{lemmaItem_g_omega_tau_differentiable}). 
Note that 
$g_{\omega}(\theta,u,\rho)$ and $g_{\tau}(\theta,u,\rho)$ from \eqref{eqDisc:F_omega}--\eqref{eqDisc:F_tau}
have the same structure as $h(\theta, u, \rho)$ in \eqref{eqDisc:Proof1_1}. The proofs of the corresponding parts of
\lemmaitemref{lemmaDisc:Theorem1ContinousDiff}{lemmaDisc:continousDiff_of_h_dh} and 
\lemmaitemref{lemmaDisc:Theorem1ContinousDiff}{lemma1Disc:h_grad_h_continouslydiff}
use only the properties of $q$ and can be applied without change.

For part (\ref{lemmaItem_dG_drho_continuous_diff_Theorem3}), by 
\eqref{eqDisc:Equal_to_zero_And_Non_singularity} and the continuous differentiability of $F$ in all its variables, the Implicit Function Theorem guarantees the existence of a continuously differentiable function $G(\epsilon, \rho)$ in the neighborhood of $(\epsilon,\rho) = (0, 0)$, satisfying
\[
    F(G(\epsilon, \rho); \epsilon, \rho) = 0.
\] 
The continuous differentiability of 
$\frac{\partial g_{\omega}}{\partial \rho}$,
$\frac{\partial g_{\tau}}{\partial \rho}$
implies that 
$\frac{\partial F}{\partial \rho}$
is continuously differentiable, and it follows that $\frac{\partial G}{\partial \rho}$ is also continuously differentiable.
All of the above argument can be repeated when continuous differentiability is replaced by $C^3$, as in 
\lemmaitemref{lemmaDisc:Theorem1ContinousDiff}{lemma1Disc:NonSinglarity}.
This completes the proof of part (\ref{lemmaItem_dG_drho_continuous_diff_Theorem3}).

Part (\ref{lemmaItem_grad_tau_F_omega}) follows by evaluating  
\eqref{eqDisc:del_tau_F_omega} and \eqref{eqDisc:del_vF_omega} at $v_0$ and $\tau_0$.
\end{proof}

\subsection{Proof of Lemma \ref{lemmaDisc:Theorem3approximated_delta}}
\label{AppDisc:LemmaTheorem3ApprxFormula}

We will show that
$
    \smallmat{
        \hat{\delta}_{n,\omega} \\
        \hat{\delta}_{n,\tau}
    }
$
in \eqref{eqDisc:ApprtDiscTheorem3} is the discrepancy formula defined in \eqref{eqDisc:DiscrepancyFormula}.
By implicit partial differentiation to \eqref{eqDisc:F_combined_omega_tau_at_0}.
\begin{align*}
    \tfrac{\partial F}{\partial \rho}
    \left( G(\epsilon, \rho); \epsilon, \rho \right)
    +
    \nabla_{v,\tau}
    F(G(\epsilon, \rho); \epsilon, \rho)
    \tfrac{\partial G}{\partial \rho}(\epsilon, \rho)
    = 0.
\end{align*}
Rearranging this and calculating 
$
\tfrac{\partial F}{\partial \rho}
    \left( \theta; \epsilon, \rho \right)
$, 
we get 
\begin{align}
\hspace{-0.5em}
    \begin{pmatrix}
        \frac{\partial G_{v}}{\partial \rho} (\epsilon,\rho) 
        \\
        \frac{\partial G_{\tau}}{\partial \rho} (\epsilon,\rho)
    \end{pmatrix}
    =
    -(\nabla_{v, \tau} F(G(\epsilon, \rho); \epsilon, \rho))^{-1}
    \begin{pmatrix}
        \epsilon 
        \spacednablatranspose{\omega}{T_U
        \left(
        G_{\theta}(\epsilon, \rho)
        , u
        \right)} 
        + \epsilon \frac{\partial g_{\omega}}{\partial \rho}(G_{\theta}(\epsilon, \rho),u,\rho)^T
        \\
        \spacednablatranspose{\tau}{T_U
        \left(
        G_{\theta}(\epsilon, \rho), 
        u
        \right)}
        + 
        \frac{\partial g_{\tau}}{\partial \rho} (G_{\theta}(\epsilon, \rho),u,\rho)^T
    \end{pmatrix}.
    \notag
\end{align}
Evaluate this at $(\epsilon, \rho) = ( 1/n^{1/2}, 0)$ and 
apply \lemmaitemref{lemmaDisc:Theorem3ContinousDiff}{lemmaItem_g_omega_tau_differentiable}.
Then 
we can express the right-hand side of \eqref{eqDisc:ApprtDiscTheorem3} as
\begin{align}
    \begin{pmatrix}
        \left( \frac{1}{n^{3/2}} \right)   
        \frac{\partial G_{v}}{\partial \rho} ( 1/n^{1/2}, 0) 
        \\
        \left( \frac{1}{n} \right)  
        \frac{\partial G_{\tau}}{\partial \rho} ( 1/n^{1/2}, 0)
    \end{pmatrix}
    =
    -
    &\begin{pmatrix}
       \frac{1}{n^{3/2}}I_{p_1 \times p_1} & 0_{p_1 \times p_2} \\
       0_{p_2 \times p_1} & \frac{1}{n}I_{p_2 \times p_2} 
    \end{pmatrix}
    (\nabla_{v, \tau} F(G(1/n^{1/2}, 0); 1/n^{1/2}, 0))^{-1}
    \notag
    \\
    &\quad\quad\quad\quad\quad\quad\quad\quad
    \cdot
    \begin{pmatrix}
        \left( \frac{1}{n^{1/2}}\right)
        \spacednablatranspose{\omega}{ 
        T_U
        \left(
        G_{\theta}(1/n^{1/2}, 0), u
        \right)} 
        \\
        \spacednablatranspose{\tau}{T_U
        \left(
        G_{\theta}(1/n^{1/2}, 0), u
        \right)}
    \end{pmatrix}
    \label{eqDisc:ProofLemmaTheorem3ApproximationFormulaProof}
    .
\end{align}
We show that this expression simplifies to \eqref{eqDisc:DiscrepancyFormula}.

\begin{proof}[Proof of Lemma~\ref{lemmaDisc:Theorem3approximated_delta}]

Consider the following expression for the inverse factor of \eqref{eqDisc:ProofLemmaTheorem3ApproximationFormulaProof}. 
\begin{equation}
\label{eqDisc:Theorem3ProofFinverse}
    (\nabla_{v, \tau} F(G(1/n^{1/2}, 0);  1/n^{1/2}, 0))^{-1}
     =
     \begin{pmatrix}
         Y_{v,\omega,n} &~ 
         Y_{\tau,\omega,n} 
         \\
         Y_{v,\tau,n}  &~ 
         Y_{\tau,\tau,n}
     \end{pmatrix}^{-1}
     ,
\end{equation}
where the entries of the block matrix are 
\begin{equation}
\label{eqDisc:Y_n_omega_tau_s}
    \begin{aligned}
        Y_{v,\omega,n} 
        &= 
        \nabla_{v} F_{\omega} \left(
        G_{v}(1/n^{1/2},0), 
        G_{\tau}(1/n^{1/2},0);
        1/n^{1/2},0
        \right),
        \\
        Y_{\tau,\omega,n} 
        &= 
        \nabla_{\tau} F_{\omega} \left(
        G_{v}(1/n^{1/2},0), 
        G_{\tau}(1/n^{1/2},0);
        1/n^{1/2},0
        \right),
        \\
        Y_{v,\tau,n} 
        &= 
        \nabla_{v} F_{\tau} \left(
        G_{v}(1/n^{1/2},0), 
        G_{\tau}(1/n^{1/2},0);
        1/n^{1/2},0
        \right),
        \\
        Y_{\tau,\tau,n} 
        &= 
        \nabla_{\tau} F_{\tau} \left(
        G_{v}(1/n^{1/2},0), 
        G_{\tau}(1/n^{1/2},0);
        1/n^{1/2},0
        \right).
    \end{aligned}
\end{equation}
Motivated by \eqref{eqDisc:Y_n_omega_tau_s}, and using \eqref{eqDisc:F_omega}--\eqref{eqDisc:F_tau}, we calculate the following.

\begin{equation}
\label{eqDisc:F_omega_del_v}
\resizebox{1\hsize}{!}{$
\begin{aligned}
    \nabla_{v} F_{\omega}
    (v,\tau; \epsilon, 0)
    &=
    \nabla_{v}
    \left(
        \epsilon^{-1} 
        \nabla_{\omega}^T 
        \log \hat{L}_{U}^* 
        \left(\smallmat{\omega_0 + \epsilon v \\ \tau}, 
        u_0 + \epsilon z_0 \right) + 
        \epsilon 
        \nabla_{\omega}^T \log \hat{P}_U
        \left(\smallmat{\omega_0 + \epsilon v \\ \tau},
        u_0 + \epsilon z_0 \right)
    \right)
    \\
    &=
    \nabla_{\omega}^T \nabla_{\omega}
    \log \hat{L}_{U}^* 
    \left(\smallmat{\omega_0 + \epsilon v \\ \tau}, 
    u_0 + \epsilon z_0\right)
    +
    \epsilon^2
    \nabla_{\omega}^T \nabla_{\omega}
    \log \hat{P}_U
    \left(\smallmat{\omega_0 + \epsilon v \\ \tau},
    u_0 + \epsilon z_0\right)
\end{aligned}
$}
\end{equation}

\begin{equation}
\label{eqDisc:F_omega_del_tau}
\resizebox{1\hsize}{!}{$
\begin{aligned}
    \nabla_{\tau} F_{\omega}
    (v,\tau; \epsilon, 0)
    &=
    \nabla_{\tau}
    \left(
        \epsilon^{-1} 
        \nabla_{\omega}^T 
        \log \hat{L}_{U}^* 
        \left(\smallmat{\omega_0 + \epsilon v \\ \tau}, 
        u_0 + \epsilon z_0\right) + 
        \epsilon 
        \nabla_{\omega}^T \log \hat{P}_U
        \left(\smallmat{\omega_0 + \epsilon v \\ \tau},
        u_0 + \epsilon z_0\right)
    \right)
    \\
    &=
    \epsilon^{-1}
    \nabla_{\tau}^T \nabla_{\omega}
    \log \hat{L}_{U}^* 
    \left(\smallmat{\omega_0 + \epsilon v \\ \tau}, 
    u_0 + \epsilon z_0\right)
    +
    \epsilon
    \nabla_{\tau}^T \nabla_{\omega}
    \log \hat{P}_U
    \left(\smallmat{\omega_0 + \epsilon v \\ \tau},
    u_0 + \epsilon z_0\right)
\end{aligned}
$}
\end{equation}

\begin{equation}
\label{eqDisc:F_tau_del_v}
\resizebox{1\hsize}{!}{$
\begin{aligned}
    \nabla_{v} F_{\tau}
    (v,\tau; \epsilon, 0)
    &=
    \nabla_{v}
    \left(
        \epsilon^{-2} 
        \nabla_{\tau}^T 
        \log \hat{L}_{U}^* 
        \left(\smallmat{\omega_0 + \epsilon v \\ \tau}, 
        u_0 + \epsilon z_0\right) + 
        \nabla_{\tau}^T \log \hat{P}_U
        \left(\smallmat{\omega_0 + \epsilon v \\ \tau},
        u_0 + \epsilon z_0\right)
    \right)
    \\
    &=
    \epsilon^{-1}
    \nabla_{\omega}^T \nabla_{\tau}
    \log \hat{L}_{U}^* 
    \left(\smallmat{\omega_0 + \epsilon v \\ \tau}, 
    u_0 + \epsilon z_0\right)
    +
    \epsilon
    \nabla_{\omega}^T \nabla_{\tau}
    \log \hat{P}_U
    \left(\smallmat{\omega_0 + \epsilon v \\ \tau},
    u_0 + \epsilon z_0\right)
\end{aligned}
$}
\end{equation}

\begin{equation}
\label{eqDisc:F_tau_del_tau}
\resizebox{1\hsize}{!}{$
\begin{aligned}
    \nabla_{\tau} F_{\tau}
    (v,\tau; \epsilon, 0)
    &=
    \nabla_{\tau}
    \left(
        \epsilon^{-2} 
        \nabla_{\tau}^T 
        \log \hat{L}_{U}^* 
        \left(\smallmat{\omega_0 + \epsilon v \\ \tau}, 
        u_0 + \epsilon z_0\right) + 
        \nabla_{\tau}^T \log \hat{P}_U
        \left(\smallmat{\omega_0 + \epsilon v \\ \tau},
        u_0 + \epsilon z_0\right)
    \right)
    \\
    &=
    \epsilon^{-2}
    \nabla_{\omega}^T \nabla_{\tau}
    \log \hat{L}_{U}^* 
    \left(\smallmat{\omega_0 + \epsilon v \\ \tau}, 
    u_0 + \epsilon z_0\right)
    +
    \nabla_{\omega}^T \nabla_{\tau}
    \log \hat{P}_U
    \left(\smallmat{\omega_0 + \epsilon v \\ \tau},
    u_0 + \epsilon z_0\right)
\end{aligned}
$}
\end{equation}
Using \eqref{eqDisc:F_omega_del_v}--\eqref{eqDisc:F_tau_del_tau}, and the gradient expression of \eqref{eqDisc:AsymptoticsFirstOrderLogLikelihood}, we see that
\begin{align}
    &Y_{v,\omega,n}
    =
    \nabla_{\omega}^T \nabla_{\omega} \log \hat{L}_{U}^* (\hat{\theta}_{\mathrm{spa},n}, u_n) + 
    \tfrac{1}{n} \nabla_{\omega}^T \nabla_{\omega} \log \hat{P}_U(\hat{\theta}_{\mathrm{spa},n},u_n)
    =
    \tfrac{1}{n} \nabla_{\omega}^T \nabla_{\omega} \log \hat{L}_n(\hat{\theta}_{\mathrm{spa},n}; x_n)
    ,
    \label{eqDisc:del_v_F_omega}
    \\
    &Y_{\tau,\omega,n} 
    =
    n^{1/2} \nabla_{\tau}^T \nabla_{\omega} \log \hat{L}_{U}^* (\hat{\theta}_{\mathrm{spa},n}, u_n) + 
    \tfrac{1}{n^{1/2}} \nabla_{\tau}^T \nabla_{\omega} \log \hat{P}_U(\hat{\theta}_{\mathrm{spa},n},u_n)
    =
    \tfrac{1}{n^{1/2}} \nabla_{\tau}^T \nabla_{\omega} \log \hat{L}_n(\hat{\theta}_{\mathrm{spa},n}; x_n)
    ,
    \\
    &Y_{v,\tau,n} 
    =
    n^{1/2} \nabla_{\omega}^T \nabla_{\tau} \log \hat{L}_{U}^* (\hat{\theta}_{\mathrm{spa},n}, u_n) + 
    \tfrac{1}{n^{1/2}} \nabla_{\omega}^T \nabla_{\tau} \log \hat{P}_U(\hat{\theta}_{\mathrm{spa},n},u)
    =
    \tfrac{1}{n^{1/2}} \nabla_{\omega}^T \nabla_{\tau} \log \hat{L}_n(\hat{\theta}_{\mathrm{spa},n}; x_n)
    ,
    \\
    &Y_{\tau,\tau,n} 
    =
    n \nabla_{\tau}^T \nabla_{\tau} \log \hat{L}_{U}^* (\hat{\theta}_{\mathrm{spa},n}, u_n) + 
    \nabla_{\tau}^T \nabla_{\tau} \log \hat{P}_U(\hat{\theta}_{\mathrm{spa},n},u_n)
    =
    \nabla_{\tau}^T \nabla_{\tau} \log \hat{L}_n(\hat{\theta}_{\mathrm{spa},n}; x_n).
\end{align}
Therefore,
\begin{equation}
\resizebox{1.0\hsize}{!}{$
 \label{eqDisc:F_as_log_x}
    \begin{aligned}
         (\nabla_{v, \tau} F(G(1/n^{1/2}, 0); 1/n^{1/2}, 0))^{-1}
     &=
     \begin{pmatrix}
         \tfrac{1}{n} \nabla_{\omega}^T \nabla_{\omega} \log \hat{L}_n(\hat{\theta}_{\mathrm{spa},n}; x_n) & 
         \tfrac{1}{n^{1/2}} \nabla_{\tau}^T \nabla_{\omega} \log \hat{L}_n(\hat{\theta}_{\mathrm{spa},n}; x_n) 
         \\
         \tfrac{1}{n^{1/2}} \nabla_{\omega}^T \nabla_{\tau} \log \hat{L}_n(\hat{\theta}_{\mathrm{spa},n}; x_n) & 
         \nabla_{\tau}^T \nabla_{\tau} \log \hat{L}_n(\hat{\theta}_{\mathrm{spa},n}; x_n)
     \end{pmatrix}^{-1}
     \\
     &
     =
     \smallmat{
         n^{1/2}I_{p_1 \times p_1} & 0 \\
         0 & I_{p_2 \times p_2}
    }
     \left(
        \nabla_{\theta}^T \nabla_{\theta} \log \hat{L}_n(\hat{\theta}_{\mathrm{spa},n}, x_n)
     \right)^{-1}
     \smallmat{
         n^{1/2}I_{p_1 \times p_1} & 0 \\
         0 & I_{p_2 \times p_2}
     }.
    \end{aligned}
$}
\end{equation}
Next, we transform the 
last factor of \eqref{eqDisc:ProofLemmaTheorem3ApproximationFormulaProof} using \eqref{eqDisc:TX_TU} as:
\begin{align}
    \begin{pmatrix}
        \left( \frac{1}{n^{1/2}}\right)
        \spacednablatranspose{\omega}{T_U
        \left(
        G_{\theta}(1/n^{1/2}, 0), 
        u_n
        \right)} 
        \\
        \spacednablatranspose{\tau}{T_U
        \left(
        G_{\theta}(1/n^{1/2}, 0), 
        u_n
        \right)}
    \end{pmatrix}
    &=
    \begin{pmatrix}
        n^{1/2}I_{p_1 \times p_1} & 0 \\
        0 & n I_{p_2 \times p_2}
    \end{pmatrix}
    \begin{pmatrix}
        \spacednablatranspose{\omega}{T_X(\hat{\theta}_{\mathrm{spa},n}, x_n)}
        \\
        \spacednablatranspose{\tau}{T_X(\hat{\theta}_{\mathrm{spa},n}, x_n)}
    \end{pmatrix}
    \notag
    \\
    &=
    \begin{pmatrix}
        n^{1/2} I_{p_1 \times p_1} & 0 \\
        0 & nI_{p_2 \times p_2}
    \end{pmatrix}
    \spacednablatranspose{\theta}{T_X(\hat{\theta}_{\mathrm{spa},n}, x_n)} 
    .
    \label{eqDisc:Theorem3T_UasT_X}
\end{align}
Combining \eqref{eqDisc:ProofLemmaTheorem3ApproximationFormulaProof}, \eqref{eqDisc:F_as_log_x} and \eqref{eqDisc:Theorem3T_UasT_X}, 
\begin{equation*}
    \begin{pmatrix}
        \left( \frac{1}{n^{3/2}} \right)   
        \frac{\partial G_{v}}{\partial \rho} ( 1/n^{1/2}, 0) 
        \\
        \left( \frac{1}{n} \right)  
        \frac{\partial G_{\tau}}{\partial \rho} ( 1/n^{1/2}, 0)
    \end{pmatrix}
    =
     -\left(
        \nabla_{\theta}^T \nabla_{\theta} \log \hat{L}(\hat{\theta}_{\mathrm{spa},n}, x_n)
     \right)^{-1}
     \spacednablatranspose{\theta}{T_X(\hat{\theta}_{\mathrm{spa},n}, x_n)}
     , 
\end{equation*}
which coincides with the discrepancy approximation as defined in \eqref{eqDisc:DiscrepancyFormula}.
\end{proof}

\subsection{Proof of Lemma \ref{lemmaDisc:SecondDeriv}}
\label{lemmaDisc:SecondDerivTheorem3}

\begin{proof}[Proof of Lemma~\ref{lemmaDisc:SecondDeriv}]
We begin by simplifying \eqref{eqDisc:FirstOrderImplicitPartial} for the case when $\epsilon=0$.
From
\eqref{eqDisc:S_at_epsilon_0} and 
\eqref{eqDisc:F_omega_at_epsilon_rho_equal_0},
we see that $v=G_v(0,\rho)$ and $\sigma=\hat{\sigma}(G_v(0,\rho), G_\tau(0,\rho), 0)$ satisfy the system of linear equations
\begin{equation}
\begin{aligned}
  \tilde{A} \sigma^T + \tilde{B} v &= z_0
  \\
  \tilde{B}^T \sigma^T \phantom{+\tilde{B}v} &= 0
\end{aligned}
\end{equation}
where we have used the abbreviations $\tilde{A}$ for the $d\times d$ matrix $\tilde{A} = K_U''\left(0;\smallmat{\omega_0\\ G_\tau(0,\rho)}\right)$ and $\tilde{B}$ for the $d\times p_1$ matrix $\tilde{B}=\grad_\omega K'_U\left(0;\smallmat{\omega_0\\ G_\tau(0,\rho)}\right)$.
After one block row-reduction step, $\sigma^T$ can be eliminated from the second block equation to find
  \begin{equation}
    \tilde{B}^T \tilde{A}^{-1} \tilde{B} v = \tilde{B}^T \tilde{A}^{-1} z_0
  \end{equation}
  and thus
  \begin{align}
    v &= (\tilde{B}^T \tilde{A}^{-1}\tilde{B})^{-1}\tilde{B}^T\tilde{A}^{-1}z_0
    \\
    \sigma^T &= \tilde{A}^{-1}z_0 - \tilde{A}^{-1} \tilde{B} (\tilde{B}^T \tilde{A}^{-1}\tilde{B})^{-1}\tilde{B}^T\tilde{A}^{-1}z_0
    .
  \end{align}
In other words, with the further abbreviations 
$\tilde{Q}=\tilde{A}^{-1}=K_U''\left(0;\smallmat{\omega_0\\ G_\tau(0,\rho)}\right)^{-1}$ 
and 
$\tilde{J}=\tilde{A}^{-1} - \tilde{A}^{-1} \tilde{B} (\tilde{B}^T \tilde{A}^{-1}\tilde{B})^{-1}\tilde{B}^T\tilde{A}^{-1}$ 
as in the assumptions of Theorem 3 (b), we have the identities
  \begin{equation}
    \begin{aligned}
      G_v(0,\rho) &= (\tilde{B}^T \tilde{Q}\tilde{B})^{-1}\tilde{B}^T\tilde{Q} z_0
      ,
      \\
      \hat{\sigma}(G_v(0,\rho), G_\tau(0,\rho), 0) &= \tilde{J}z_0
      .
    \end{aligned}
  \end{equation}
Combining this with \eqref{eqDisc:del_tau_F_omega}, it follows that the $j^\text{th}$ column of $\grad_\tau F_\omega(G_v(0,\rho),G_\tau(0,\rho); 0, \rho)$ is
  \begin{equation}
    \frac{\partial F_\omega}{\partial\tau_j}(G_v(0,\rho),G_\tau(0,\rho); 0, \rho) = -\tilde{B}^T\tilde{Q} \tfrac{\partial K_U''}{\partial\tau_j}\left(0;\smallmat{\omega_0\\ G_\tau(0,\rho)}\right) \tilde{J} z_0
    .
  \end{equation}
But by the assumption \eqref{eqDisc:ZeroMatrixConditionTheorem3}, the matrix product on the right-hand side vanishes identically.
Thus, under the assumption \eqref{eqDisc:ZeroMatrixConditionTheorem3},
  \begin{equation}\label{eqDisc:gradtauFomegaVanishes}
    \grad_\tau F_\omega (G_v(0,\rho),G_\tau(0,\rho); 0, \rho) = 0
  \end{equation}
  for all $\rho$.
  
We now set $\epsilon=0$ in \eqref{eqDisc:FirstOrderImplicitPartial}.
  Note from \eqref{eqDisc:F_omega_at_epsilon_rho_equal_0} that $F_\omega(v,\tau;0,\rho)$ does not depend on $\rho$, so the $\frac{\partial F}{\partial\tau}$ term from \eqref{eqDisc:FirstOrderImplicitPartial} vanishes.
  Using \eqref{eqDisc:gradtauFomegaVanishes}, the $\grad_\tau F_\omega$ term vanishes also, and we obtain the identity
  \begin{equation}
    \grad_v F_\omega(G(0,\rho);0,\rho) \frac{\partial G_v}{\partial\rho}(0,\rho)= 0
    ,
  \end{equation}
  where we have kept only the top $p_1\times 1$ block.
  By \eqref{eqDisc:del_vF_omega} and the fact that $B=\grad_\omega K_U'(0;\theta_0)$ has rank $p_1$, the $p_1\times p_1$ matrix $\grad_v F_\omega(G(0,\rho);0,\rho)$ is non-singular for sufficiently small $\rho$ and it follows that $\frac{\partial G_v}{\partial\rho}(0,\rho)= 0$ for all sufficiently small $\rho$.
  
  In particular, $\frac{\partial^2 G_v}{\partial\rho^2}(0,0)= 0$.
  Since $\frac{\partial^2 G_v}{\partial\rho^2}$ is continuously differentiable by 
\lemmaitemref{lemmaDisc:Theorem3ContinousDiff}{lemmaItem_dG_drho_continuous_diff_Theorem3}
  it follows that $\frac{\partial^2 G_v}{\partial\rho^2}(\epsilon,\rho)=O(\epsilon+\rho)$, as claimed.
\end{proof}

\section{Derivation of \texorpdfstring{$\nabla_{\theta} (T_X(\theta, x))$}{Gradient of T\_X at theta}}
\label{appendixDerOfT}

Using the notation 
$Q_{ij} = (K''_X (\hat{t};\theta)^{-1})_{ij}$, 

\begin{equation}\label{eqDisc:AppendixfuncT}
    \begin{aligned}
        &T_X(\theta,x) = \frac{1}{8} \sum_{j_1,j_2,j_3,j_4} \frac{\partial^4 K_X(\hat{t};\theta)}{\partial t_{j_1} \partial t_{j_2} \partial t_{j_3} \partial t_{j_4}}
Q_{j_1 j_2} Q_{j_3 j_4}  \\ 
&\qquad\qquad - \frac{1}{8} \sum_{j_1,j_2,j_3} \sum_{j_4,j_5,j_6} 
\frac{\partial^3 K_X(\hat{t};\theta)}{\partial t_{j_1} \partial t_{j_2} \partial t_{j_3} }
\frac{\partial^3 K_X(\hat{t};\theta)}{\partial t_{j_4} \partial t_{j_5} \partial t_{j_6} }
Q_{j_1 j_2}Q_{j_3 j_4}Q_{j_5 j_6}\\ 
&\qquad\qquad  - \frac{1}{12} \sum_{j_1,j_2,j_3} \sum_{j_4,j_5,j_6} 
\frac{\partial^3 K_X(\hat{t};\theta)}{\partial t_{j_1} \partial t_{j_2} \partial t_{j_3} }
\frac{\partial^3 K_X(\hat{t};\theta)}{\partial t_{j_4} \partial t_{j_5} \partial t_{j_6} }
Q_{j_1 j_4}Q_{j_2 j_5}Q_{j_3 j_6}.
    \end{aligned}
\end{equation}
Here, $\hat{t} = \hat{t}(\theta,x)$, $x \in \mathbb{R}^{d\times 1}$, and sums over $j$’s indices are sums over $1,\dots, d$.

To derive the gradient of this expression given $x$, we note that derivatives of $K_X(\hat{t}(\theta, x);\theta)$ and all elements $Q_{ij}$ are dependent on $\theta$, both directly and indirectly through $\hat{t}(\theta; x)$. Consequently, we will need $\nabla_\theta \hat{t}^T(\theta; x)$ as expressed in \eqref{eqDisc:deltheta_t}.
Also, applying \eqref{eqDisc:GradInverseMat} we obtain the gradient of the $Q_{ij}$ to be
\begin{equation}
\label{eqDisc:AppendixGradientQ_{ij}}
  \nabla_\theta  \left(K_{X}''\left(\hat{t}(\theta; x);\theta\right)^{-1}\right)_{i j} = -K_{X}''\left(\hat{t}(\theta; x);\theta\right)^{-1}_{i k} \left(\nabla_\theta K_{X}''\left(\hat{t}(\theta; x);\theta\right)_{k l} \right) K_{X}''\left(\hat{t}(\theta; x);\theta\right)^{-1}_{l j}.
\end{equation} 
Using \eqref{eqDisc:AppendixGradientQ_{ij}} and \eqref{eqDisc:deltheta_t}, we begin the derivation of the gradient of $T_X(\theta,x)$ with an analysis of its first term.
\begin{align}
    & \nabla_\theta \left( \frac{1}{8} \sum_{j_1,j_2,j_3,j_4}  Q_{j_1 j_2} Q_{j_3 j_4}
      \frac{\partial^4 K_X\left(\hat{t};\theta\right)}{\partial t_{j_{1234}}}  \right)
    \notag
    \\& \qquad
    = 
    \frac{1}{8} \sum_{j_1,j_2,j_3,j_4}  Q_{j_1 j_2} Q_{j_3 j_4} 
    \nabla_\theta
    \left(
    \frac{\partial^4 K_X\left(\hat{t};\theta\right)}{\partial t_{j_{1234}}}
    \right)
    \notag
    \\&\qquad\qquad\quad
    - \frac{1}{8} \sum_{j_1,\dots,j_6} Q_{j_3 j_4} 
    \frac{\partial^4 K_X\left(\hat{t};\theta\right)}{\partial t_{j_{1234}}}
    \left( 
    Q_{j_1 j_5}  
    \nabla_\theta 
    \left( 
    \frac{\partial^2 K_X\left(\hat{t};\theta\right)}{\partial t_{j_{56}}} 
    \right)  
    Q_{j_6 j_2} 
    \right)
    \notag
    \\&\qquad\qquad\quad
    - \frac{1}{8} \sum_{j_1,\dots,j_6} Q_{j_1 j_2} 
     \frac{\partial^4 K_X\left(\hat{t};\theta\right)}{\partial t_{j_{1234}}}
    \left( Q_{j_3 j_5} \nabla_\theta 
    \left( 
    \frac{\partial^2 K_X\left(\hat{t};\theta\right)}{\partial t_{j_{56}}} 
    \right)  
    Q_{j_6 j_4} 
    \right).
    %
    %
    \notag
    \\& \qquad
    = \frac{1}{8} \sum_{j_1,j_2,j_3,j_4}
    Q_{j_1 j_2} Q_{j_3 j_4} 
    \nabla_\theta 
    \left(
    \frac{\partial^4 K_X\left(\hat{t};\theta\right)}{\partial t_{j_{1234}}}
    \right)
    \notag
    \\&\qquad\qquad\quad
    - \frac{1}{4} \sum_{j_5,j_6} ~~ \sum_{j_1,j_2,j_3,j_4} Q_{j_1 j_2}  Q_{j_3 j_5} Q_{j_4 j_6}
    \frac{\partial^4 K_X\left(\hat{t};\theta\right)}{\partial t_{j_{1234}}}
    \nabla_\theta
    \left(
    \frac{\partial^2 K_X\left(\hat{t};\theta\right)}{\partial t_{j_{56}}}
    \right).
    \label{eqDisc:FirstTerm_gradTX}
\end{align}
Since 
$\theta \in \mathbb{R}^{p \times 1}$,
$
\nabla_\theta
\left(
\tfrac{\partial^4 K_X\left(\hat{t};\theta\right)}{\partial t_{j_{1234}}}
\right) 
$
is a $p$-dimensional row vector given by
\begin{equation}
\resizebox{0.9\hsize}{!}{$
\label{eqDisc:CompleteGradOfThe4thOrder}
        \begin{aligned}
            \nabla_\theta
            \left(
            \frac{\partial^4 K_X\left(\hat{t};\theta\right)}{\partial t_{j_{1234}}}
            \right)
            =
            \left(
            \nabla_\theta
            \frac{\partial^4 K_X\left(\hat{t};\theta\right)}{\partial t_{j_{1234}}}
            \right)
            -
            \left(
                \nabla_t^T
                \frac{\partial^4 K_X\left(\hat{t};\theta\right)}{\partial t_{j_{1234}}}
            \right)
            K''_X\left(\hat{t};\theta\right)^{-1}
            \left(
            \nabla_{\theta}
            K'_X\left(\hat{t};\theta\right)
            \right)
            ,
        \end{aligned}
$}
\end{equation}
where $i\mathrm{th}$ entries of  
$
    \nabla_\theta
    \frac{\partial^4 K_X\left(\hat{t};\theta\right)}{\partial t_{j_{1234}}}  \in \mathbb{R}^{1 \times p} 
$
and 
$
    \nabla_t^T
        \frac{\partial^4 K_X\left(\hat{t};\theta\right)}{\partial t_{j_{1234}}} \in \mathbb{R}^{1 \times d} 
$
are 
$
    \frac{\partial^5 K_X\left(\hat{t};\theta\right)}{\partial t_{j_{1234}} \partial \theta_{i}}
$
and
$
    \frac{\partial^5 K_X\left(\hat{t};\theta\right)}{\partial t_{j_{1234}} \partial t_{i}}
$, respectively. 
The formulation for 
$
\nabla_\theta
\left(
\frac{\partial^2 K_X\left(\hat{t};\theta\right)}{\partial t_{j_{56}}}
\right) 
$ follows a similar approach.

Similarly, the derivatives of the second and third terms in \eqref{eqDisc:AppendixfuncT} are expressed as follows:
\begin{equation}
\label{eqDisc:Second_third_terms_gradTX}
\resizebox{0.9\hsize}{!}{$
    \begin{aligned}
        \nabla_\theta \Bigg( & - \frac{1}{8} 
           \sum_{j_1,j_2,j_3} ~ \sum_{j_4,j_5,j_6} 
           Q_{j_1 j_2} Q_{j_3 j_4}
          Q_{j_5 j_6}
          \frac{\partial^3 K_X(\hat{t};\theta)}{\partial t_{j_{123}}}  
          \frac{\partial^3 K_X(\hat{t};\theta)}{\partial t_{j_{456}}}   
        \\& 
        -\frac{1}{12} \sum_{j_1,j_2,j_3} ~\sum_{j_4,j_5,j_6} 
          Q_{j_1 j_4} Q_{j_2 j_5}
          Q_{j_3 j_6}
          \frac{\partial^3 K_X(\hat{t};\theta)}{\partial t_{j_{123}}} 
          \frac{\partial^3 K_X(\hat{t};\theta)}{\partial t_{j_{456}}} \Bigg)
        \\& \quad
        =\frac{1}{8} \sum_{j_7,j_8} ~ \sum_{j_4,j_5,j_6}  \left( 2Q_{j_1 j_2} Q_{j_3 j_4} Q_{j_5 j_7} Q_{j_8 j_6} + Q_{j_1 j_2} Q_{j_5 j_6} Q_{j_3 j_7} Q_{j_8 j_4} + 2Q_{j_6 j_3} Q_{j_1 j_4} Q_{j_5 j_7} Q_{j_8 j_2} \right)
        \\&\qquad\qquad\qquad\qquad
        \cdot\frac{\partial^3 K_X\left(\hat{t};\theta\right)}{\partial t_{j_{123}}}
        \frac{\partial^3 K_X\left(\hat{t};\theta\right)}{\partial t_{j_{456}}}
        \nabla_\theta
        \left(
        \frac{\partial^2 K_X\left(\hat{t};\theta\right)}{\partial t_{j_{78}}}
        \right)
        \\& \quad\quad
        - \frac{1}{12} \sum_{j_1,j_2,j_3} ~ \sum_{j_4,j_5,j_6}  \left( 3Q_{j_1 j_2} Q_{j_3 j_4} Q_{j_5 j_6} + 2Q_{j_1 j_4} Q_{j_2 j_5} Q_{j_3 j_6} \right)
        \\&\qquad\qquad\qquad\qquad
        \cdot
        \frac{\partial^3 K_X\left(\hat{t};\theta\right)}{\partial t_{j_{456}}}
        \nabla_\theta
        \left(  
        \frac{\partial^3 K_X\left(\hat{t};\theta\right)}{\partial t_{j_{123}}}
        \right).
    \end{aligned}
$}
\end{equation}
Combining \eqref{eqDisc:FirstTerm_gradTX} and \eqref{eqDisc:Second_third_terms_gradTX}, we obtain
\begin{equation}
    \begin{aligned}
        &\nabla_{\theta} (T_X(\theta, x))
        =
        \nabla_{\theta} T_X(\theta, x)
        \\
        &=
        \frac{1}{8} \sum_{j_1,j_2,j_3,j_4}
        Q_{j_1 j_2} Q_{j_3 j_4} 
        \nabla_\theta 
        \left(
        \tfrac{\partial^4 K_X\left(\hat{t};\theta\right)}{\partial t_{j_{1234}}}
        \right)
        \\& \quad
        - \frac{1}{4} \sum_{j_5,j_6} \, \, \sum_{j_1,j_2,j_3,j_4} Q_{j_1 j_2}  Q_{j_3 j_5} Q_{j_4 j_6}
        \tfrac{\partial^4 K_X\left(\hat{t};\theta\right)}{\partial t_{j_{1234}}}
        \nabla_\theta
        \left(
        \tfrac{\partial^2 K_X\left(\hat{t};\theta\right)}{\partial t_{j_{56}}}
        \right)
        \label{eqDisc:DerivedGradFuncTAppendix}
        \\& \quad
        + \frac{1}{8} \sum_{j_7,j_8} \,\, \sum_{j_1,\dots,j_6}  \left( 2Q_{j_1 j_2} Q_{j_3 j_4} Q_{j_5 j_7} Q_{j_8 j_6} + Q_{j_1 j_2} Q_{j_5 j_6} Q_{j_3 j_7} Q_{j_8 j_4} + 2Q_{j_6 j_3} Q_{j_1 j_4} Q_{j_5 j_7} Q_{j_8 j_2} \right)
        \\&\qquad\qquad
        \cdot
        \tfrac{\partial^3 K_X\left(\hat{t};\theta\right)}{\partial t_{j_{123}}}
        \tfrac{\partial^3 K_X\left(\hat{t};\theta\right)}{\partial t_{j_{456}}}
        \nabla_\theta
        \left(
        \tfrac{\partial^2 K_X\left(\hat{t};\theta\right)}{\partial t_{j_{78}}}
        \right)
        \\& \quad
        - \frac{1}{12} \sum_{j_1,j_2,j_3} \, \,
        \sum_{j_4,j_5,j_6}  \left( 3Q_{j_1 j_2} Q_{j_3 j_4} Q_{j_5 j_6} + 2Q_{j_1 j_4} Q_{j_2 j_5} Q_{j_3 j_6} \right)
        \tfrac{\partial^3 K_X\left(\hat{t},\theta\right)}{\partial t_{j_{456}}}
        \nabla_\theta
        \left(  
        \tfrac{\partial^3 K_X\left(\hat{t};\theta\right)}{\partial t_{j_{123}}}
        \right),
    \end{aligned}
\end{equation}
where expressions of the form
$
\nabla_\theta
\left(  
\frac{\partial^r K_X\left(\hat{t};\theta\right)}{\partial t_{j_{1\dots r}}}
\right)
$
are derived as illustrated in \eqref{eqDisc:CompleteGradOfThe4thOrder}.

To highlight some computational complexities associated with \eqref{eqDisc:DerivedGradFuncTAppendix}, consider a multinomial distributed random variable $X = (X_1,\dots,X_d)^T$ with parameters $N \in \mathbbm{Z}_+$ and $\Tilde{p} = (p_1,\dots p_d) \in [0,1]^d$, collectively, $\theta = (N, \Tilde{p})^T$. Suppose $\sum_{i = 1}^d p_i = 1$, the CGF of $X$ is known to be 
\begin{equation}
    K_X(t,\theta) = N\log \left(\sum_{i=1}^d p_i \exp (t_i)\right)~~\text{for}~t \in \mathbb{R}^{d \times 1}.
\end{equation}
For a specific observation $X = x$ and a corresponding $\hat{\theta}_{\mathrm{spa}}$,  evaluating \eqref{eqDisc:DerivedGradFuncTAppendix} necessitates the computation of partial derivatives of this CGF with respect to both $t$ and $\theta$.
When evaluated, the first derivative of $K_X(t,\theta)$ with respect to $t$ yields a vector, and the second derivative is a matrix. The elements of these are 
\[ 
    \frac{\partial K_X(t; \theta)}{\partial t_i} = N\frac{ p_i \exp (t_i)}{\sum_{u=1}^{d} p_u \exp (t_u)}
\]
and
\[
    \frac{\partial^2 K_X(t;\theta)}{\partial t_{i} \partial t_{j}} = N \left(\frac{p_i \exp (t_i)\indicator{i = j} }{\sum_{u=1}^{d} p_u \exp (t_u)} - \frac{p_i \exp (t_i) p_j \exp (t_j)}{\sum_{u=1}^{d} p_u \exp (t_u)} \right).
\]
Here, the indicator handles the fact that $\frac{\partial }{\partial t_j} p_i \exp (t_i)$ is non-trivial when $i = j$. Meanwhile, formulating higher-order derivatives using similar conventional notation is not straightforward. With each order, the number of terms involved increases, resulting in a cumbersome computational process.
For $v_{j_i} =\frac{p_{j_i} \exp (t_{j_i})}{\sum_{u=1}^{d} p_u \exp (t_u)} $, the elements of the third and fourth-order partial derivatives, interpretable as 3-D and 4-D arrays respectively, are represented as follows:
\begin{equation}
  \begin{aligned} \label{function:3rd partials}
\tfrac{\partial^3 K_X(t; \theta)}{\partial t_{j_1}\partial t_{j_2}\partial t_{j_3}} = 
 N\big(
 v_{j_1} \indicator{j_1=j_2=j_6} 
 - 
 v_{j_1} v_{j_3} \indicator{j_1=j_2} - 
 v_{j_1} v_{j_2} \indicator{j_1=j_3} - 
 v_{j_2} v_{j_3} \indicator{j_2=j_3} 
 + 2 v_{j_1} v_{j_2} v_{j_3} 
 \big),
\end{aligned}  
\end{equation}
\begin{equation*} \label{function:4th partials}
    \begin{aligned} 
\tfrac{\partial^4 K_X(t, \theta)}{\partial t_{j_1}\partial t_{j_2}\partial t_{j_3} \partial t_{j_4}} &= N\big(
v_{j_1} \indicator{j_1=j_2=j_3=j_4}
\\&\qquad
\left.{}
- v_{j_1} v_{j_4} \indicator{j_1=j_2=j_3}
- v_{j_1} v_{j_3} \indicator{j_1=j_2=j_4}
- v_{j_1} v_{j_2} \indicator{j_1=j_3=j_4}
- v_{j_1} v_{j_4} \indicator{j_2=j_3=j_4}
\right.
\\&\qquad
\left.{}
- v_{j_1} v_{j_3} \indicator{j_1=j_2}\indicator{j_3=j_4}
- v_{j_1} v_{j_2} \indicator{j_1=j_3}\indicator{j_2=j_4}
- v_{j_1} v_{j_2} \indicator{j_2=j_3}\indicator{j_1=j_4}
\right.
\\&\qquad
\left.{}
+ 2v_{j_1} v_{j_3} v_{j_4} \indicator{j_1=j_2}
+ 2v_{j_1} v_{j_2} v_{j_4} \indicator{j_1=j_3}
+ 2v_{j_1} v_{j_2} v_{j_3} \indicator{j_1=j_4}
\right.
\\&\qquad
\left.{}
+ 2v_{j_1} v_{j_2} v_{j_4} \indicator{j_2=j_3}
+ 2v_{j_1} v_{j_2} v_{j_3} \indicator{j_2=j_4}
+ 2v_{j_1} v_{j_2} v_{j_3} \indicator{j_3=j_4}
\right.
\\&\qquad
- 6v_{j_1} v_{j_2} v_{j_3} v_{j_4}\big).
\end{aligned}
\end{equation*}
The fifth-order partial derivative, denoted as
$\frac{\partial^5 K_X(t, \theta)}{\partial t_{j_1 j_2 j_3 j_4 j_5}}$, can be shown to comprise 52 terms.

Computationally, evaluating these functions yields dense multidimensional arrays, requiring significant memory allocations.
However, as noted in Section~\ref{secDisc:ComputationDetails}, we utilized TMBad \citep{TMB2016} 
to automate gradient calculations and overcome such challenges.

\section{Incorporating K3 and K4 summation operators into the saddlepoint computational framework}
\label{labelDisc:K3K4operators}

The saddlepoint computational framework, 
implemented by the {\tt saddlepoint} R package, is based on the concept of a CGF object. For a random variable $X$, this object includes the CGF, $K_X(t;\theta)$, and its first and second derivatives, $K'_X(t;\theta)$ and $K''_X(t;\theta)$, respectively. It also contains other relevant functions related to the CGF. We extend the definition of a CGF object, by introducing two new functions designed to manage the summations in \eqref{eqDisc:funcT}.

For a $d$-dimensional random variable $X$, we define a K3 and K4 operator functions as follows:
\begin{align}
    K^{(3)}_{\mathrm{op.X}}(\hat{t}, \theta, v^{(1)}, v^{(2)}, v^{(3)})
    &=
    \sum_{j_1,j_2,j_3}
    \tfrac{\partial^3 K_X(\hat{t};\theta)}{\partial t_{j_1} \partial t_{j_2} \partial t_{j_3} } 
    v^{(1)}_{j_1} v^{(2)}_{j_2} v^{(3)}_{j_3},
    \label{eqDisc:K3operator}
    \\
    K^{(4)}_{\mathrm{op.X}}(\hat{t}, \theta, v^{(1)}, v^{(2)}, v^{(3)}, v^{(4)})
    &=
    \sum_{j_1,j_2,j_3,j_4}
    \tfrac{\partial^4 K_X(\hat{t};\theta)}{\partial t_{j_1} \partial t_{j_2} \partial t_{j_3} \partial t_{j_4}} 
    v^{(1)}_{j_1} v^{(2)}_{j_2} v^{(3)}_{j_3} v^{(4)}_{j_4}, 
    \label{eqDisc:K4operator}
\end{align}
where for each $k = 1, 2, 3, 4$, $v^{(k)}$ is a $d$-dimensional column vector. The derivation and application of these two functions are directly linked to \eqref{eqDisc:funcT}, as detailed below.

Given that the matrix $Q = K_X''(\hat{t};\theta)^{-1}$ is derived from a positive definite $K_X''(\hat{t};\theta)$, an $LDL^T$ decomposition is applicable. This decomposition can be expressed as
\[
    Q_{ij} = \sum_{l = 1}^r d_{l} A_{il} A_{jl},
\]
where $A$ is a lower triangular matrix, $d_{l}$ represents elements of a diagonal matrix, and $r$ is the rank of $Q$. 
Following this, we incorporate this 
$LDL^T$ decomposition into each term of \eqref{eqDisc:funcT}. Consequently, the first term can be reformulated as
\begin{align}
    \sum_{j_1, j_2, j_3, j_4} Q^{(1)}_{j_1 j_2} Q^{(2)}_{j_3 j_4} \tfrac{\partial^4 K_X(\hat{t};\theta)}{\partial t_{j_{1234}}}
    =
    &\sum_{j_1, j_2, j_3, j_4}
    \left(
    \sum_{l_1 = 1}^{r}
    \sum_{l_2 = 1}^{r}
    \frac{\partial^4 K_X(\hat{t};\theta)}{\partial t_{j_{1234}}}
    d_{l_1} d_{l_2}
    A_{j_1 l_1} A_{j_2 l_1}
    A_{j_3 l_2} A_{j_4 l_2}
    \right)
    \notag
    \\
    =
    &\sum_{l_1 = 1}^{r}
    \sum_{l_2 = 1}^{r}
    d_{l_1} d_{l_2}
    \left(
    \sum_{j_1, j_2, j_3, j_4}
    \frac{\partial^4 K_X(\hat{t};\theta)}{\partial t_{j_{1234}}}
    A_{j_1 l_1} A_{j_2 l_1}
    A_{j_3 l_2} A_{j_4 l_2}
    \right)
    \notag
    \\
    =
    &\sum_{l_1 = 1}^{r}
    \sum_{l_2 = 1}^{r}
    d_{l_1} d_{l_2}
    \left(
    K^{(4)}_{op.X}(\hat{t}, \theta, A_{(:,l_1)}, A_{(:,l_1)}, A_{(:,l_2)}, A_{(:,l_2)})  
    \right)
    ,
    \notag
\end{align}
where $A_{(:,l)}$ denotes the $l\mathrm{th}$ column of matrix $A$. The second and third terms of \eqref{eqDisc:funcT} yield
\begin{align}
    &\sum_{j_1,\dots,j_6}
    Q_{j_1 j_2}Q_{j_3 j_4}Q_{j_5 j_6}
    \frac{\partial^3 K_X(\hat{t};\theta)}{\partial t_{j_1} \partial t_{j_2} \partial t_{j_3} }
    \frac{\partial^3 K_X(\hat{t};\theta)}{\partial t_{j_4} \partial t_{j_5} \partial t_{j_6} }
    \notag
    \\
    &\quad =
    \sum_{l_2 = 1}^{r}
    d_{l_2}
    \left(
    \sum_{l_1 = 1}^{r}
    d_{l_1}
    K^{(3)}_{op.X}(\hat{t}, \theta, A_{(:,l_1)}, A_{(:,l_1)}, A_{(:,l_2)})
    \right)
    \left(
    \sum_{l_3 = 1}^{r}
    d_{l_3}
    K^{(3)}_{op.X}(\hat{t}, \theta, A_{(:,l_2)}, A_{(:,l_3)}, A_{(:,l_3)})
    \right)
    ,
    \notag
\end{align}

\begin{align*}
    &\sum_{j_1,\dots,j_6}  
    Q_{j_1 j_4}Q_{j_2 j_5}Q_{j_3 j_6}
    \frac{\partial^3 K_X(\hat{t};\theta)}{\partial t_{j_1} \partial t_{j_2} \partial t_{j_3} }
    \frac{\partial^3 K_X(\hat{t};\theta)}{\partial t_{j_4} \partial t_{j_5} \partial t_{j_6} }
    = 
    \sum_{l_1 = 1}^{r}
    \sum_{l_2 = 1}^{r}
    \sum_{l_3 = 1}^{r}
    d_{l_1}d_{l_2}d_{l_3}
    \left(
    K^{(3)}_{op.X}(\hat{t}, \theta, A_{(:,l_1)}, A_{(:,l_2)}, A_{(:,l_3)})
    \right)^2
    .
\end{align*}
Combining the results of these terms, we can reformulate \eqref{eqDisc:funcT} as follows:
\begin{equation}\label{eqDisc:funcTWithK3K4}
    \begin{aligned}
        &T_X(\theta,x) = \frac{1}{8} 
        \sum_{l_1 = 1}^{r} 
        \sum_{l_2 = 1}^{r} 
        d_{l_1} d_{l_2}
        \left(
        K^{(4)}_{op.X}(\hat{t}, \theta, A_{(:,l_1)}, A_{(:,l_1)}, A_{(:,l_2)}, A_{(:,l_2)})  
        \right)
        \\ 
        &\qquad\qquad - \frac{1}{8} 
        \sum_{l_2 = 1}^{r}
        d_{l_2}
        \left(
        \sum_{l_1 = 1}^{r}
        d_{l_1}
        K^{(3)}_{op.X}(\hat{t}, \theta, A_{(:,l_1)}, A_{(:,l_1)}, A_{(:,l_2)})
        \right)
        \left(
        \sum_{l_3 = 1}^{r}
        d_{l_3}
        K^{(3)}_{op.X}(\hat{t}, \theta, A_{(:,l_2)}, A_{(:,l_3)}, A_{(:,l_3)})
        \right)
        \\ 
        &\qquad\qquad  - \frac{1}{12} \sum_{l_1 = 1}^{r}
        \sum_{l_2 = 1}^{r}
        \sum_{l_3 = 1}^{r}
        d_{l_1}d_{l_2}d_{l_3}
        \left(
        K^{(3)}_{op.X}(\hat{t}, \theta, A_{(:,l_1)}, A_{(:,l_2)}, A_{(:,l_3)})
        \right)^2.
    \end{aligned}
\end{equation}
With this new formulation \eqref{eqDisc:funcTWithK3K4}, extending the saddlepoint computational framework involves incorporating the K3 and K4 operator functions, \eqref{eqDisc:K3operator} and \eqref{eqDisc:K4operator}, into the CGF object. Formally, for every random variable $X$, a complete CGF object will now additionally encompass these operators, \eqref{eqDisc:K3operator} and \eqref{eqDisc:K4operator}.

\end{document}